\pdfoutput=1
\ifx\synctex\undefined\else\synctex=1\fi
\documentclass{llncs}

\usepackage[utf8]{inputenc}
\usepackage[T1]{fontenc}

\usepackage{microtype}

\usepackage{txfonts}

\usepackage{amssymb}
\usepackage{mathtools} %

\usepackage{array}

\usepackage{tikz}
\usetikzlibrary{positioning}
\usetikzlibrary{arrows,automata}
\usetikzlibrary{calc}
\usetikzlibrary{decorations.pathreplacing}
\usetikzlibrary{matrix}
\usetikzlibrary{arrows}

\usepackage[draft]{commenting}
\usepackage{paralist}

\usepackage[pdftex,%
 	colorlinks,%
 	hidelinks,
 	pdfauthor={Antoine Durand-Gasselin, Javier Esparza, Pierre Ganty, Rupak Majumdar},%
 	hypertexnames=false,%
	pdftitle={Model Checking Parameterized Asynchronous Shared-Memory Systems}]{hyperref}

\makeatletter
\def\verbatim{\small\@verbatim \frenchspacing\@vobeyspaces \@xverbatim}
\makeatother

{\itshape}{}

\newcommand{\Act}{{\cal A}}
\newcommand{\D}{{\cal D}}
\newcommand{\C}{{\cal C}}
\renewcommand{\S}{{\cal S}}
\newcommand{\G}{{\cal G}}
\newcommand{\N}{{\cal N}}
\newcommand{\betweenp}[1]{{\between}_{#1} }

\newcommand{\rb}{\ensuremath{r_{\star}}}
\newcommand{\wb}{\ensuremath{w_{\star}}}

\def\prod{\mathcal{P}}

\def\set#1{{\left\{ #1 \right\}}}

\newcommand{\proj}{\mathit{Proj}}

\newcommand{\by}[1]{\xrightarrow{#1}}
\newcommand{\absby}[1]{\xrightarrow{#1}_\alpha}
\newcommand{\ds}[1]{{\it ds}(#1)}
\newcommand{\ap}[1]{{\it ap}(#1)}

\newcommand{\dom}{\textrm{dom}}

\renewcommand{\acute}[1]{\overrightarrow{#1}}
\renewcommand{\grave}[1]{\overleftarrow{#1}}

\sloppy

\setlength{\abovecaptionskip}{1ex}
\setlength{\belowcaptionskip}{1ex}
\setlength{\floatsep}{1ex}
\setlength{\textfloatsep}{1ex}

\declareauthor{rm}{Rupak}{blue}
\declareauthor{pg}{Pierre}{red}
\declareauthor{je}{Javier}{black!30!green}
\declareauthor{adg}{Antoine}{black!30!yellow}

\title{\texorpdfstring{Model Checking Parameterized Asynchronous Shared-Memory Systems}{Model Checking Parameterized Asynchronous Shared-Memory Systems}}
\author{Antoine Durand-Gasselin\inst{1} \and Javier Esparza\inst{1} \and Pierre Ganty\inst{2} \and Rupak Majumdar\inst{3}}
\institute{$^1$TU Munich \quad $^2$IMDEA Software Institute \quad $^3$MPI-SWS}

\pagestyle{plain}

\begin{document}

\maketitle

\begin{abstract}
\makeatletter{}%
% start input /Users/pierreganty/ppsnaw/liveness/abstract.tex
%
%
%
%

We characterize the complexity of liveness verification for
parameterized systems consisting of a leader process and arbitrarily many
anonymous and identical contributor processes.  Processes communicate through a shared,
bounded-value register. While each operation on the register is atomic, there
is no synchronization primitive to execute a sequence of operations atomically.

We analyze the case in which processes are
modeled by finite-state machines or pushdown machines and the property is given
by a B\"uchi automaton over the alphabet of read and write actions of the leader.
We show that the problem is decidable, and has a surprisingly low complexity: 
it is NP-complete when all processes are finite-state machines, and is PSPACE-hard and in NEXPTIME 
when they are pushdown machines. This complexity is lower than for the 
non-parameterized case: liveness verification of finitely many finite-state machines is 
PSPACE-complete, and undecidable for two pushdown machines. 

For finite-state machines, our proofs characterize infinite behaviors using existential abstraction
and semilinear constraints.
For pushdown machines, we show how contributor computations of high stack height can be simulated by computations of
many contributors, each with low stack height. 
Together, our results characterize the complexity of verification 
for parameterized systems under the assumptions of anonymity and asynchrony.
% end input /Users/pierreganty/ppsnaw/liveness/abstract.tex
 %

\end{abstract}

\makeatletter{}%
% start input /Users/pierreganty/ppsnaw/liveness/intro.tex
%
%
%

\section{Introduction}

We study the verification problem for
\emph{parameterized asynchronous shared-memory systems} \cite{Hague11,egm13}. 
These systems consist of a \emph{leader} process and
arbitrarily many identical \emph{contributors}, processes with no identity, running at
arbitrary relative speeds.%
The shared-memory consists of a read/write register that all processes can access
to perform either a read operation or a write operation. The register is
bounded: the set of values that can be stored is finite. Read/write operations execute 
atomically but sequences of operations do not: no process can conduct an atomic sequence 
of reads and writes while excluding all other processes. 
In a previous paper \cite{egm13}, we have studied the complexity of safety verification, 
which asks to check if a safety property holds no matter how many contributors are present.
In a nutshell, we showed that the problem is coNP-complete when both leader and contributors
are finite-state automata and PSPACE-complete when they are pushdown automata.

In this paper we complete the study of this model by addressing the verification of liveness properties 
specified as $\omega$-regular languages (which in particular encompasses LTL model-checking). 
Given a property like ``every request is eventually granted''
and a system with a fixed number of processes, one is often able to guess an upper bound
on the maximal number of steps until the request is granted, and replace the property
by the safety property ``every request is granted after at most $K$ steps.'' 
In parameterized systems this bound can depend on the (unbounded) number of processes, and so 
reducing liveness to safety, or to finitary reasoning, is not obvious. 
Indeed, for many parameterized models, liveness verification is undecidable even if safety is decidable \cite{EFM99,rmeyer2008}.

Our results show that there is no large complexity gap between liveness and
safety verification: liveness verification (existence of an infinite computation
violating a property) is NP-complete in the finite-state
case, and PSPACE-hard and in NEXPTIME in the pushdown case.  
In contrast, remember that liveness checking is already PSPACE-complete for a {\em finite} number of finite-state
machines, and undecidable for a {\em fixed} number of pushdown systems.
Thus, not only is liveness verification decidable in the parameterized setting but 
the complexity of the parameterized problem is {\em lower} 
than in the non-parameterized case, where all processes are part of the
input. We interpret this as follows: in asynchronous shared-memory systems, the
existence of arbitrarily many processes leads to a ``noisy'' environment, in
which contributors may hinder progress by replying to past messages from the
leader, long after the computation has moved forward to a new phase. It is known
that imperfect communication  can {\em reduce} the power of computation and the complexity of verification
problems: the best known example are lossy channel systems, for which many verification
problems are decidable, while they are undecidable for perfect channels 
(see e.g. \cite{AbJo:lossy:IC,Parosh:etal:attractor:IC}). 
Our results reveal another instance of the same phenomenon.

Technically, our proof methods are very different from those used for safety
verification. 
Our previous results \cite{egm13} relied on a fundamental Simulation Lemma,
inspired by Hague's work \cite{Hague11}, stating that the {\em finite} behaviors of an arbitrary number of contributors can be simulated by a finite 
number of {\em simulators}, one for each possible value of the register. 
Unfortunately, the Simulation Lemma 
does not extend to infinite behaviors, and so we have to develop new ideas.
In the case in which both leader and contributors are finite-state machines, 
the NP-completeness
result is obtained by means of a combination of an abstraction
that overapproximates the set of possible infinite behaviors, and a semilinear constraint that
allows us to regain precision. The case in which both leader and contributors are 
pushdown machines is very involved. In a nutshell, we show that pushdown runs 
in which a parameter called the \emph{effective stack height} grows too much
can be ``distributed'' into a number of runs with smaller effective stack height.
We then prove that the behaviors of a pushdown machine with a bounded effective stack height 
can be simulated by an exponentially larger finite-state machine.

\noindent
\emph{Related Work.}
Parameterized verification has been studied extensively, both theoretically and practically.
While very simple variants of the problem are already undecidable \cite{AxKozen86}, 
many non-trivial parameterized models retain decidability.
There is no clear ``rule of thumb'' that allows one to predict what model checking 
problems are decidable, nor their complexities, other than ``liveness is generally harder than safety.''
For example, coverability for Petri nets---in which finite-state, identityless processes communicate via rendezvous or global shared state---
is EXPSPACE-complete, 
higher than the PSPACE-completeness of the non-parameterized version, and
verification of liveness properties can be equivalent to Petri net reachability, 
for which we only know non-primitive recursive upper bounds, or even undecidable.
Safety verification for extensions to Petri nets with reset or transfer, or broadcast protocols, where arbitrarily many finite-state processes communicate
through broadcast messages, 
are non-primitive recursive; liveness verification is undecidable in all cases \cite{ACJT96,EFM99,rmeyer2008}.
Thus, our results, which show simultaneously lower complexity than non-parameterized problems, as well as similar complexity for
liveness and safety, are quite unexpected.

German and Sistla \cite{GS92} and Aminof {\em et al.}~\cite{AminofKRSV14} 
have studied a parameterized model with rendezvous as communication primitive, where processes are 
finite-state machines. Model checking the fully symmetrical case---only contributors, no leaders---runs in 
polynomial time (other topologies have also been considered \cite{AminofKRSV14}),
while the asymmetric case with a leader is EXPSPACE-complete. 
In this paper we study the same problems, but for a shared memory communication primitive. 

\emph{Population protocols} \cite{Angluin} are another well-studied model of identityless 
asynchronous finite-state systems communicating via rendezvous. The semantics 
of population protocols is given over fair runs, in which every potential interaction that is infinitely often enabled is infinitely often taken. With this semantics, population protocols compute exactly the semilinear predicates \cite{Angluin}. In this paper we do not study what our model can compute (in particular, we are agnostic with respect to which fairness assumptions are reasonable),
but what we can compute or decide about the model.

%
%
%
%

% end input /Users/pierreganty/ppsnaw/liveness/intro.tex
 %

%
\makeatletter{}%
% start input /Users/pierreganty/ppsnaw/liveness/prelim.tex
%
%
%

\section{Formal Model: Non-Atomic Networks}
\label{sec:prelim}

In this paper, we identify systems with languages. System actions are
modeled as symbols in an alphabet, executions are modeled as infinite words,
and the system itself is modeled as the
language of its executions. Composition operations that combine systems into larger ones
are modeled as operations on languages.

\subsection{Systems as languages}%
An \emph{alphabet} \(\Sigma\) is a finite, non-empty set of \emph{symbols}. A
\emph{word} over \(\Sigma\) is a finite sequence over \(\Sigma\) including the
empty sequence denoted \(\varepsilon\), and a \emph{language} is a set of
words. An \emph{$\omega$-word} over \(\Sigma\) is an infinite sequence of
symbols of \(\Sigma\), and an \emph{$\omega$-language} is a set of
$\omega$-words.  We use \(\Sigma^*\) (resp.\ \(\Sigma^\omega\)) to denote the
language of all words (resp.\ $\omega$-words) over \(\Sigma\).  When there is
no ambiguity, we use ``words'' to refer to words or \(\omega\)-words.  We do
similarly for languages.  Let \(w\) be a sequence over some alphabet, define
\(\dom(w)=\{1,\ldots,n\}\) if \(w = a_1 a_2 \ldots a_n\) is a word; else (\(w\)
is an \(\omega\)-word) \(\dom(w)\) denote the set \(\mathbb{N}\setminus\{0\}\).
Elements of \(\dom(w)\) are called \emph{positions}.  The \emph{length} of a
sequence \(w\) is defined to be \(\sup \dom(w)\) and is denoted
\(\vert{w}\vert\).  We denote by \((w)_i\) the symbol of \(w\) at position
\(i\) if \(i\in\dom(w)\), \(\varepsilon\) otherwise. Moreover, let
\((w)_{i..j}\) with \(i,j\in\mathbb{N}\) and \(i<j\) denote \( (w)_{i}
(w)_{i+1} \ldots (w)_{j}\). Also \( (w)_{i..\infty}\) denotes \(
(w)_{i}(w)_{i+1}\ldots\) For words $u, v\in(\Sigma^\omega\cup\Sigma^{*})$, we
say $u$ is a \emph{prefix} of $v$ if either $u = v$ or $u \in \Sigma^*$ and
there is a $w\in(\Sigma^\omega\cup\Sigma^{*})$ such that $v = u w$.

\paragraph{Combining systems: Shuffle.} %
Intuitively, the shuffle of systems \(L_1\) and \(L_2\) is the system
interleaving the executions of \(L_1\) with those of \(L_2\). 
Given two $\omega$-languages \(L_1 \subseteq \Sigma_1^\omega\) and 
\(L_2 \subseteq \Sigma_2^\omega\), their \emph{shuffle}, denoted by \(L_1 \between L_2\), 
is the \(\omega\)-language over \((\Sigma_1 \cup \Sigma_2)\) defined as 
follows.
Given two $\omega$-words \(x \in \Sigma_1^\omega, y \in \Sigma_2^\omega\), we say that \(z \in (\Sigma_1 \cup \Sigma_2)^\omega \) 
is an \emph{interleaving} of \(x\) and \(y\) if there exist (possibly empty) words 
\(x_1, x_2, \ldots, x_i, \ldots \in \Sigma_1^*\) and \( y_1, y_2, \ldots, y_i, \ldots \in \Sigma_2^*\) such that 
each \(x_1 x_2 \cdots x_i\) is a prefix of \(x\),
and each \(y_1 y_2 \cdots y_i\) is a prefix of \(y\),
and \(z=x_1y_1 x_2y_2 \cdots x_iy_i \cdots \in \Sigma^\omega\) is an \(\omega\)-word.
Then \(L_1 \between L_2 = \textstyle{\bigcup_{x\in L_1, y\in L_2}} x \between
y\), where \(x \between y \) denotes the set of all interleavings of \(x\) and
\(y\).  For  example, if \(L_1 = ab^\omega\) and \(L_2 = ab^\omega\), we get \(L_1 \between L_2 = (a+ab^*a)b^\omega\). 
Shuffle is associative and commutative, and so we can 
write \(L_1 \between \cdots \between L_n\) or \({}\between_{i=1}^n L_i\).

\paragraph{Combining systems: Asynchronous product.} %
The asynchronous product of \(L_1 \subseteq \Sigma_1^\omega\) and \(L_2
\subseteq \Sigma_2^\omega\) also interleaves the executions but, this time,
the actions in the common alphabet must now be executed jointly.
The \(\omega\)-language of the resulting system, called the \emph{asynchronous
product} of \(L_1\) and \(L_2\), is denoted by \(L_1 \parallel L_2\), and
defined as follows.  Let \(\proj_{\Sigma}(w)\) be the word obtained by erasing
from \(w\) all occurrences of symbols not in \(\Sigma\). \(L_1 \parallel
L_2\) is the \(\omega\)-language over the alphabet
\(\Sigma=\Sigma_1\cup\Sigma_2\) such that \(w\in L_1 \parallel L_2\) if{}f
\(\proj_{\Sigma_1}(w)\) and \(\proj_{\Sigma_2}(w)\) are prefixes of words in
\(L_1\) and \(L_2\), respectively. 
We abuse notation and write \(w_1 \parallel L_2\)
instead of \(\{w_1\} \parallel L_2\) when \(L_1=\set{w_1}\).  For example, let
\(\Sigma_1 = \{a,c\}\) and \(\Sigma_2 = \{b,c\}\).  For \(L_1 = (ac)^\omega\)
and \(L_2 = (bc)^\omega\) we get \(L_1 \parallel L_2 = ((ab+ba)c)^\omega\). 
Observe that the language \(L_1\parallel L_2\) depends on \(L_1\), \(L_2\) and
also on \(\Sigma_1\) and \(\Sigma_2\). For example, if \(\Sigma_1 = \{a\}\) and
\(\Sigma_2 = \{b\}\), then \(\{a^\omega\} \parallel \{b^\omega\} =
(a+b)^\omega\), but if \(\Sigma_1 = \{a,b\}=\Sigma_2\), then \(\{a^\omega\}
\parallel \{b^\omega\} = \emptyset\).  So we should more properly write
\(L_1\parallel_{\Sigma_1,\Sigma_2} L_2\). However, since the alphabets
\(\Sigma_1\) and \(\Sigma_2\) will be clear from the context, we will omit
them. 
Like shuffle, asynchronous product is also associative and commutative, and so we write \(L_1 \parallel \cdots \parallel L_n\). Notice finally that
shuffle and asynchronous product coincide if $\Sigma_1 \cap \Sigma_2 = \emptyset$, but usually differ otherwise. For instance, if \(L_1 = ab^\omega\) and \(L_2 = ab^\omega\), we get \(L_1 \parallel L_2 = ab^\omega\). 

We describe systems as combinations of shuffles and asynchronous products, for instance
we write \(L_1 \parallel (L_2 \between L_3)\). In these expressions we
assume that \(\between\) binds tighter than \(\parallel\), and so
\(L_1 \between L_2 \parallel L_3\) is the language \((L_1 \between L_2)\parallel L_3\),
and not \(L_1 \between (L_2 \parallel L_3)\).

\subsection{Non-atomic networks}
\label{subsec:na-networks}
A non-atomic network is an infinite family of systems parameterized by a number
\(k\).  The \(k\)th element of the family has \(k+1\) components communicating
through a global store by means of read and write actions.  The store is
modeled as an atomic register whose set of possible values is finite.  One
of the \(k+1\) components is the leader, while the other \(k\) are the
contributors. All contributors have exactly the same possible behaviors (they
are copies of the same $\omega$-language), while the leader may behave
differently. The network is called non-atomic because components cannot
atomically execute sequences of actions, only one single read or write.

Formally, we fix a finite set \(\G\) of \emph{global values}.
A {\em read-write alphabet} is any set of the form \(\Act \times \G\), where
\(\Act\) is a set of {\em read} and {\em write (actions)}. We denote a symbol
\((a,g) \in \Act \times \G\) by \(a(g)\) and define \(\G(a_1, \ldots, a_n) = \{
a_i(g) \mid 1\leq i\leq n,\, g\in\G\} \).

We fix two languages \(\D\subseteq\Sigma_{\D}^\omega\) and
\(\C\subseteq\Sigma_{\C}^\omega\), called the \emph{leader} and the
\emph{contributor}, with alphabets \(\Sigma_{\D}=\G(r_d,w_d)\) and
\(\Sigma_{\C}=\G(r_c, w_c)\), respectively, where \(r_d, r_c\)  are called
\emph{reads} and \(w_c, w_d\) are called \emph{writes}.   We write \(\wb\)
(respectively, \(\rb\)) to stand for either \(w_c\) or \(w_d\) (respectively,
\(r_c\) or \(r_d\)). We further assume that
\(\proj_{\{\rb(g),\wb(g)\}}(\D\cup\C)\neq\emptyset\) holds for every 
\(g\in\G\), else the value \(g\) is never used and can be removed from \(\G\). 

Additionally, we fix an $\omega$-language \(\S\), called the {\em store}, 
over \(\Sigma_{\D}\cup\Sigma_{\C}\). It models
the sequences of read and write operations supported 
by an atomic register: a write \(\wb(g)\) writes \(g\) to the register,
while a read \(\rb(g)\) succeeds when the register's current value is \(g\). 
Initially the store is only willing to execute a write. 
Formally \(\S\) is defined as 
\( \Bigl( \sum_{g \in \G} \big(\;\wb(g)\, (\rb(g))^*\;\big) \Bigr)^\omega + \Bigl( \sum_{g \in \G} \big(\;\wb(g)\, (\rb(g))^*\;\big)^* \; \sum_{g \in \G} \big(\;\wb(g)\, (\rb(g))^\omega\;\big) \Bigr) \) and any finite prefix thereof.
Observe that \(\S\) is completely determined by \(\Sigma_{\D}\) and \(\Sigma_{\C}\).
Figure~\ref{fig:example} depicts a store with \(\{1,2,3\}\) as possible values as the language
of a transition system.

\begin{figure}[t]
\centering %
\noindent %
\begin{minipage}[c]{1.75cm}
\begin{tikzpicture}[->,>=stealth',shorten >=1pt,auto,node distance=1.5cm, semithick]
\node[state,initial,initial text = {}, minimum size=2ex] (A) {};
\node[state,minimum size=2ex] (B) [above right of=A] {};
\node[state,minimum size=2ex] (C) [below right of=A] {};
\path[->] (A) edge node[left, font=\scriptsize] {$r_d(1)$} (B)
          (B) edge node[left, font=\scriptsize] {$r_d(2)$} (C)
          (C) edge node[left, font=\scriptsize] {$r_d(3)$} (A);
\end{tikzpicture}
\end{minipage}%
\hspace{\stretch{1}}%
\begin{minipage}[c]{5.5cm}
\begin{tikzpicture}[->,>=stealth',shorten >=1pt,auto,node distance=1.5cm, semithick]
\node[state,initial,initial text = {},minimum size=2ex] (A) {$\#$};
\node[state,minimum size=2ex] (B) [right of=A] {$2$};
\node[state,minimum size=2ex] (C) [above right of=B, xshift=15mm] {$1$};
\node[state,minimum size=2ex] (D) [below right of=B, xshift=15mm] {$3$};

\path[->] (A) edge [bend left=35]  node[above, font=\scriptsize] {$w_\star(1)$} (C) 
              edge node[above, font=\scriptsize] {$w_\star(2)$} (B) 
              edge [bend right=35] node[below, font=\scriptsize] {$w_\star(3)$} (D)
          (B) edge [loop above] node[above, font=\scriptsize] {$rw_\star(2)$} (B)
              edge [bend left=8] node[left,pos=0.7,yshift=1mm, font=\scriptsize]{$w_\star(1)$} (C) 
              edge [bend left=8] node[right,pos=0.5,yshift=1mm, font=\scriptsize]{$w_\star(3)$}(D) 
          (C) edge [loop above] node[left, font=\scriptsize] {$rw_\star(1)$} (C)
              edge [bend left=8] node[right,pos=0.5,yshift=-1mm, font=\scriptsize]{$w_\star(2)$} (B) 
              edge [bend left=8] node[right, font=\scriptsize]{$w_\star(3)$}(D)
          (D) edge [loop below] node[left, font=\scriptsize] {$rw_\star(3)$}(D)
              edge [bend left=8] node[left,yshift=-1mm, font=\scriptsize]{$w_\star(2)$} (B) 
              edge [bend left=8] node[left,pos=0.5, font=\scriptsize]{$w_\star(1)$}(C);
\end{tikzpicture}
\end{minipage}%
\hfill%
\begin{minipage}[c]{3.5cm}
\begin{tikzpicture}[->,>=stealth',shorten >=1pt,auto,node distance=1.5cm, semithick]
\node[state,initial,initial text = {}, minimum size=2ex] (A) {};
\node[state,minimum size=2ex] (B) [above of=A, xshift= 6mm] {};
\node[state,minimum size=2ex] (C) [above of=A, xshift=-6mm] {};
\node[state,minimum size=2ex] (D) [right of=A, yshift=5mm] {};
\node[state,minimum size=2ex] (E) [right of=A, yshift=-5mm] {};
\node[state,minimum size=2ex] (F) [below of=A, xshift = 6mm] {};
\node[state,minimum size=2ex] (G) [below of=A, xshift = -6mm] {};

\path[->] (A) edge node[right,pos=0.65, font=\scriptsize] {$w_c(1)$} (B) 
              edge node[above, font=\scriptsize] {$w_c(2)$} (D) 
              edge node[right,pos=0.65, font=\scriptsize] {$w_c(3)$} (F)
          (B) edge node[above, font=\scriptsize] {$r_c(3)$} (C) 
          (C) edge node[left, font=\scriptsize] {$r_c(1)$} (A)
          (D) edge node[right, font=\scriptsize] {$r_c(1)$} (E) 
          (E) edge node[below, font=\scriptsize] {$r_c(2)$} (A)
          (F) edge node[below, font=\scriptsize] {$r_c(2)$} (G) 
          (G) edge node[left, font=\scriptsize] {$r_c(3)$} (A);
\end{tikzpicture}
\end{minipage}
\caption{Transition systems describing languages $\D$, $\S$, and $\C$.
We write $rw_\star(g)=  \rb(g) \cup \wb(g) = \{ r_c(g), r_d(g)\} \cup \{ w_c(g), w_d(g) \}$.
The transition system for $\S$ is in state $i \in \{1,2,3\}$ when the current
value of the store is $i$.} 
\label{fig:example}
\end{figure}
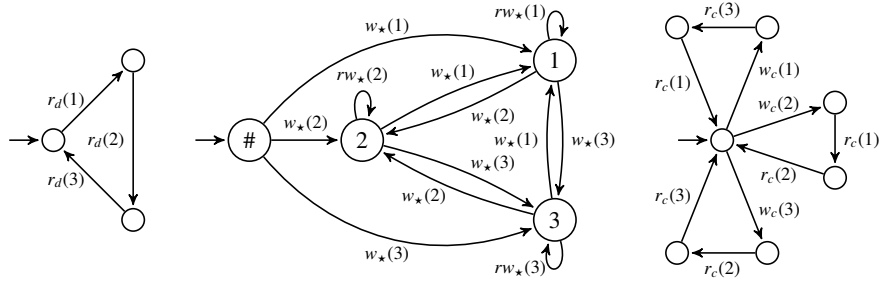
\begin{definition}\label{def:instancenetwork}
Let \(\D \subseteq \Sigma_{\D}^\omega\) and \(\C \subseteq \Sigma_{\C}^\omega\) be a leader and 
a contributor, and let \(k\geq 1\). The \emph{\(k\)-instance of the \((\D,\C)\)-network} 
is the $\omega$-language \ \(\N^{(k)} = (\D \parallel \S \parallel {}\between_{k} \C) \) \ where  
\({}\betweenp{k}\C\) stands for \({}\between_{i=1}^k \C\).
The {\em \((\D,\C)\)-network} \(\N\) is the $\omega$-language \(\N = \bigcup_{k=1}^\infty \N^{(k)}\).
We omit the prefix \((\D,\C)\) when it is clear from the context. 
It follows easily from the properties of shuffle and asynchronous product
that 
$\N = (\D \parallel \S \parallel\, \betweenp{\infty} \C)$,
where \(\betweenp{\infty} \C\) is an abbreviation of \({}\bigcup_{k=1}^\infty \betweenp{k} \C\).
\end{definition}

Next we introduce a notion of {\em compatibility} between a word of the leader
and a multiset of words of the contributor (a multiset because 
several contributors may execute the same sequence of actions).  Intuitively, compatibility means
that all the words can be interleaved into a legal infinite sequence of reads
and writes supported by an atomic register---that is, an infinite sequence
belonging to \(\S\). Formally:

\begin{definition}\label{def:CompRealReachpb}
Let \(u\in \Sigma_{\D}^\omega\), and let \(M=\set{v_1, \ldots, v_k}\) be a multiset of words
over \(\Sigma_{\C}^\omega\) (possibly containing multiple copies of a word).
We say that \(u\) is {\em compatible} with \(M\) if{}f the \(\omega\)-language \((u \parallel \S \parallel {}\between_{i=1}^k v_i)\) is non-empty.
When \(u\) and \(M\) are compatible, there exists a word \(s \in \S\) such that \((u \parallel  s \parallel {}\between_{i=1}^k v_i) \neq \emptyset\).
We call \(s\) a \emph{witness} of compatibility.
\end{definition}

\begin{example}
Consider the network with $\G=\{1,2,3\}$ where the leader, store, and
contributor languages are given by the infinite paths of the
transition systems from Figure~\ref{fig:example}. The only
\(\omega\)-word of $\D$ is $(r_d(1)r_d(2)r_d(3))^\omega$ and the
\(\omega\)-language of $\C$ is
$(w_c(1)r_c(3)r_c(1)+w_c(2)r_c(1)r_c(2)+w_c(3)r_c(2)r_c(3))^\omega$. For
instance, \(\D=(r_d(1)r_d(2)r_d(3))^\omega\) is compatible with the
multiset \(M\) of \(6\) \(\omega\)-words obtained by taking two copies 
of \( (w(1) r(3) r(1))^{\omega}\), \( (w(2) r(1) r(2))^{\omega}\) and \( (w(3) r(2) r(3))^{\omega}\).
The reader may be interested in finding another multiset compatible with $\D$ and containing
only 4 $\omega$-words.
\end{example}

\paragraph{Stuttering property.}
Intuitively, the stuttering property states that if 
we take an $\omega$-word of a network \( \N \) and ``stutter'' reads and writes
of the contributors, going e.g. from $w_d(1) r_c(1)w_c(2)r_d(2) \ldots$
to $w_d(1) r_c(1)r_c(1) w_c(2)w_c(2)w_c(2) r_d(2) \ldots$, the result
is again an $\omega$-word of the network.

Let \(s\in\S\) be a witness of compatibility of \(u \in \Sigma_{\D}^{\omega}\)
and \(M = \set{v_1, \ldots, v_k}\). Pick a set \(I\) of positions (viz. \(I \subseteq \dom(s)\)) such that \( (s)_i \in \Sigma_{\C}\) for each \(i\in I\),
and pick a number \(\ell_i \geq 0\) for every \(i \in I\).  Let \(s'\) be the result of simultaneously 
replacing each \( (s)_i \) by \( (s)_i^{\ell_i+1}\) in \(s\). We have that \(s' \in \S\). 
Now let \(v_s = (s)_{i_1}^{\ell_{i_1}} \cdot (s)_{i_2}^{\ell_{i_2}}\cdots\), where \(i_1 = \min(I)\),  \(i_2 = \min(I\setminus\{i_1\})\), \(\ldots\)
It is easy to see that  \((u \parallel s' \parallel v_s \between {}\between_{i=1}^k v_i) \neq \emptyset\), 
and so \(u\) is compatible with \(M\oplus\{v_s\}\), the multiset consisting of \(M\) and \(v_s\), 
and \(s'\) is a witness of compatibility.

An easy consequence of the stuttering property is the \emph{copycat lemma}~\cite{egm13}.
\begin{lemma}[Copycat Lemma]
\label{lem:extendedmono}
Let \(u \in \Sigma_{\D}^{\omega}\) and let \(M\) be a multiset of words of
\(\Sigma_{\C}^{\omega}\).  If \(u\) is compatible with \(M\), then \(u\) is also compatible with \(M \oplus \{v\}\)
for every \(v \in M\).
\end{lemma}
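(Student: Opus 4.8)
The plan is to derive the lemma directly from the stuttering property, which already carries the real work. Fix \(u\) and \(M = \{v_1, \dots, v_k\}\) with \(u\) compatible with \(M\), and let \(v = v_j\) be the element of \(M\) we wish to duplicate. By the definition of compatibility there is a witness \(s \in \S\) and an \(\omega\)-word \(w \in (u \parallel s \parallel {}\between_{i=1}^k v_i)\). Since the three components are synchronised on their common alphabets, \(w\) (viewed over \(\Sigma_{\D}\cup\Sigma_{\C}\)) is itself a prefix of \(s\); being an \(\omega\)-word, it must in fact equal \(s\). Moreover, the shuffle \({}\between_{i=1}^k v_i\) equips \(\proj_{\Sigma_{\C}}(s)\) with an interleaving decomposition that assigns every contributor position of \(s\) to one of the \(k\) contributors, the chunk attributed to contributor \(j\) forming a prefix of \(v_j\).

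Next I would isolate the copy to be duplicated. Let \(I \subseteq \dom(s)\) be the set of positions whose symbol is contributed by the \(j\)-th contributor; then \((s)_i \in \Sigma_{\C}\) for every \(i \in I\), and reading the symbols at the positions of \(I\) in increasing order spells out exactly the portion of \(v = v_j\) executed in the witness. I then apply the stuttering property to \(s\) with this set \(I\) and with \(\ell_i = 1\) for every \(i \in I\). The property guarantees that the stuttered store word \(s'\) again lies in \(\S\) and that \(u\) is compatible with \(M \oplus \{v_s\}\), where \(v_s = (s)_{i_1}^{1}(s)_{i_2}^{1}\cdots\) is precisely the subsequence of \(s\) at the positions of \(I\), i.e. the word executed by the copy we singled out.

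It remains to identify \(v_s\) with \(v\). By construction \(v_s\) is a prefix of \(v = v_j\), so the very interleaving witnessing compatibility with \(M \oplus \{v_s\}\) also witnesses compatibility with \(M \oplus \{v\}\): the fresh \(v\)-component merely realises the prefix \(v_s\) of \(v\), which is exactly what the definition of shuffle (hence of compatibility) permits. This yields the claim, and because the stuttering property already delivers \(s' \in \S\) together with the new interleaving, no further reasoning about the register semantics is needed. I expect the only delicate point to be the bookkeeping in the first two steps: making the assignment of contributor positions to individual \(v_i\) precise, and checking that the subsequence read off at \(I\) is genuinely a prefix of \(v\) rather than some other interleaved word.
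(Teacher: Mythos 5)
Your proposal is correct and follows essentially the route the paper intends: the paper presents the Copycat Lemma as an immediate consequence of the stuttering property, and your instantiation (take \(I\) to be the positions of the chosen copy of \(v\) in the witness, set \(\ell_i=1\), and observe that the resulting \(v_s\) is a prefix of \(v\), which suffices because asynchronous product and shuffle only require projections to be prefixes) is exactly the intended argument, spelled out in more detail than the paper bothers to.
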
%

\subsection{The Model-checking Problem for Linear-time Properties} 

We consider the model checking problem for linear-time properties, 
that asks, given a network $\N$ and an $\omega$-regular language
$L$, decide whether $\N \parallel L$ is non-empty.
We assume $L$ is given as a B\"uchi automaton $A$ over $\Sigma_{\D}$.
Intuitively, $A$ is a tester that observes the actions of the leader;
we call this the \emph{leader model checking problem}.

We study the complexity of leader model checking for networks in which the 
read-write $\omega$-languages $\D$ and $\C$ of leader and contributor are generated by 
an abstract machine, like a finite-state machine (FSM) or a pushdown machine (PDM). 
(We give formal  definitions later.) 
More precisely, given two classes of machines \texttt{D}, \texttt{C}, 
we study the model checking problem \(\mathtt{MC}(\texttt{D},\texttt{C})\) defined as follows: 
\begin{compactdesc}
\item[{\bf Given}:] machines $D \in \texttt{D}$ and $C \in \texttt{C}$, 
and a B\"uchi automaton $A$ 
\item[{\bf Decide}:] Is $\N_A=(L(A) \parallel L(D) \parallel \S \parallel \betweenp{\infty} L(C))$ non-empty? 
\end{compactdesc}

In the next sections we prove that $\mathtt{MC}(\texttt{FSM},\texttt{FSM})$
and $\mathtt{MC}(\texttt{PDM},\texttt{FSM})$ are NP-complete, 
while $\mathtt{MC}(\texttt{PDM},\texttt{PDM})$ is in NEXPTIME and PSPACE-hard. 

\begin{example}
Consider the instance of the model checking problem where $\D$ and $\C$ are as in Figure~\ref{fig:example},
and $A$ is a B\"uchi automaton recognizing all words over $\Sigma_{\D}$ containing infinitely many 
occurrences of $r_d(1)$. Since $\D$ is compatible with a multiset of words of the contributors, 
$\N_A$ is non-empty. In particular, \(\N_A^{(4)}\neq \emptyset\).
\end{example}

Since \(\Sigma_A=\Sigma_\D\), we can replace \(A\) and \(D\) by a
machine \(A\times D\) with a B\"uchi acceptance condition. The construction
of \(A\times D\) given \(A\) and \(D\) is standard.
In what follows, we assume that \(D\) comes with a B\"uchi acceptance condition and forget about \(A\).

There are two natural variants of the model checking problem, where $\Sigma_A =
\Sigma_{\C}$, i.e., the alphabet of $A$ contains the actions of all
contributors, or $\Sigma_A = \Sigma_{\D} \cup \Sigma_{\C}$.  In both these
variants, the automaton A can be used to simulate atomic networks.  Indeed, if
the language of A consists of all sequences of the form \((w_d() r_c() w_c()
r_d() )^\omega\), and we design the contributors so that they alternate reads
and writes, then the accepting executions are those in which the contributors
read a value from the store and write a new value in an atomic step. So the
complexity of the model-checking problem coincides with the complexity for
atomic networks (undecidable for PDMs and EXPSPACE-complete for FSMs), and we
do not study it further.
% end input /Users/pierreganty/ppsnaw/liveness/prelim.tex
 %

%
\makeatletter{}%
% start input /Users/pierreganty/ppsnaw/liveness/complexity-fsa.tex
%
%
%

\section{$\mathtt{MC}(\texttt{FSM},\texttt{FSM})$ is NP-complete}
\label{sec:fsafsa}

We fix some notations. A finite-state machine (FSM) $(Q,\delta, q_0)$ over $\Sigma$ consists of a
finite set of states $Q$ containing an initial state \(q_0\) and a transition
relation $\delta\subseteq Q\times\Sigma\times Q$.
A word \(v \in \Sigma^{\omega}\) is \emph{accepted} by an
FSM if there exists a
sequence \(q_1q_2\cdots\) of states such that
\( (q_i, (v)_{i+1}, q_{i+1})\in\delta \) for all \(i\geq 0\). We denote by \(q_0 \by{(v)_1} q_1 \by{(v)_2} \cdots\) the \emph{run} accepting \(v\).  
A B\"uchi automaton $(Q,\delta, q_0, F)$ is an FSM $(Q,\delta, q_0)$ 
together with a set $F\subseteq Q$ of accepting
states. An $\omega$-word $v\in\Sigma^\omega$ is accepted by a B\"uchi automaton if there is a run \(q_0 \by{(v)_1} q_1 \by{(v)_2} \cdots\) such that $q_j\in F$  for infinitely many positions \(j\).
The $\omega$-language of a FSM or B\"uchi automaton $A$, denoted by $L(A)$, is the set of $\omega$-words accepted by $A$.

In the rest of the section we show that $\mathtt{MC}(\texttt{FSM},\texttt{FSM})$ is NP-complete. 
Section \ref{subsec:TS} defines the infinite
transition system associated to a (\texttt{FSM},\texttt{FSM})-network. Section \ref{subsec:absTS} introduces
an associated finite abstract transition system. Section \ref{subsec:realiz} states and proves 
a lemma (Lemma \ref{lem:realizability}) characterizing the cycles of the
abstract transition system that, loosely speaking, can be concretized into
infinite executions of the concrete transition system. Membership in NP is then proved using the lemma. 
NP-hardness follows from NP-hardness of
reachability~\cite{egm13}.

\subsection{(\texttt{FSM},\texttt{FSM})-networks: Populations and transition system} 
\label{subsec:TS} 
We fix a B\"uchi automaton $D=(Q_D,\delta_D, q_{0D}, F)$ over \(\Sigma_{\D}\) and an FSM $C=(Q_C, \delta_C, q_{0C})$ 
over \(\Sigma_{\C}\). A {\em configuration} 
is a tuple $(q_D, g, \vec{p})$, where $q_D\in Q_D$, $g \in \G\cup\{\#\}$, and $\vec{p} \colon Q_C \rightarrow \mathbb{N}$ 
assigns to each state of $C$ a natural number. Intuitively, $q_D$ 
is the current state of $D$; $g$ is a value or the special value \(\#\), 
modelling that the store has not been initialized yet, and no process
read before some process writes; finally, $\vec{p}(q)$ is the number of contributors 
currently at state $q \in Q_C$. We call $\vec{p}$ a {\em population} of $Q_C$, and write $|\vec{p}| = \sum_{q \in Q_C} \vec{p}(q)$ for
the {\em size} of $\vec{p}$. Linear combinations
of populations are defined componentwise: for every
state $q \in Q_C$, we have $(k_1 \vec{p_1} + k_2 \vec{p_2})(q) :=
k_1 \vec{p_1}(q) +k_2 \vec{p_2}(q)$. Further, given $q \in Q_C$, we denote by $\vec{q}$ the population
$\vec{q}(q') = 1$ if $q=q'$ and $\vec{q}(q') = 0$ otherwise,
i.e., the population with one contributor in state $q$ and no contributors elsewhere.
A configuration is \emph{accepting} if the state of \(D\) is accepting, that is whenever \(q_D\in F\).
Given a set of populations $\vec{P}$, we define 
$(q_D, g, \vec{P}) := \{ ( q_D, g, \vec{p}) \mid \vec{p} \in \vec{P} \}$. 

The labelled transition system ${\it TS} = (X,T,X_0)$ associated to $\N_A$ is defined
as follows:  
\begin{compactitem}
\item $X$ is the set of all configurations, and $X_0 \subseteq X$ is
 the set of initial configurations, given by $(q_{0D}, \#, \vec{P_0})$, where $\vec{P_0}= \{k\vec{q_{0C}} \mid k \geq 1\}$;
\item $T = T_D \cup T_C$, where
  \begin{compactitem}
  \item $T_D$ is the set of triples $\big( \, ( q_D, g, \vec{p}) \, , \, t \, , \, ( q_D', g', \vec{p}) \, \big)$
  such that $t$ is a transition of $D$, viz. \(t\in\delta_D\), and one of the following conditions holds:
  \begin{inparaenum}[(i)]
  \item $t=(q_D, w_d(g'), q_D')$; or
  \item $t=(q_D, r_d(g), q_D')$, $g = g'$.
  \end{inparaenum}
  \item $T_C$ is the set of triples $\big( \, (q_D, g, \vec{p}) \, , \, t \, , \, (q_D, g', \vec{p}') \, \big)$
  \noindent such that $t\in\delta_C$, and one of the following conditions holds:
  \begin{inparaenum}
  \item[(iii)] $t=(q_C, w_c(g'), q_C')$, $\vec{p} \geq \vec{q_C}$, and 
  $\vec{p}' = \vec{p} - \vec{q_C} + \vec{q'_C}$; or
  \item[(iv)] $t=(q_C, r_c(g), q_C')$, $\vec{p} \geq \vec{q_C}$, $g=g'$, 
  and $\vec{p}' = \vec{p} - \vec{q_C} + \vec{q'_C}$.
  \end{inparaenum}
  \end{compactitem}
  Observe that $|\vec{p}| = |\vec{p}'|$,
  because the total number of contributors  of a population remains constant.
Given configurations $c$ and $c'$, 
we write $c \smash{\by{t}} c'$ if $(c,t,c')\in T$. 
  \end{compactitem}
We introduce a notation important for Lemma \ref{lem:realizability} below.
We define $\Delta(t) :=  \vec{p}' - \vec{p}$. 
Observe that
$\Delta(t)=\vec{0}$ in cases (i) and (ii) above, and $\Delta(t) = - \vec{q_C} + \vec{q_C}'$ 
in cases (iii) and (iv). So $\Delta(t)$ depends only on the transition $t$,
but not on $\vec{p}$.  
\subsection{The abstract transition system} 
\label{subsec:absTS} 
We introduce an {\em abstraction function} $\alpha$ that assigns to a set $\vec{P}$ of populations
the set of states of $Q_C$ populated by $\vec{P}$. We also introduce a
{\em concretization function} $\gamma$ that assigns to a set $Q \subseteq Q_C$ the set of  
all populations $\vec{p}$ that only populate states of $Q$. Formally:
{
\setlength\abovedisplayskip{1pt}
\setlength\belowdisplayskip{1pt}
\begin{align*}
\alpha(\vec{P}) &= \{ q \in Q_C \mid \vec{p}(q)\geq 1 \mbox{ for some $\vec{p}\in \vec{P}$}\}\\
\gamma(Q) &= \{ \vec{p} \mid \vec{p}(q) = 0 \mbox{ for every $q \in Q_C \setminus Q$}\}\enspace .
\end{align*}
}
\noindent It is easy to see that $\alpha$ and $\gamma$ satisfy $\gamma(\alpha(\vec{P})) \supseteq \vec{P}$ 
and $\alpha(\gamma(Q))=Q$, and so $\alpha$ and $\gamma$ form a Galois connection 
(actually, a Galois insertion). 
An {\em abstract configuration} is a tuple $( q_D, g, Q)$, where $q_D \in Q_D$, $g \in \G\cup\{\#\}$, and $Q \subseteq Q_C$. 
We extend $\alpha$ and $\gamma$ to (abstract) configurations in the obvious way.
An abstract configuration is \emph{accepting} when the state of \(D\) is accepting, that is whenever \(q_D\in F\).

Given ${\it TS} = (X,T,X_0)$, we define its  {\em abstraction} 
$\alpha{\it TS} = ({\alpha}X,{\alpha}T,{\alpha}X_0)$ as follows: 
\begin{compactitem}
\item ${\alpha}X = Q_D \times (\G\cup\{\#\}) \times 2^{Q_C}$ is the set of all abstract configurations.
\item \({\alpha}X_0 = (q_{0D}, \#, \alpha(\vec{P_0})) = (q_{0D}, \#, \{q_{0C}\}) \)
is the initial configuration.
\item $(\ (q_D, g, Q),\  t,\ (q_D', g', Q')\ ) \in \alpha{T}$ if{}f
there is $\vec{p}\in\gamma(Q)$ and $\vec{p}'$ such that \\ 
$(q_D,g,\vec{p})\by{t} (q_D',g',\vec{p}')$ and
$Q' = \alpha(\{ \vec{p}' \mid \exists \vec{p} \in\gamma(Q) \colon (q_D, g, \vec{p}) \by{t} (q_D', g', \vec{p}') \})$.
\end{compactitem}
Observe that the number of abstract configurations 
is bounded by $K= \vert{Q_D}\vert \cdot |\G|+1 \cdot 2^{\vert{Q_C}\vert}$. 
Let us point out that our abstract transition system resembles but is different from that of Pnueli et al.\cite{PnueliXZ02}.
We write $a \absby{t} a'$ if \((a,t,a')\in {\alpha}T\). 
The abstraction satisfies the following properties:
\begin{compactitem}
\item[(A)] For each $\omega$-path $c_0 \by{t_1} c_1 \by{t_2} c_2 \cdots$ of ${\it TS}$, 
there exists an \(\omega\)-path
\(a_0 \absby{t_1} a_1 \absby{t_2} a_2 \cdots\) in \(\alpha{\it TS}\) such that 
\(c_i \in \gamma(a_i)\) for all \(i\geq 0\).
\item[(B)] If $(q_D, g, Q) \absby{t} (q_D', g', Q')$, then $Q \subseteq Q'$. \\
To prove this claim, consider two cases:
\begin{compactitem}
\item $t\in \delta_D$. Then $(q_D, g, \vec{p}) \by{t} (q_D', g', \vec{p})$
for every population $\vec{p}$ (because only the leader moves). So 
$(q_D, g, Q) \absby{t} (q_D', g', Q)$. 
\item $t\in \delta_C$. Consider the population 
$\vec{p}= 2 \sum_{q\in Q}  \vec{q} \in \gamma(Q)$. Then 
$( q_D, g, \vec{p}) \by{t} ( q_D, g', \vec{p}')$, where 
$\vec{p}'= \vec{p} - \vec{q_C} + \vec{q_C}'$. But 
then $\vec{p}' \geq \sum_{q\in Q}  \vec{q}$, and so 
$\alpha(\{\vec{p}'\}) \supseteq Q$, which implies 
$( q_D, g, Q) \absby{t} ( q_D, g', Q')$
for some $Q' \supseteq Q$. 
\end{compactitem}
\end{compactitem}
\noindent So in every \(\omega\)-path $a_0 \absby{t_1} a_1 \absby{t_2} a_2 \cdots$ 
of $\alpha{\it TS}$, where $a_i= ( q_{Di}, g_i, Q_i)$, there is an index
$i$ at which the $Q_i$ stabilize, that is,  $Q_i = Q_{i+k}$ holds for every $k \geq 0$.
However, the converse of (A) does not hold: given a path
$a_0 \absby{t_1} a_1 \absby{t_2} a_2 \cdots$ of $\alpha{\it TS}$, there may
be no path $c_0 \by{t_1} c_1 \by{t_2} c_2 \cdots$ in ${\it TS}$ such that
$c_i \in \gamma(a_i)$ for every $i \geq 0$. 
Consider a contributor machine $C$ with two states $q_0, q_1$ and one single transition
$t=(q_0 ,w_c(1), q_1)$. Then \({\alpha}{\it TS}\) contains the infinite path (omitting the 
state of the leader, which plays no role):
{
\setlength\abovedisplayskip{1pt}
\setlength\belowdisplayskip{1pt}
\[(\#, \{q_0\}) \absby{t} (1,\{q_0,q_1\}) \absby{t} (1, \{q_0,q_1\}) \absby{t} (1, \{q_0,q_1\}) \cdots\]
}
\noindent 
However, the transitions of ${\it TS}$ are of the form
\( (1, k_0 \vec{q_0} + k_1 \vec{q_1}) \by{t} (1, (k_0{-}1) \vec{q_0} + (k_1{+}1) \vec{q_1})\), 
and so ${\it TS}$ has no infinite paths.

\subsection{Realizable cycles of the abstract transition system} 
\label{subsec:realiz}
We show that the existence of an infinite accepting path in \({\it TS}\)
reduces to the existence of a certain lasso path in \({\alpha}{\it TS}\).  A
lasso path consists of a stem and a cycle.  
Lemma~\ref{lem:cover} shows
how every abstract finite path (like the stem) has a counterpart in \({\it TS}\).
Lemma~\ref{lem:realizability}
characterizes precisely those cycles in \({\alpha}{\it TS}\) which have an
infinite path counterpart in \({\it TS}\). 

\begin{lemma}
	Let \( (q_D,g,Q) \) be an abstract configuration of \( \alpha{\it TS}\) reachable from 
	\( ( q_{0D}, \#, \alpha(\vec{P_0}))\) (\(={\alpha}X_0\)).
	For every \(\vec{p}\in \gamma(Q)\), there exists \(\vec{\hat{p}}\) such that
	\( ( q_D, g, \vec{\hat{p}})\) is reachable from \( ( q_{0D}, \#, \vec{P_0}) \)
	and \(\vec{\hat{p}} \geq \vec{p}\).
	\label{lem:cover}
\end{lemma}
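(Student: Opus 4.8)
The plan is to induct on the length $n$ of a fixed abstract path $a_0 \absby{t_1} a_1 \cdots \absby{t_n} a_n$ with $a_0 = \alpha X_0$ and $a_n = (q_D,g,Q)$, proving the statement for every prefix: writing $a_i = (q_{Di}, g_i, Q_i)$, I would show that for every $\vec{p}\in\gamma(Q_i)$ there is a $\vec{\hat{p}}\geq\vec{p}$ such that $(q_{Di}, g_i, \vec{\hat{p}})$ is reachable from $X_0$ in ${\it TS}$. For the base case $a_0 = (q_{0D}, \#, \{q_{0C}\})$, any $\vec{p}\in\gamma(\{q_{0C}\})$ has the form $k\vec{q_{0C}}$, and $\vec{\hat{p}} = \max(k,1)\,\vec{q_{0C}}$ works since $(q_{0D}, \#, \vec{\hat{p}})\in X_0$ and $\vec{\hat{p}}\geq\vec{p}$.

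For the inductive step I would split on the kind of transition $t_n$, reusing the case analysis from the proof of property~(B). If $t_n\in\delta_D$ is a leader transition, then $Q = Q_{n-1}$ and the population is untouched. I apply the induction hypothesis to $\vec{p}\in\gamma(Q_{n-1})$ to obtain a reachable $(q_{D,n-1}, g_{n-1}, \vec{\hat{p}})$ with $\vec{\hat{p}}\geq\vec{p}$, and then fire $t_n$: it is enabled because the guard that licensed the abstract step (for a read, $g_{n-1}=g$) is exactly the concrete guard. This reaches $(q_D, g, \vec{\hat{p}})$, as required.

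The contributor case is the crux. If $t_n = (q_C, \cdot, q_C')\in\delta_C$ then, as in property~(B), necessarily $q_C\in Q_{n-1}$ and $Q = Q_{n-1}\cup\{q_C'\}$, so the given $\vec{p}\in\gamma(Q)$ may populate $q_C'$ but no other state outside $Q_{n-1}$. Assuming $q_C\neq q_C'$, the idea is to \emph{back-propagate} the demand across the step: set $m = \max(\vec{p}(q_C'),1)$ and define $\vec{q}\in\gamma(Q_{n-1})$ by moving the demand on $q_C'$ onto $q_C$, i.e.\ $\vec{q}(q_C) = \vec{p}(q_C) + m$, $\vec{q}(q_C') = 0$, and $\vec{q}(q) = \vec{p}(q)$ elsewhere (one checks $\vec{q}$ only populates $Q_{n-1}$). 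By induction there is a reachable $(q_{D,n-1}, g_{n-1}, \vec{q'})$ with $\vec{q'}\geq\vec{q}$. I then fire $t_n$ exactly $m$ times in succession: this is the copycat/stuttering principle, and it is legal because $\vec{q'}(q_C)\geq\vec{q}(q_C)\geq m$ keeps $q_C$ occupied at each of the $m$ steps, while the store is set to (for a write) or already equals (for a read, where $g_{n-1}=g$) the value $g$. Since each firing adds the fixed vector $\Delta(t_n) = -\vec{q_C}+\vec{q_C'}$ regardless of the current population, the result $\vec{\hat{p}} = \vec{q'} + m\,\Delta(t_n)$ satisfies $\vec{\hat{p}} \geq \vec{q} + m\,\Delta(t_n) \geq \vec{p}$ componentwise (the reserved $m$ copies exactly cover $\vec{p}(q_C')$ while leaving $\vec{p}(q_C)$ behind). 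The degenerate self-loop $q_C = q_C'$ is simpler: there $Q = Q_{n-1}$, $\Delta(t_n)=\vec{0}$, and one firing of $t_n$ only updates the store, so taking $\vec{q} = \vec{p} + \vec{q_C}$ suffices.

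I expect the only real obstacle to be the bookkeeping in the contributor case: one must reserve enough contributors on $q_C$ to meet \emph{both} the residual demand on $q_C$ and the $m$ copies pushed onto $q_C'$, and one must certify that $t_n$ stays enabled throughout (the population guard at every intermediate step, and, for reads, the store guard). The key structural fact that makes the bookkeeping go through is monotonicity of the transition relation, namely that $\Delta(t)$ does not depend on $\vec{p}$; this is what lets the inequality $\vec{q'}\geq\vec{q}$ from the induction hypothesis survive the $m$ firings. I would therefore isolate that observation at the start before running the induction.
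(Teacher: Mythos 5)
Your proof is correct and follows essentially the same route as the paper's: induction on the length of the abstract path, back-propagating the demand on the target state $q_C'$ onto the source state $q_C$ before invoking the induction hypothesis, and then firing the contributor transition repeatedly (the copycat principle) using the fact that $\Delta(t)$ is independent of the population. If anything, your use of $m=\max(\vec{p}(q_C'),1)$ and the explicit treatment of the leader step and the self-loop case are slightly more careful than the paper's own write-up, which glosses over the situation where $\vec{p}(q_C')=0$ yet the store value must still change.
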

Lemma~\ref{lem:cover} does not hold for atomic networks. Indeed, consider a contributor
with transitions $q_0 \by{w_c(1)} q_1 \by{r_c(1):w_c(2)} q_2 \by{r_c(2):w_c(3)} q_3$, where 
$r_c(i):w_c(j)$ denotes that the read and the write happen in one single atomic step.
Then we have (omitting the state of the leader, which does not play any r\^ole here):
{
\setlength\abovedisplayskip{1pt}
\setlength\belowdisplayskip{1pt}
\[(\#, \{q_0\}) \absby{w_c(1)} (1, \{q_0, q_1\}) \absby{r_c(1):w_c(2)}
(2, \{q_0, q_1, q_2\})\absby{r_c(2):w_c(3)}
(3, \{q_0, \ldots, q_3\})\enspace .\]
}
\noindent 
Let $\vec{p}$ be the population putting one contributor in each of 
$q_0, \ldots, q_3$. This population belongs to $\gamma(\{q_0, \ldots, q_3\})$
but no configuration \( (3, \vec{\hat{p}})\) with
\(\vec{\hat{p}} > \vec{p}\) is reachable from any population that only puts 
contributors in $q_0$, no matter how many. Indeed, after the first contributor
moves to $q_2$, no further contributor can follow, and so we cannot have contributors
simultaneously in both $q_2$ and $q_3$. On the contrary, in non-atomic networks 
the Copycat Lemma states that what the move by one 
contributor can always be replicated by arbitrarily many.

We proceed to characterized the cycles of the abstract transition system that can be
``concretized''. A {\em cycle} of $\alpha{\it TS}$ is a path 
$a_0 \absby{t_1} a_1 \absby{t_2} a_2 \cdots \absby{t_{n-1}} a_n$
such that $a_n = a_0$. A cycle is {\em realizable} if there is an
infinite path $c_0 \by{t'_1} c_1 \by{t'_2} c_2 \cdots$ of ${\it TS}$ such that 
$c_k \in \gamma(a_{(k \mod n)})$ and $t'_{k+1} = t_{(k+1\mod n)}$ for every $k \geq 0$.

\begin{lemma}\label{lem:realizability}
A cycle $a_0 \absby{t_1} a_1 \absby{t_2} a_2 \cdots \absby{t_{n}} a_n$ of 
$\alpha{\it TS}$ is realizable if{}f $\sum_{i=1}^{n} \Delta(t_i) = \vec{0}$.
\end{lemma}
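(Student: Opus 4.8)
The plan is to handle the two implications separately, using throughout the elementary observation that along any concrete path that follows the label sequence $t_1,t_2,\ldots$ the population evolves by $\vec{p}_{k+1}=\vec{p}_k+\Delta(t_{(k+1\bmod n)})$, so after $k$ steps it equals $\vec{p}_0+\sigma_k$, where $\sigma_k:=\sum_{i=1}^{k}\Delta(t_i)$ and $\sigma_0:=\vec{0}$. Write $S:=\sigma_n=\sum_{i=1}^{n}\Delta(t_i)$; since each $\Delta(t_i)$ is either $\vec{0}$ (a leader move) or $-\vec{q_C}+\vec{q_C}'$ (a contributor move), every $\Delta(t_i)$, and hence $S$, has component-sum $0$. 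For the ``only if'' direction, suppose the cycle is realizable and let $c_0\by{t_1}c_1\by{t_2}\cdots$ witness it, with populations $\vec{p}_k$. After $m$ full traversals the population is $\vec{p}_{mn}=\vec{p}_0+mS$, and each such $\vec{p}_{mn}$ is a genuine population, hence non-negative, for every $m\ge 0$. If $S\neq\vec{0}$ then, because $S$ has component-sum $0$, it has a strictly negative component $S(q)<0$, so $\vec{p}_{mn}(q)=\vec{p}_0(q)+m\,S(q)\to-\infty$, contradicting non-negativity. Therefore $S=\vec{0}$.

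For the ``if'' direction, assume $S=\vec{0}$. The first step is to note, using property (B), that all abstract supports coincide: writing $a_k=(q_{Dk},g_k,Q_k)$ we get $Q_0\subseteq Q_1\subseteq\cdots\subseteq Q_n=Q_0$, hence $Q_0=\cdots=Q_n=:Q$. Moreover, by definition of $\alpha T$, for every contributor transition $t_i$ both its source and its target state lie in $Q$: the source because firing $t_i$ needs a contributor supplied by some $\vec{p}\in\gamma(Q)$, and the target because the resulting population places a contributor there, contributing that state to $Q_i=Q$. Consequently each $t_i$ only moves contributors among states of $Q$, so any non-negative population started in $\gamma(Q)$ remains in $\gamma(Q)$ throughout a full traversal.

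It then remains only to provide enough contributors that every step of the cycle is enabled, and the plan is to use the uniform ``budget'' population $\vec{p}_0:=(n{+}1)\sum_{q\in Q}\vec{q}\in\gamma(Q)$, firing $t_1,\ldots,t_n$ from $c_0=(q_{D0},g_0,\vec{p}_0)$. For the contributor transition $t_i$ with source state $q$ we must check $\vec{p}_{i-1}\ge\vec{q}$, i.e.\ $\big(\vec{p}_0+\sigma_{i-1}\big)(q)\ge1$; since each $\Delta(t_j)$ changes a component by at most $1$, we have $|\sigma_{i-1}(q)|\le i-1\le n-1$, so $\vec{p}_{i-1}(q)\ge(n{+}1)-(n{-}1)\ge1$ and $t_i$ is enabled. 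The leader-dependent side conditions (matching of the store value and of $D$'s state) are population-independent and already hold because the abstract cycle exists. Finally $\vec{p}_n=\vec{p}_0+S=\vec{p}_0$, so $c_n=c_0$ and the single traversal repeats verbatim, producing an infinite path with $c_k\in\gamma(a_{(k\bmod n)})$ and the correct labels. I expect this ``if'' direction to be the crux: the genuinely subtle points are that realizability forbids the support from ever changing (resolved for free by property (B)) and that one fixed starting population must simultaneously enable every step \emph{and} return to itself; the uniform population of size linear in $n$ achieves both precisely because $S=\vec{0}$, in the same spirit as the Copycat Lemma.
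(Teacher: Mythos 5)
Your proof is correct and follows essentially the same route as the paper's: for the ``if'' direction you use property (B) to collapse all supports to a single $Q$ and then launch a uniformly large population on $Q$ (the paper uses $n$ copies per state, you use $n{+}1$), observing that partial sums of the $\Delta(t_i)$ cannot deplete any state and that $\sum_i\Delta(t_i)=\vec{0}$ closes the concrete cycle. The only (minor) divergence is in the ``only if'' direction, where the paper applies the pigeonhole principle to the finitely many populations of fixed size to find a repeated configuration, while you note that each $\Delta(t_i)$ has zero component-sum, so a nonzero total drift would have a strictly negative component and drive the population below zero --- an equally valid, arguably more direct, finishing argument resting on the same conservation of population size.
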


\begin{theorem}
\label{th:fsafsa}
$\mathtt{MC}(\texttt{FSM},\texttt{FSM})$ is NP-complete.
\end{theorem}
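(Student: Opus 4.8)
The plan is to prove both membership in NP and NP-hardness; the latter is immediate, so the substantive work is the NP upper bound. NP-hardness follows directly from the NP-hardness of the reachability (safety) problem established in \cite{egm13}: a reachability instance can be encoded as a model-checking instance by equipping the leader with a trivial B\"uchi condition (e.g., a self-loop made accepting) that is satisfiable precisely when the target configuration is reached, so any algorithm for $\mathtt{MC}(\texttt{FSM},\texttt{FSM})$ solves reachability.

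For membership in NP, the key reduction is from ``existence of an infinite accepting path in ${\it TS}$'' to ``existence of a suitable lasso in the finite abstract system $\alpha{\it TS}$,'' using Lemma~\ref{lem:realizability}. First I would argue that $\N_A \neq \emptyset$ if{}f ${\it TS}$ has an infinite path from an initial configuration that visits an accepting configuration (one with $q_D \in F$) infinitely often; this is essentially the definition of acceptance unwound through the asynchronous product and shuffle. Next, by property~(A) together with the stabilization observation (the sets $Q_i$ are nondecreasing by property~(B) and hence eventually constant), any such accepting run of ${\it TS}$ projects to an $\omega$-path of $\alpha{\it TS}$ that, from some point on, cycles through abstract configurations with a fixed population-set component and visits an accepting abstract configuration infinitely often. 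Conversely, I would use Lemma~\ref{lem:cover} to concretize a stem reaching the entry of a cycle, and Lemma~\ref{lem:realizability} to guarantee that a cycle satisfying $\sum_{i=1}^{n}\Delta(t_i) = \vec{0}$ and passing through an accepting abstract configuration can be concretized into a genuine infinite accepting path of ${\it TS}$. This yields the characterization: $\N_A \neq \emptyset$ if{}f $\alpha{\it TS}$ contains a lasso whose stem ends at the cycle's start, whose cycle is a realizable cycle in the sense of Lemma~\ref{lem:realizability} (i.e. $\sum_i \Delta(t_i) = \vec 0$), and whose cycle contains an accepting abstract configuration.

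Finally I would establish that this lasso condition is checkable in NP. The abstract system $\alpha{\it TS}$ has at most $K = \vert Q_D\vert \cdot (\vert\G\vert+1) \cdot 2^{\vert Q_C\vert}$ configurations, which is exponential; so I cannot afford to construct it explicitly. Instead the witness is a polynomial-size object: rather than guessing the cycle as an explicit sequence of transitions (which could be exponentially long), I would guess the final stabilized population-set $Q$, the accepting abstract configuration on the cycle, and a multiset (Parikh image) of transitions realizing both a connecting path and a closed cycle with zero net effect. Using the fact that $\Delta(t)$ depends only on $t$ and the condition $\sum_i \Delta(t_i) = \vec{0}$ is a linear (semilinear) constraint over transition multiplicities, reachability within the abstract graph and the existence of an appropriate Eulerian-style cycle can be encoded as the feasibility of a polynomial-size integer linear program, whose size is polynomial in $\vert Q_D\vert$, $\vert Q_C\vert$, $\vert\G\vert$ and the size of $D$ and $C$. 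The main obstacle I anticipate is precisely this certificate-size issue: verifying a lasso over an exponentially large abstract state space in polynomial time requires packaging the cycle as its Parikh image subject to flow-conservation and the $\Delta$-summation constraint, and then appealing to the polynomial bound on integer-programming witnesses, rather than manipulating the cycle directly.
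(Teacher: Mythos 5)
Your overall route is the paper's: NP-hardness from reachability, and NP membership via the lasso characterization in $\alpha{\it TS}$, using Lemma~\ref{lem:cover} for the stem, Lemma~\ref{lem:realizability} for the cycle, and a Parikh-image/integer-programming encoding of the zero-$\Delta$ cycle. The one genuine gap is in your treatment of the certificate size for the \emph{stem}. You guess only the final stabilized set $Q$ and the accepting abstract configuration, and then assert that ``reachability within the abstract graph \ldots{} can be encoded as the feasibility of a polynomial-size integer linear program.'' As stated this does not go through: $\alpha{\it TS}$ has $\vert Q_D\vert\cdot(\vert\G\vert+1)\cdot 2^{\vert Q_C\vert}$ configurations, so a flow-conservation encoding of reachability over the full abstract graph has exponentially many variables and constraints, and guessing only the terminal $Q$ does not restrict the stem, which may pass through many intermediate third components $Q_i\subsetneq Q$. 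The missing idea---for which you already have the ingredients, since you invoke property (B)---is that along any abstract path the third components form a $\subseteq$-increasing chain and hence take at most $\vert Q_C\vert+1$ distinct values; the certificate should therefore include the whole chain $\{q_{0C}\}, Q_1\subsetneq\cdots\subsetneq Q_\ell$ with $\ell\le\vert Q_C\vert$. After guessing it, the relevant fragment of $\alpha{\it TS}$ has only $\vert Q_D\vert\cdot(\vert\G\vert+1)\cdot(\ell+1)$ states, and both the stem-reachability check and the cycle encoding (Seidl \emph{et al.}'s existential Presburger formula for the Parikh image of the cycle language, conjoined with the flow constraints expressing $\sum_i\Delta(t_i)=\vec 0$) become genuinely polynomial. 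This is precisely steps (1)--(2) of the paper's algorithm.

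A second, smaller omission is in the completeness direction: you need not only that the abstract path eventually cycles with a fixed $Q$-component and visits accepting configurations infinitely often, but that some such abstract cycle satisfies $\sum_i\Delta(t_i)=\vec 0$. This does not follow from merely revisiting an abstract configuration; the paper obtains it by a pigeonhole argument on the \emph{concrete} run (populations have constant size, so some concrete configuration repeats with an accepting configuration in between, forcing the $\Delta$'s over that segment to sum to $\vec 0$) and only then projects to $\alpha{\it TS}$. With these two repairs your argument coincides with the paper's proof.
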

\begin{proof}
NP-hardness
follows from the NP-hardness of reachability \cite{egm13}.
We show membership in NP with the following high-level 
nondeterministic algorithm whose correctness 
relies on Lemmas~\ref{lem:cover} and \ref{lem:realizability}:
\begin{enumerate}
	\item Guess a sequence \(Q_1,\ldots,Q_{\ell}\) of subsets of \(Q_C\) such that \(Q_i \subsetneq Q_{i+1}\) for all \(i\), \(0 < i < \ell\). Note that \(\ell \leq \vert{Q_C}\vert\).
	\item Compute the set \(\mathcal{Q} = Q_D \times (\G\cup\{\#\}) \times \{ \{q_{0C}\} , Q_1,\ldots,Q_{\ell} \} \) of abstract configurations and the set \(\mathcal{T}\) of abstract transitions 
	      between configurations of \(\mathcal{Q}\).
	\item Guess an accepting abstract configuration \(a \in \mathcal{Q}\), that is, an
 \( a = (q_D,g,Q)\) such that \(q_D\) is accepting in \(D\).
	\item Check that \(a\) is reachable from the initial abstract configuration \( ( q_{0D}, \#, \{q_{0C}\}) \) by means of abstract transitions of \(\mathcal{T}\). 
	\item Check that the transition system with $\mathcal{Q}$ and $\mathcal{T}$
as states and transitions contains a cycle \(a_0 \absby{t_1} a_1 \cdots a_{n-1} \absby{t_n} a_n \)
such that \(n\geq 1\), \(a_0=a_n=a\) and \(\sum_{i=1}^n \Delta(t_i) = \vec{0}\).
\end{enumerate}
We show that the algorithm runs in polynomial time. First, because the sequence guessed is no longer 
than \(\vert{Q_C}\vert\), the guess can be done in polynomial time. 
Next, we give a polynomial algorithm for step (5):
\begin{compactitem}
\item Compute an FSA\footnote{A finite-state automaton (FSA) is an FSM which decides languages of finite words. Therefore an FSA is an FSM with a set \(F\) of accepting states.} \(A^{\circlearrowright}_{a}\) over the alphabet
		\(\delta_D\cup\delta_C\) with \(\mathcal{Q}\) as set of states,
		\(\mathcal{T}\) as set of transitions, $a$ as initial state, and  
                \( \{a\} \) as set of final states.
	\item Use the polynomial construction of Seidl \textit{et al.}~\cite{Seidl05} to compute an (existential) Presburger formula \(\Omega\) for the Parikh
		image of \(L(A_{a}^{\circlearrowright})\). 
		The free variables of \(\Omega\) are in one-to-one correspondence with the transitions of \(\delta_D\cup\delta_C\). Denote by \(x_t\)
		the variable corresponding to transition \(t\in \delta_D\cup\delta_C\).
	\item Compute the formula 
		{
\setlength\abovedisplayskip{1pt}
\setlength\belowdisplayskip{1pt}
		\[\textstyle\Omega' = \Omega \; \wedge \;  \bigwedge_{q_c \in Q_c} \Bigl( \sum_{\mathit{tgt}(t) = q_c} x_t = \sum_{\mathit{src}(t) = q_c} x_t \Bigr) \; \wedge \; \sum_{t\in \delta_D\cup \delta_C} x_t > 0\]
		}
\noindent where \(\mathit{tgt}\) and \(\mathit{src}\) returns the target and source states of the transition passed in argument. $\Omega'$ adds to $\Omega$ the realizability condition of Lemma \ref{lem:realizability}.
	\item Check satisfiability of \(\Omega'\). This step requires nondterministic polynomial 
              time because satisfiability of an existential Presburger formula is in NP \cite{Gradel88}.\qed
\end{compactitem}
\end{proof}
% end input /Users/pierreganty/ppsnaw/liveness/complexity-fsa.tex
 %

%
\makeatletter{}%
% start input /Users/pierreganty/ppsnaw/liveness/complexity-pdafsa.tex
%
%
%

\section{$\mathtt{MC}(\texttt{PDM},\texttt{FSM})$ is NP-complete}
\label{sec:pdafsa}

A pushdown system (PDM) $P = (Q, \Gamma, \delta, q_0)$ over $\Sigma$ consists of a finite set $Q$ of states 
including the initial state \(q_0\),
a \emph{stack alphabet} $\Gamma$ including the bottom stack symbol \(\bot\), and a set of \emph{rules}
$\delta \subseteq Q\times \Sigma\times \Gamma \times Q \times (\Gamma\backslash\{\bot\} \cup\{\textsf{pop}\})$ which either push or pop as explained below.
A \emph{PDM-configuration} $qw$ consists of a state $q\in Q$ and a word
$w\in\Gamma^*$ (denoting the stack content).  For $q,q'\in Q$, $a \in
\Sigma$, $\gamma,\gamma'\in \Gamma$, $w,w'\in\Gamma^*$, we say a
PDM-configuration $q'w$ (resp. $q'\gamma'\gamma w$) $a$-follows $q
\gamma w$ if $(q, a, \gamma, q', \textsf{pop})\in\delta$,
(resp. $(q,a,\gamma,q',\gamma') \in \delta$); we write $qw
\by{a} q'w'$ if $q'w'$ $a$-follows $qw$, and call it a {\em transition}.
A \emph{run} $c_0 \by{(v)_1} c_1\by{(v)_2}\ldots$ on a word $v\in\Sigma^\omega$
is a sequence of PDM-configurations such that $c_0 = q_0 \bot$ and 
$c_i \by{(v)_{i+1}} c_{i+1}$ for all \(i\geq 0\).
We write $c\by{*} c'$ if there is a run from $c$ to $c'$. 
The language \(L(P)\) of \(P\) is the set of all words $v\in \Sigma^\omega$
such that $P$ has a run on $v$.

A B\"uchi PDM is a PDM with a set $F\subseteq Q$ of accepting states.
A word is accepted by a B\"uchi PDM if there is a run on the word for which some state in $F$ occurs infinitely often along the
PDM-configurations. The following lemma characterizes accepting runs.

\begin{lemma}{\cite{BEM97}}
\label{lem:pdacycle}
Let $c$ be a configuration.
There is an accepting run starting from $c$ if there are states $q\in Q$, $q_f\in F$,
a stack symbol $\gamma\in\Gamma$ such that
$c \by{*} q\gamma w$ for some $w\in\Gamma^*$
and $q\gamma \by{*} q_f u \by{*} q \gamma w'$ for some $u, w'\in\Gamma^*$.
\end{lemma}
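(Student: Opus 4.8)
The plan is to prove the characterization of accepting runs for B\"uchi PDM via the standard ``pumping'' argument on the stack, in both directions of the biconditional. The forward direction (if such states $q, q_f$ and stack symbol $\gamma$ exist, then there is an accepting run) is the constructive one: given $c \by{*} q\gamma w$ and a cycle $q\gamma \by{*} q_f u \by{*} q\gamma w'$ that visits an accepting state $q_f$, I would stitch these together into an infinite run. First I would reach $q\gamma w$ from $c$. The key observation is that the bottom $\gamma$ of the stack is never inspected or popped during the segment $q\gamma \by{*} q_f u \by{*} q\gamma w'$, so by the usual property that a PDM computation using only the symbols above a fixed stack prefix can be replayed verbatim with any extra content underneath, the same sequence of transitions applies starting from $q\gamma w$ and yields $q\gamma w' w$, passing through an accepting configuration $q_f u w$ on the way. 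Iterating this gives stack contents $q\gamma w, q\gamma w' w, q\gamma w' w' w, \ldots$, an infinite run visiting $q_f \in F$ infinitely often, hence accepting.

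For the converse direction (every accepting run gives rise to such $q, q_f, \gamma$), the idea is a combinatorial pigeonhole argument on an infinite accepting run $\rho = c_0 \by{} c_1 \by{} \cdots$ visiting $F$ infinitely often. I would track the evolution of the stack height along $\rho$. Let $h_i = |w_i|$ be the stack height at configuration $c_i$. Two cases arise depending on whether the stack height is bounded or unbounded along $\rho$. If the stack height stays bounded, then only finitely many distinct PDM-configurations occur, so some accepting configuration repeats, immediately giving a cycle through $F$. If the stack height is unbounded, I would identify a stack symbol $\gamma$ that, once pushed, is never popped for the remainder of the run (a ``permanent'' bottom portion), and look at the infinitely many positions where a new permanent level is created; by pigeonhole on the finitely many states in $Q$ and symbols in $\Gamma$, some state--symbol pair $(q,\gamma)$ recurs at two such positions with an intervening visit to $F$. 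Extracting the segment between these two occurrences yields $q\gamma \by{*} q_f u \by{*} q\gamma w'$, and the prefix of $\rho$ up to the first occurrence yields $c \by{*} q\gamma w$.

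The main obstacle I expect is making the unbounded-stack case precise, specifically formalizing the notion of a stack level that is never popped again and ensuring the extracted segment genuinely starts and ends in the required form $q\gamma \by{*} \ldots \by{*} q\gamma w'$ with the accepting state strictly in between. The delicate point is that the segment must leave the distinguished $\gamma$ at the bottom untouched throughout, so that it qualifies as a reusable loop that can be pumped; this requires choosing the recurring positions to lie at successive ``record'' minima of the future stack height, so that everything pushed between them sits strictly above $\gamma$ and is either popped back or carried along. Since this is a known result, I would simply cite \cite{BEM97} for the detailed pumping lemma rather than reprove the bookkeeping, and content myself with the high-level argument above; the forward direction is short enough to spell out directly if needed, using the stack-invariance property of PDM transitions.
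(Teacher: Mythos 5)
The paper offers no proof of this lemma at all: it is imported verbatim from \cite{BEM97}, so there is nothing internal to compare your argument against. Your sketch is the standard argument from that reference and is essentially correct. Two small remarks. First, the statement as given is only an implication (``there is an accepting run \emph{if} \dots''), so strictly speaking only your forward, constructive direction is needed to establish it; the converse you sketch is the direction the paper actually invokes for completeness in Theorem~\ref{th:pdafsa}, and it does hold (the result in \cite{BEM97} is an equivalence), so proving it does no harm. Second, in the forward direction your justification that ``the bottom $\gamma$ is never inspected or popped'' is slightly off: $\gamma$ \emph{is} inspected --- it is the top symbol when the loop starts --- and the property you actually need is just that transitions depend only on the control state and the topmost symbol, so any run $q\gamma \by{*} q'v$ in which no intermediate stack is empty can be replayed as $q\gamma w \by{*} q'vw$. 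That no intermediate stack is empty follows because an empty stack disables all transitions while the loop continues on to the nonempty configuration $q\gamma w'$; with that observation your iteration producing $q\gamma w,\ q\gamma w'w,\ q\gamma w'w'w,\dots$ through infinitely many accepting configurations goes through. Your converse via future-minimal stack positions (the same positions used in Proposition~\ref{prop:esh1often}) and pigeonhole on state--top-symbol pairs interleaved with visits to $F$ is likewise the standard bookkeeping and is sound.
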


We now show $\mathtt{MC}(\texttt{PDM}, \texttt{FSM})$ is decidable, generalizing the proof from Section~\ref{sec:fsafsa}.
Fix a B\"uchi PDM $P=(Q_D,\Gamma_D, \delta_D, q_{0D}, F)$, and a FSM $C=(Q_C, \delta_C, q_{0C})$. 
A {\em configuration} is a tuple $(q_D, w, g, \vec{p})$, where \(q_D\in Q_D\),
$w\in\Gamma_D^*$ is the stack content, $g \in \G\cup\{\#\}$, and $\vec{p}$ is a population.
Intuitively, $q_D w$ is the PDM-configuration of the leader.
We extend the definitions from Section~\ref{sec:fsafsa} like accepting configuration in the obvious way.

We define a labeled transition system ${\it TS} = (X, T, X_0)$, where $X$ is the set of configurations including the set $X_0 = (q_{0D}, \bot, \#, \vec{P_0})$ of initial configurations,
and the transition relation $T = T_D \cup T_C$, where $T_C$ is as before and $T_D$ is the set of triples
$\big( ( q_D, w, g, \vec{p}), t, ( q_D', w', g', \vec{p})\big)$ 
  such that $t$ is a transition (not a rule) of $D$, and one of the following conditions holds:
  \begin{inparaenum}[(i)]
  \item $t = (q_D w \by{w_d(g')} q_D' w')$; or
  \item $t= (q_Dw \by{r_d(g)} q_D'w')$ and $g = g'$.
  \end{inparaenum}
We define the abstraction $\alpha{\it TS}$ of ${\it TS}$ as the obvious generalization of the abstraction in Section~\ref{sec:fsafsa}.
An accepting path of the (abstract) transition system is an infinite path  with infinitely many accepting (abstract) configurations. As for $\mathtt{MC}(\texttt{FSM}, \texttt{FSM})$, not every accepting path of the abstract admits a concretization, but we find a realizability condition in terms of linear constraints. Here we use again the polynomial construction of 
Seidl \textit{et al.}~\cite{Seidl05} mentioned in the proof of 
Theorem \ref{th:fsafsa}, this time to compute an (existential) Presburger formula
for the Parikh image of a pushdown automaton. 

\begin{theorem}
\label{th:pdafsa}
$\mathtt{MC}(\texttt{PDM},\texttt{FSM})$ is NP-complete.
\end{theorem}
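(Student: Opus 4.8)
The plan is to adapt the NP-membership argument from Theorem~\ref{th:fsafsa}, replacing the finite-state leader by the pushdown leader $P$ while keeping the finite-state contributors. NP-hardness is inherited directly from the NP-hardness of reachability~\cite{egm13}, exactly as in Theorem~\ref{th:fsafsa}, so the whole work lies in membership. The overall structure stays the same: an accepting infinite computation of the concrete system corresponds to an abstract computation in $\alpha{\it TS}$ that reaches a region where the abstract component $Q\subseteq Q_C$ has stabilized (by property~(B), the populated set only grows, so it stabilizes), and from that point onward we need an infinite concrete run of the leader that (i) visits $F$ infinitely often and (ii) respects the realizability constraint on contributors, namely that the net change $\sum\Delta(t_i)$ of the contributor population over each ``return'' is zero.

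First I would fix an accepting abstract configuration $a=(q_D,g,Q)$ and, mirroring step~(4) of the FSM algorithm, guess $a$ together with a stabilizing chain $Q_1\subsetneq\cdots\subsetneq Q_\ell$ (length at most $\vert Q_C\vert$) and check abstract reachability of $a$ from $\alpha X_0$; Lemma~\ref{lem:cover} generalizes verbatim and guarantees the stem can be concretized with a sufficiently large population. The genuinely new ingredient is replacing the ``realizable cycle'' condition of Lemma~\ref{lem:realizability} by its pushdown analogue. Over the region where the populated set is frozen at $Q$ with store value $g$, the leader behaves as a plain B\"uchi PDM whose transitions are tagged with the contributor-population effect $\Delta(t)$ (contributor moves from states of $Q$ are always enabled because the population is unbounded after applying the Copycat Lemma~\ref{lem:extendedmono}). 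I would invoke Lemma~\ref{lem:pdacycle} to characterize the existence of an accepting run of this PDM as the existence of a finite ``repeating'' computation $q\gamma\by{*} q_f u\by{*} q\gamma w'$ through an accepting state $q_f$, and impose on top of it the population-balance requirement $\sum_t\Delta(t)\,x_t=\vec 0$ so that the contributor population can actually be cycled indefinitely.

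To turn this into an NP procedure I would, as the excerpt already anticipates, use the construction of Seidl \emph{et al.}~\cite{Seidl05} to compute in polynomial time an existential Presburger formula $\Omega$ for the Parikh image of the pushdown automaton that recognizes the finite witness computations of Lemma~\ref{lem:pdacycle} (a standard product PDM that records whether an accepting state has been seen and whether the stack has returned to its starting height with matching symbol). I would then conjoin to $\Omega$ the balance constraint $\bigwedge_{q_c\in Q_c}\bigl(\sum_{\mathit{tgt}(t)=q_c}x_t=\sum_{\mathit{src}(t)=q_c}x_t\bigr)$ together with $\sum_t x_t>0$, exactly as in the formula $\Omega'$ of Theorem~\ref{th:fsafsa}, capturing that contributor transitions are used in a flow-balanced way so the populated-state profile is preserved around the loop. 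Satisfiability of this existential Presburger formula is in NP by~\cite{Gradel88}, and all preceding guesses and polynomial constructions keep the whole algorithm within NP.

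The main obstacle I expect is the precise formulation and proof of the pushdown realizability lemma: in the finite-state case the cycle of $\alpha{\it TS}$ is literally a finite closed walk and the net effect $\sum\Delta(t_i)=\vec0$ suffices, but with a pushdown leader the ``cycle'' is a run that may grow and shrink the stack unboundedly, so I must argue that Lemma~\ref{lem:pdacycle}'s repeating structure can be combined with a flow-balanced multiset of contributor moves to build a genuine infinite concrete run. The delicate point is that the stack excursions of the leader are interleaved with contributor reads and writes, and I need the Copycat Lemma to supply enough contributors to sustain every read demanded along the (possibly deep) stack excursion while keeping the global population profile invariant; establishing that the Parikh-image plus flow-balance encoding faithfully captures exactly the realizable accepting runs, with no spurious solutions and no missed ones, is where the real care is required.
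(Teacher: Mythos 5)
Your proposal follows essentially the same route as the paper's own proof: guess the stabilizing chain of populated contributor-state sets, build the abstract Büchi PDM, use Lemma~\ref{lem:pdacycle} to reduce acceptance to a lasso with a repeating segment $q\gamma \by{*} q_f u \by{*} q\gamma w'$, and then combine the Seidl et al.\ Parikh-image Presburger formula of that segment's PDA with the flow-balance constraints to get an NP check. The ``delicate point'' you flag at the end is exactly what the paper handles (rather briefly) via pushdown reachability for the stem, a sufficiently large initial population for soundness, and the same pigeonhole argument as Lemma~\ref{lem:realizability} for completeness, so there is no substantive gap between your plan and the published argument.
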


%
%
%
%
% end input /Users/pierreganty/ppsnaw/liveness/complexity-pdafsa.tex
 %

%
\makeatletter{}%
% start input /Users/pierreganty/ppsnaw/liveness/reduction2fsa.tex
%
%
%

\section{$\mathtt{MC}(\texttt{PDM},\texttt{PDM})$ is in NEXPTIME}
\label{sec:pdapda}

We show how to reduce $\mathtt{MC}(\texttt{PDM},\texttt{PDM})$ to
$\mathtt{MC}(\texttt{PDM},\texttt{FSM})$.
We first introduce the notion of {\em effective stack height} of a 
PDM-configuration in a run of a PDM, and define, given a PDM $C$,  
an FSM $C_k$ that simulates all the runs of $C$ of effective stack height $k$.
Then we show that, for $k \in O(n^3)$, where $n$ is the size of $C$,
the language $(L(D) \parallel \S \parallel \betweenp{\infty} L(C))$ is
empty if{}f  $(L(D) \parallel \S \parallel \betweenp{\infty} L(C_k))$
is empty. 

\subsection{A FSM for runs of bounded effective stack height}

Consider a run of a PDM that repeatedly pushes symbol on the
stack. The stack height of the configurations\footnote{
For readability, we write ``configuration'' for ``PDM-configuration.''
}
 is unbounded, but,
intuitively, the PDM only uses the topmost stack symbol during
the run. To account for this we define the notion of effective stack height.

\begin{definition}
  Let $\rho=c_0 \by{(v)_1} c_1 \by{(v)_2} \cdots$ be an infinite run of a
  PDM on \(\omega\)-word \(v\), where $c_i = q_iw_i$. 
  The {\em dark suffix} of $c_i$ in $\rho$, denoted by $\ds{w_i}$, is
  the longest suffix of $w_i$ that is also a proper suffix of $w_{i+k}$ for
  every $k \geq 0$. The {\em active prefix} $\ap{w_i}$ of $w_i$ is the
  prefix satisfying $w_i = \ap{w_i} \cdot \ds{w_i}$.
	The {\em effective stack height} of \(c_i\) in $\rho$ is 
	$\vert{\ap{w_i}}\vert$.
	We say that $\rho$ is {\em effectively $k$-bounded} (or simply \(k\)-bounded
	for the sake of readability) if every configuration of $\rho$ has an
	effective stack height of at most $k$. 
	Further, we say that $\rho$ is {\em bounded} if it is $k$-bounded for some $k
	\in \mathbb{N}$.  Finally, an $\omega$-word of the PDM is {\em $k$-bounded},
	respectively bounded, if it is the word generated by some $k$-bounded,
	respectively bounded, run (other runs for the same word may not be bounded).
\end{definition}

Intuitively, the effective stack height measures the actual memory required by
the PDM to perform its run. 
For example, repeatedly pushing symbols on the stack produces a run with
effective stack height 1.  
Given a position in the run, the elements of the stack that are never popped
are those in the longest common suffix of all subsequent stacks.
The first element of that suffix may be read, therefore only the
longest \emph{proper} suffix is effectively useless, so  
no configuration along an infinite run has effective stack height \(0\). 

\begin{proposition}\label{prop:esh1often}
Every infinite run of a PDM contains infinitely many positions at which the
effective stack height is 1.
\end{proposition}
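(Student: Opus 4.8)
The plan is to reduce the statement to an elementary fact about the sequence of stack heights along the run, and then read off the effective stack height from that sequence. Write $h_i = |w_i|$ for the height of the configuration $c_i = q_iw_i$, so that $h_0, h_1, h_2, \ldots$ is an infinite sequence of positive integers (positive because an infinite run can never empty its stack: every rule requires a top symbol, so a configuration with empty stack would have no successor).

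First I would establish the combinatorial core: any infinite sequence of positive integers has infinitely many \emph{suffix minima}, i.e.\ infinitely many indices $i$ with $h_i \le h_j$ for all $j \ge i$. This follows from a greedy construction. Set $n_0 = 0$; given $n_{k-1}$, the nonempty set $\{h_j \mid j \ge n_{k-1}\}$ of positive integers has a least element, attained at some smallest index $i_k \ge n_{k-1}$. Then $h_{i_k} \le h_j$ for every $j \ge i_k$, so $i_k$ is a suffix minimum. Putting $n_k = i_k + 1$ yields a strictly increasing sequence $i_1 < i_2 < \cdots$ of suffix minima.

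Next I would prove the key local claim: if $i$ is a suffix minimum, then the effective stack height of $c_i$ is $1$. Since every pop removes only the topmost (leftmost) symbol and $h_{i+k} \ge h_i$ for all $k \ge 0$, no symbol lying at depth $1, \ldots, h_i-1$ from the bottom is ever popped after position $i$. Hence the length-$(h_i-1)$ bottom suffix $s$ of $w_i$ is a suffix of every $w_{i+k}$, and it is a \emph{proper} suffix because $|w_{i+k}| = h_{i+k} \ge h_i > |s|$. Therefore $|\ds{w_i}| \ge h_i - 1$; and since $\ds{w_i}$ is by definition a proper suffix of $w_i$ we also have $|\ds{w_i}| \le h_i - 1$. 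Thus $|\ds{w_i}| = h_i - 1$ and $|\ap{w_i}| = h_i - (h_i - 1) = 1$. Combining this with the second step, the infinitely many suffix minima $i_1 < i_2 < \cdots$ are positions of effective stack height $1$, which proves the proposition.

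The combinatorial lemma is routine; the only point requiring care is the local claim, and specifically the use of \emph{proper} suffix in the definition of the dark suffix. This is exactly what pins $|\ds{w_i}|$ to $h_i - 1$ rather than $h_i$, giving effective stack height $1$ and not $0$, and incidentally matches the earlier remark that no configuration along an infinite run has effective stack height $0$. One should also keep the stack orientation straight (top $=$ leftmost, bottom $=$ rightmost), so that the stable lower part of the stack is a suffix, in agreement with the definition of $\ds{\cdot}$.
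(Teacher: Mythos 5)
Your proof is correct and follows essentially the same route as the paper: both identify the infinitely many suffix-minimum positions of the stack-height sequence (the paper's set $X$) and observe that each such position has effective stack height $1$. Your write-up merely spells out in more detail the final step that the paper dispatches with ``by construction,'' namely that at a suffix minimum the dark suffix has length exactly $h_i-1$ because the whole of $w_i$ survives but only a \emph{proper} suffix may count.
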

 \begin{proof}
 Let $p_0w_0 \by{} p_1w_1 \by{} p_2w_2 \by{} \cdots$ be any infinite run. 
 Notice that $|w_i|\geq 1$ for every $i \geq 0$, because otherwise the run would not
 be infinite. Let $X$ be the set of positions of the run defined as:
 $i \in X$ if{}f $|w_i| \leq |w_j|$ for every $j > i$. Observe that
 $X$ is infinite, because the first configuration of minimal stack height,
 say $p_kw_k$ belongs to it, and so does the first configuration of minimal stack height
 of the suffix $p_{k+1}w_{k+1} \by{} \cdots$, etc.
 By construction, the configuration at every position in $X$ has effective stack height $1$.\qed
 \end{proof}

In a $k$-bounded run, whenever
the stack height exceeds $k$, the $k+1$-th stack symbol will never become the top symbol
again, and so it becomes useless. 
So, we can construct a finite-state machine $P_k$ recognizing the words of $L(P)$
accepted by $k$-bounded runs. 

\begin{definition}
  Given a PDM $P = (Q,\Gamma, \delta, q_0)$, the FSM $P_k = (Q_k, \delta_{k}, q_{0k})$,
	called the \emph{\(k\)-restriction of \(P\)}, is defined as follows: (a) $Q_k=Q \times \bigcup_{i=1}^k \Gamma^{i}$ (a state of $P_k$ consists of a state of $P$ and a stack content no longer than $k$);
(b) \(q_{0k} = (q_0, \bot)$; (c) $\delta_{k}$ contains a transition $(q, (w)_{1..k}) \by{a} (q', (w')_{1..k})$ if{}f \(q w \by{a} q' w'\) is a transition (not a rule) of $P$.
\end{definition}

\begin{theorem}
\label{th:kbounded}
Given a PDM $P$, $w$ admits a $k$-bounded run in $P$ if{}f $w \in L(P_{k})$.
\end{theorem}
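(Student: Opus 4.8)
The plan is to prove the two implications by exhibiting a tight simulation between the $k$-bounded runs of $P$ and the runs of the FSM $P_k$, where an FSM state $(q,u)$ records the control state $q$ of $P$ together with the topmost (at most) $k$ stack symbols $u=(w)_{1..k}$ of the current configuration $qw$. Throughout I maintain the invariant that a PDM-configuration $q_iw_i$ and the matching FSM state $(q_i,u_i)$ satisfy $u_i=(w_i)_{1..k}$.

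For the direction from left to right, let $\rho = q_0w_0 \by{(v)_1} q_1w_1 \by{(v)_2} \cdots$ be a $k$-bounded run of $P$ on $v$. I project it onto the sequence $(q_0,(w_0)_{1..k}) \by{(v)_1} (q_1,(w_1)_{1..k}) \by{(v)_2} \cdots$. Each $(w_i)_{1..k}$ has length at most $k$ and so is a state of $P_k$, and $q_0w_0 = q_0\bot$ projects to $q_{0k}=(q_0,\bot)$. Moreover, every step $q_iw_i \by{(v)_{i+1}} q_{i+1}w_{i+1}$ of $\rho$ is, by the very definition of $\delta_k$ (taking $w=w_i$ and $w'=w_{i+1}$ as witnesses), a transition $(q_i,(w_i)_{1..k}) \by{(v)_{i+1}} (q_{i+1},(w_{i+1})_{1..k})$ of $P_k$. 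The projection is therefore an infinite run of $P_k$ on $v$, and hence $v \in L(P_k)$. Here $k$-boundedness is what makes the window meaningful: since the active prefix of every $w_i$ has length at most $k$, by definition of the dark suffix every symbol sitting below depth $k$ lies in $\ds{w_i}$ and never again becomes the top symbol, so the topmost $k$ symbols already carry all the information any transition of $\rho$ ever reads.

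For the converse, I start from an infinite run $(q_0,u_0) \by{(v)_1} (q_1,u_1) \by{(v)_2} \cdots$ of $P_k$ on $v$ and reconstruct a genuine PDM run $q_0W_0 \by{(v)_1} q_1W_1 \by{(v)_2} \cdots$ with $W_0=\bot$ and $(W_i)_{1..k}=u_i$. Each abstract step was produced by a concrete transition $qw \by{(v)_{i+1}} q'w'$ of $P$ that inspects the top symbol $(u_i)_1$; I apply the same rule of $P$ to $W_i$ (pushing, respectively popping), which preserves the invariant $(W_i)_{1..k}=u_i$ because a rule only reads and rewrites the top of the stack. It remains to check that the reconstructed run is $k$-bounded, and this is the step I expect to be the \emph{main obstacle}: I must argue that the stack material that drops below depth $k$ — the part forgotten by $P_k$ — is never exposed again, so that it always belongs to $\ds{W_i}$ and every configuration keeps effective stack height $\lvert\ap{W_i}\rvert \le k$. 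The key point to establish is that $P_k$ parts with a symbol only when it leaves the top-$k$ window and never consults it afterwards; reflecting this back into $W_i$, the reconstructed run never pops into that forgotten suffix, which confines the active prefix of each $W_i$ to the window and bounds the effective stack height by $k$. Proposition~\ref{prop:esh1often} can anchor the argument, since it supplies infinitely many positions of effective stack height $1$ at which the window and the genuine stack are resynchronised, allowing the $k$-boundedness estimate to be propagated segment by segment.
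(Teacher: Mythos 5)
Your forward direction is where the real gap lies. You claim that the pointwise truncation $(q_i,(w_i)_{1..k})$ of a $k$-bounded run is a run of $P_k$. It is not: $P_k$ is a sliding-window machine, and once a symbol is expelled from the bottom of the window by a push at full capacity it is gone for good; a subsequent pop \emph{shrinks} the window rather than sliding it down to re-expose the symbol at depth $k+1$ of the true stack. Concretely, take $k=2$ and a run with stacks $\bot,\,a\bot,\,ba\bot,\,a\bot,\,ca\bot,\,cca\bot,\dots$ (push $a$, push $b$, pop $b$, then push $c$ forever); this run is effectively $2$-bounded, your truncation visits $(q_2,ba)$ and then $(q_3,a\bot)$, but from $(q_2,ba)$ the pop of $b$ takes $P_2$ to $(q_3,a)$ --- the $\bot$ was expelled when $b$ was pushed and cannot be recovered. (If you instead read the definition of $\delta_k$ so literally that $(q_2,ba)\by{}(q_3,a\bot)$ counts as a transition, witnessed by $q_2\,ba\bot\by{}q_3\,a\bot$, then \emph{every} run of $P$, bounded or not, projects to a run of $P_k$, your argument nowhere uses $k$-boundedness, and the converse direction of the theorem becomes false; so that reading cannot be the intended one.) The paper's proof instead tracks the state $(q_i,\ap{w_i}u_i)$, where $u_i$ is a history-dependent prefix of the dark suffix $\ds{w_i}$ that is in general \emph{shorter} than what is needed to fill the window up to $k$ symbols, and the single place where $k$-boundedness enters is the case of a push at a full window, to show $|u_i|>0$, i.e., that the expelled symbol is dark.

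In the converse direction you correctly name the obstacle (effective $k$-boundedness of the reconstructed run) but your proposed resolution does not close it. Proposition~\ref{prop:esh1often} concerns an arbitrary infinite run and gives no relation between the window content and the true stack $W_i$; there is no reason the two are ``resynchronised'' at positions of effective stack height $1$, where the active prefix has length $1$ but $W_i$ may be far longer than the window. Your invariant $(W_i)_{1..k}=u_i$ also fails after a pop from a deep stack, since then $|u_{i+1}|=|u_i|-1$ while $(W_{i+1})_{1..k}$ may still have length $k$; the correct invariant is that $u_i$ is a \emph{prefix} of $W_i$. What the paper actually does is keep the quantitative inequality $|W_{i+1}|-|W_i|\ge|u_{i+1}|-|u_i|$, telescope it to $|W_{i+j}|-|W_i|\ge 1-k$ for all $j$, and conclude that at no position are the top $k$ symbols all eventually popped, which is exactly effective $k$-boundedness. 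Some such bookkeeping on stack lengths is unavoidable here, and it is absent from your sketch.
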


\subsection{The Reduction Theorem}

We fix a B\"uchi PDM \(D\) and a PDM $C$. 
By Theorem~\ref{th:kbounded}, in order to reduce $\mathtt{MC}(\texttt{PDM},\texttt{PDM})$ to
$\mathtt{MC}(\texttt{PDM},\texttt{FSM})$ it suffices to prove the following
Reduction Theorem:

\begin{theorem}[Reduction Theorem]
  \label{th:redPDM}
	Let $N = 2 |Q_C|^2 |\Gamma_C| + 1$, where $Q_C$ and $\Gamma_C$ are the
	states and stack alphabet of $C$, respectively. Let $C_N$ be the $N$-restriction of $C$.
  We have: 
	{
	\setlength\abovedisplayskip{1pt}
	\setlength\belowdisplayskip{1pt}
	\begin{equation} \label{eq:reduction} \big(L(D) \parallel \S \parallel \betweenp{\infty} L(C)\big) \neq
	\emptyset \quad\text{if{}f}\quad \big(L(D) \parallel \S \parallel 
	\betweenp{\infty} L(C_N)\big) \neq \emptyset \enspace .\tag{$\dagger$} \end{equation} 
	}
  There are PDMs $D$, $C$ for which $\eqref{eq:reduction}$ holds only for $N\in\Omega(|Q_C|^2|\Gamma_C|)$. 
\end{theorem}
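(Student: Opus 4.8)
I would prove the two implications of $(\dagger)$ separately and then address the lower bound. The implication from right to left is immediate: by Theorem~\ref{th:kbounded} every word accepted by an $N$-bounded run of $C$ is in particular accepted by $C$, so $L(C_N)\subseteq L(C)$ and hence $\betweenp{\infty} L(C_N)\subseteq\betweenp{\infty} L(C)$; since the asynchronous product is monotone in each argument, nonemptiness of $\big(L(D)\parallel\S\parallel\betweenp{\infty}L(C_N)\big)$ entails nonemptiness of $\big(L(D)\parallel\S\parallel\betweenp{\infty}L(C)\big)$. The plan for the substantial direction is to start from a witness of nonemptiness of the left-hand side and show that any contributor whose run visits a configuration of effective stack height greater than $N$ can be replaced, without destroying compatibility, by a bounded family of contributors each of whose runs is $N$-bounded; iterating over all contributors then yields a witness for the $C_N$-network.

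Concretely, a witness consists of a leader word $u\in L(D)$, a store witness $s\in\S$, and finitely many contributor words $v_1,\dots,v_k\in L(C)$ generated by runs $\rho_1,\dots,\rho_k$ whose actions $s$ interleaves consistently. To isolate a single deep excursion I would use Proposition~\ref{prop:esh1often}: between two consecutive positions of effective stack height $1$ the run performs a balanced push--pop \emph{hump}, and a run is $N$-bounded if{}f all its humps have height at most $N$. So it suffices to eliminate a single hump of height exceeding $N$.

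The heart of the argument is a surgery that splits such a hump into shallower humps carried by several contributors. To each active stack level of the hump I assign a type consisting of the pushed symbol together with the control states of $C$ at the push and at the matching pop; there are at most $|Q_C|^2|\Gamma_C|$ types. Since the hump has more than $2|Q_C|^2|\Gamma_C|$ active levels, the pigeonhole principle yields two levels $j_1<j_2$ of equal type whose brackets nest (the factor two guaranteeing a correctly oriented pair), so the portion of the run between the push at level $j_1$ and its matching pop is a self-contained context that begins and ends $C$ in the same control state. Removing this context produces a run of strictly smaller effective stack height; its register operations are re-supplied by a separate contributor re-executing exactly the context, the store witness is repaired using the stuttering property to duplicate the individual reads and writes of $s$ at the points where the two contributors now interleave, and the Copycat Lemma (Lemma~\ref{lem:extendedmono}) licenses adding the required extra copies to the multiset while preserving compatibility. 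A well-founded measure, for instance the multiset of effective stack heights exceeding $N$ under the multiset ordering, strictly decreases at each step, so the process terminates. The main obstacle is precisely this surgery: after excision one must certify simultaneously that the excised behaviour is realized by a \emph{genuine} contributor run of $C$ of effective stack height at most $N$ and that the modified register sequence still belongs to $\S$. Equality of types at $j_1$ and $j_2$ is what closes the excised context into a standalone run, while stuttering plus the Copycat Lemma keep the store consistent; maintaining both invariants through every surgery is the delicate point.

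For the lower bound I would exhibit PDMs $D$ and $C$ for which the left-hand side of $(\dagger)$ is nonempty only through runs of effective stack height $\Theta(|Q_C|^2|\Gamma_C|)$. The idea is to let $C$ push a chain of $\Theta(|Q_C|^2|\Gamma_C|)$ pairwise distinct contexts, using each of the $|Q_C|^2|\Gamma_C|$ types essentially once so that the pigeonhole repeat occurs only at the full depth, and to design the register protocol so that the B\"uchi condition of $D$ is met only if a single contributor pushes the whole chain and pops it in LIFO order, any attempt to distribute the chain across shallower contributors producing a register trace that $D$ rejects. Then $\big(L(D)\parallel\S\parallel\betweenp{\infty}L(C_{N'})\big)$ is empty for every $N'\in o(|Q_C|^2|\Gamma_C|)$, so $(\dagger)$ fails below $\Omega(|Q_C|^2|\Gamma_C|)$, matching the upper bound up to the constant factor.
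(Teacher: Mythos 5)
Your right-to-left direction and the overall idea of using a pigeonhole argument on stack-level ``types'' $(q,\alpha,q')$ to flatten deep excursions do match the paper, but there are two genuine gaps in the substantial direction. First, the surgery itself: with only \emph{two} same-type levels $j_1<j_2$ you excise the context between the push at $j_1$ and the push at $j_2$ (and the matching pop segments), and you propose that ``a separate contributor re-executes exactly the context.'' That excised fragment is not a run of $C$: a contributor must start from the initial configuration $q_0\bot$, whereas the fragment starts in the middle of a computation at control state $q$ with a nontrivial stack. Equality of types does not ``close'' it into a standalone run. This is precisely why the paper takes $N=2|Q_C|^2|\Gamma_C|+1$, so that the pigeonhole principle yields \emph{three} same-type levels $j_1<j_2<j_3$: deleting the $(j_1,j_2)$ segments gives one legal run $\rho_a$, deleting the $(j_2,j_3)$ segments gives another legal run $\rho_b$, and together $\{\rho_a,\rho_b\}$ covers every position of $\rho$, which is what the Distributing Lemma needs (a surjective embedding) to repair the store witness by stuttering. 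Your accounting of the factor $2$ as guaranteeing a ``correctly oriented pair'' is not the role it plays; nesting of the brackets within a hump is automatic.

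Second, and more fundamentally, your termination argument does not work for infinite runs. A single $\omega$-run of a contributor can contain infinitely many humps of height exceeding $N$, so the multiset of offending heights need not be finite and the iterated surgery never terminates; worse, each surgery adds a contributor, so in the limit you obtain only an \emph{infinite} multiset of $N$-bounded runs, which is not a witness since $\betweenp{\infty}L(C_N)$ is a union over finitely many contributors. This is exactly the obstacle the paper flags after the Boundedness Lemma, and overcoming it is the hardest part of its proof: one must find two positions along the witness where the store value coincides, the leader is at effective stack height $1$ in the same control state and topmost symbol with an accepting state visited in between, and the multisets of \emph{pruned} contributor configurations have the same support; one then adds crossover runs $\rho_{a,a'}$ (gluing a prefix reaching pruned configuration $a$ to a suffix leaving from $a'$, legal because the distribution is synchronized at effective-stack-height-$1$ positions) to make those multisets equal, so that the fragment can be pumped into a lasso $w_1w_2^\omega$ compatible with finitely many runs of the form $v_1v_2^\omega$. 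None of this appears in your proposal, so the left-to-right implication is not established. Your lower-bound sketch is in the right spirit but would need the explicit construction (the paper uses coprime cycle lengths to force a stack of height $1+k_1k_2k_3$).
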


Theorems \ref{th:redPDM} and \ref{th:pdafsa}
provide an upper bound for $\mathtt{MC}(\texttt{PDM},\texttt{PDM})$. 
PSPACE-hardness of the reachability problem \cite{egm13} 
gives a lower bound.

\begin{theorem}
  $\mathtt{MC}(\texttt{PDM},\texttt{PDM})$ is in NEXPTIME and
  PSPACE-hard. If the contributor is a one counter machine (with
  zero-test), it is NP-complete.
\end{theorem}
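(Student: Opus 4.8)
The plan is to derive all three claims from the machinery already in place---the Reduction Theorem (Theorem~\ref{th:redPDM}), the NP-completeness of $\mathtt{MC}(\texttt{PDM},\texttt{FSM})$ (Theorem~\ref{th:pdafsa}), and the known hardness of reachability for these networks. For the NEXPTIME upper bound I would first apply the Reduction Theorem to replace the contributor PDM $C$ by its $N$-restriction $C_N$, with $N = 2|Q_C|^2|\Gamma_C|+1$, turning the instance into an equivalent instance of $\mathtt{MC}(\texttt{PDM},\texttt{FSM})$. The point to check is the size of $C_N$: its state set $Q_C \times \bigcup_{i=1}^N \Gamma_C^i$ has $O(|Q_C|\cdot|\Gamma_C|^N)$ elements, which is exponential in the size of $C$ since $N$ is polynomial in $|C|$. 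As $\mathtt{MC}(\texttt{PDM},\texttt{FSM})$ is decidable in nondeterministic polynomial time (Theorem~\ref{th:pdafsa}), running that algorithm on an input of exponential size costs nondeterministic time $\mathrm{poly}(2^{\mathrm{poly}(|C|)}) = 2^{\mathrm{poly}(|C|)}$, so it stays within NEXPTIME; one need not even materialize $C_N$, as guessing its transitions on demand suffices.

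For PSPACE-hardness I would reduce from reachability in $(\texttt{PDM},\texttt{PDM})$-networks, which is PSPACE-hard \cite{egm13}. Given a reachability instance, I augment the leader so that at the target it gains an accepting self-loop that merely re-reads the current register value $\rb(g)$; since the store permits $\wb(g)(\rb(g))^\omega$, such a loop is legal forever, and the target is reachable if{}f the augmented network has an infinite accepting run---exactly the model-checking question. Hence $\mathtt{MC}(\texttt{PDM},\texttt{PDM})$ is PSPACE-hard.

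The interesting case is the one-counter contributor, and this is where I expect the key observation to lie. When $C$ is a one-counter machine with zero-test its stack alphabet has constant size, so $N = 2|Q_C|^2|\Gamma_C|+1 = O(|Q_C|^2)$. Crucially, in a one-counter machine the stack content is determined by its height---every reachable stack of height $i$ is $\bot$ followed by $i-1$ copies of the single counter symbol---so the reachable part of $C_N$ has only $O(|Q_C|\cdot N) = O(|Q_C|^3)$ states. Thus $C_N$ is equivalent to an FSM of \emph{polynomial} size rather than exponential size, and feeding it into Theorem~\ref{th:pdafsa} places the problem in NP. The matching lower bound is immediate: $\mathtt{MC}(\texttt{FSM},\texttt{FSM})$ is a special case, since finite-state machines are a degenerate case of one-counter machines, and it is already NP-hard by Theorem~\ref{th:fsafsa}. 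The main obstacle throughout is precisely this size analysis: recognizing that the exponential blow-up of the $N$-restriction collapses to polynomial size exactly when the contributor is a one-counter machine, which is what separates this case from the general PDM case.
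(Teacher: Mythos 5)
Your proposal is correct and follows the paper's own route: the NEXPTIME bound comes from composing the Reduction Theorem (Theorem~\ref{th:redPDM}) with the NP algorithm of Theorem~\ref{th:pdafsa} on the exponentially larger $N$-restriction, and the PSPACE lower bound comes from reachability hardness. The paper gives no explicit argument for the one-counter case, but your observation---that a one-counter stack is determined by its height, so the reachable part of $C_N$ collapses to $O(|Q_C|^3)$ states and the reduction becomes polynomial---is exactly the intended reason, and your NP-hardness via $\mathtt{MC}(\texttt{FSM},\texttt{FSM})$ is the right lower bound.
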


The proof of Theorem~\ref{th:redPDM} is very involved. 
Given a run of $D$ compatible with a finite multiset of runs of $C$,
we construct another run of $D$ compatible with a 
finite multiset of $N$-bounded runs of $C_N$. (Here we
extend compatibility to runs: runs are compatible if the words they accept
are compatible.)

The proof starts with the Distributing lemma, which, loosely speaking, shows how
to replace a run of $C$ by a multiset of ``smaller'' runs of $C$ without the leader
``noticing''. After this preliminary result, 
the first key proof element is the Boundedness Lemma.
Let $\sigma$ be an infinite  run of $D$ compatible with a finite multiset $R$ 
of runs of $C$. The Boundedness Lemma states that, for any number $Z$, the first $Z$ steps of
 $\sigma$ are compatible with a (possibly larger) multiset $R_Z$ of runs of $C_N$. 
Since the size of $R_Z$ may grow with $Z$, this lemma does not
yet prove Theorem~\ref{th:redPDM}: it only shows that $\sigma$ is compatible
with an {\em infinite} multiset of runs of $C_N$. This obstacle is overcome in the 
final step of the proof. We show that, for a sufficiently large $Z$, 
there are indices $i<j$ such that,
not $\sigma$ itself, but the run $(\sigma)_{1..i} \big( (\sigma)_{i+1..j} \big)^\omega$ 
for adequate $i$ and $j$ is compatible with a {\em finite} multiset of runs of $C_N$. Loosely speaking,
this requires to prove not only that the leader can repeat $(\sigma)_{i+1..j}$ infinitely often, but also that the runs executed by the instances of $C_N$ while the leader executes 
$(\sigma)_{i+1..j}$ can be repeated infinitely often.

\smallskip
\noindent\textit{The Distributing Lemma.}
Let $\rho = c_0 \by{a_1} c_1 \by{a_2} c_2 \by{a_3} \cdots$ be a (finite or infinite) run of $C$.
Let $r_i$ be the PDM-rule of $C$ generating the transition $c_{i-1} \by{a_i} c_i$.
Then $\rho$ is completely determined by $c_0$ and the sequence $r_1r_2r_3$ \ldots 
Since $c_0$ is also fixed (for fixed $C$), in the rest of the paper
we also sometimes write $\rho = r_1r_2r_3 \ldots $
This notation allows us to speak of \(\dom(\rho)\), \( (\rho)_k\), \( (\rho)_{i..j}\) and \( (\rho)_{i..\infty}\).

We say that {\em $\rho$ distributes to a multiset $R$ of runs of $C$} if there exists 
an {\em embedding function} $\psi$ that assigns to each run $\rho' \in R$ and to each 
position $i \in \dom(\rho')$ a position $\psi(\rho', i) \in \dom(\rho)$, and 
satisfies the following properties:
\begin{compactitem}
\item \( (\rho')_i = (\rho)_{\psi(\rho',i)}\).
(A rule occurrence in $\rho'$ is matched to another occurrence of the same rule in $\rho$.)
\item $\psi$ is surjective.
(For every position $k \in \dom(\rho)$ there is at least one $\rho' \in R$ and a position $i \in \dom(\rho')$ such that $\psi(\rho',i)=k$, or, informally, $R$ ``covers'' $\rho$.)
\item If $i < j$, then $\psi(\rho', i) < \psi(\rho', j)$.
(So $\psi(\rho',1)\psi(\rho',2)\cdots$ is a scattered subword of $\rho$.)
\end{compactitem}
\begin{example} 
\label{ex:dist}
Let $\rho$ be a run of a PDM $P$. Below are 
two distributions $R$ and $S$ of \(\rho = r_ar_br_br_cr_cr_c\). On the left we have \(R = \{\rho'_1, \rho'_2, \rho'_3\}\), and its embedding function \(\psi\); on the right \(S = \{\sigma'_1, \sigma'_2, \sigma'_3\}\), and 
its function \(\psi'\).

\noindent
\hspace*{\stretch{1}}
\(\begin{array}{r|ccc}
\psi    &  1  &  2  &  3   \\ \hline
\rho'_1 &  1  &  6  &      \\
\rho'_2 &  1  &  2  &  5   \\
\rho'_3 &  1  &  3  &  4
\end{array}\) %
\hspace{\stretch{1}} %
\(\begin{array}{rccccccc}
        &   &   1  &  2  &  3  &  4  &  5  & 6  \\ \hline
\rho    & = &  r_a & r_b & r_b & r_c & r_c & r_c\\
\rho'_1 & = &  r_a &     &     &     &     & r_c\\
\rho'_2 & = &  r_a & r_b &     &     & r_c &    \\
\rho'_3 & = &  r_a &     & r_b & r_c &     &    
\end{array}\) %
\hspace*{\stretch{1}} %
\rule[-1cm]{.3mm}{2cm} %
\hspace*{\stretch{1}} %
\(\begin{array}{r|ccccccc}
\psi'     &  1  &  2  &  3  &  4  \\ \hline
\sigma'_1 &  1  &  4  &     &     \\
\sigma'_2 &  1  &  2  &  4  &  5  \\
\sigma'_3 &  1  &  3  &  5  &  6
\end{array}\) %
\hspace{\stretch{1}} %
\(\begin{array}{rccccccc}
          &   &   1  &  2  &  3  &  4  &  5  & 6  \\ \hline
\rho      & = &  r_a & r_b & r_b & r_c & r_c & r_c\\
\sigma'_1 & = &  r_a &     &     & r_c &     &    \\
\sigma'_2 & = &  r_a & r_b &     & r_c & r_c &    \\
\sigma'_3 & = &  r_a &     & r_b &     & r_c & r_c   
\end{array}\)%
\hfill\hspace{0pt}
\smallskip
\end{example}
\begin{lemma}[Distributing lemma]
  \label{lem:distributing}
  Let $u \in L(D)$, and let $M$ be a multiset
  of words of $L(C)$ compatible with $u$. Let $v \in M$ and let
  $\rho$ an accepting run of $v$ in $C$ that distributes to a multiset
  $R$ of runs of $C$, and let $M_R$ the corresponding multiset of words.
  Then $M \ominus \{v\} \oplus M_R$ is also compatible with $u$.
\end{lemma}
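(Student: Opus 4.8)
I want to show that if a leader word $u \in L(D)$ is compatible with a multiset $M$ of contributor words, and I replace one word $v \in M$ (with accepting run $\rho$ that distributes to a multiset $R$ of runs, yielding words $M_R$) by $M_R$, then $u$ remains compatible with $M \ominus \{v\} \oplus M_R$. Let me think about what the machinery gives me.

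Compatibility of $u$ with $M = \{v_1, \ldots, v_k\}$ means there's a witness $s \in \S$ such that $u \parallel s \parallel \between_{i=1}^k v_i$ is nonempty — i.e., $s$ is a single interleaving of $u$ and all the $v_i$ that respects the store discipline, with each read seeing the last-written value.

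Let me think about the distributing hypothesis. The run $\rho$ distributes to $R$ via an embedding function $\psi$. Each $\rho' \in R$ picks out a scattered subsequence of the rule-occurrences of $\rho$ ($\psi(\rho',\cdot)$ is strictly increasing), the matched rules agree ($(\rho')_i = (\rho)_{\psi(\rho',i)}$), and surjectivity means every position of $\rho$ is covered by at least one $\rho' \in R$. So the action performed at position $k$ of $\rho$ is performed by *at least one* contributor in $R$ at the corresponding point, and possibly by several.

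**The approach.** I would fix a witness $s$ for the compatibility of $u$ and $M$, and transform it into a witness $s'$ for $u$ and $M \ominus \{v\} \oplus M_R$. The key idea is that $s$ already contains, as a subword, the sequence of read/write actions generated by $v$ (the word of $\rho$); call these positions $P_v \subseteq \dom(s)$. The new multiset differs from the old only in that the single contributor executing $v$ is replaced by several contributors (the elements of $R$), which *collectively* perform the same action at each position of $\rho$.

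**Key steps.**

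\begin{enumerate}
\item \emph{Locate $v$ inside the witness.} Since $u \parallel s \parallel \between_i v_i$ is nonempty, the interleaving assigns to the contributor executing $v$ a specific strictly increasing sequence of positions $p_1 < p_2 < \cdots$ in $\dom(s)$ where the successive symbols $(v)_1,(v)_2,\ldots$ occur. These are exactly the actions of $\rho$.

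\item \emph{Replace each action by stuttered copies, one per distributed run.} At each position $p_\ell$ (carrying the action $(\rho)_\ell$), the set $\psi^{-1}(\ell) = \{(\rho',i) : \psi(\rho',i) = \ell\}$ tells me which runs of $R$ must perform that same action. By surjectivity this set is nonempty, so I can assign the ``original'' occurrence at $p_\ell$ to one designated member and use the \emph{stuttering property} to insert additional copies of the very same symbol $(s)_{p_\ell}$ immediately after $p_\ell$, one for each of the remaining pairs in $\psi^{-1}(\ell)$. The stuttering property guarantees that replacing $(s)_{p_\ell}$ by $(s)_{p_\ell}^{m_\ell}$ (where $m_\ell = |\psi^{-1}(\ell)|$) keeps the result in $\S$ and keeps compatibility, since the inserted symbols are all contributor actions.

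\item \emph{Reassign the copies to the distributed runs.} In the stuttered witness $s'$, at the block of $m_\ell$ identical symbols sitting at position $\ell$, I hand out one copy to each pair $(\rho',i) \in \psi^{-1}(\ell)$. Because $\psi(\rho',\cdot)$ is strictly increasing in $i$, each run $\rho'$ receives its symbols in the correct left-to-right order, so the subsequence of $s'$ allotted to $\rho'$ spells exactly the word of $\rho'$, i.e., the corresponding element of $M_R$.

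\item \emph{Conclude.} The leader projection of $s'$ is unchanged (I only touched contributor symbols), so $\proj_{\Sigma_\D}(s')$ is still a prefix of $u$; the projections onto the other contributors $v_i$ ($i \neq $ the one for $v$) are unchanged; and the newly allotted subsequences realize precisely the words of $M_R$. Hence $u \parallel s' \parallel (\between_{\rho' \in R} \text{word}(\rho')) \between (\between_{i : v_i \neq v} v_i)$ is nonempty, which is exactly compatibility of $u$ with $M \ominus \{v\} \oplus M_R$, with $s'$ as witness.
\end{enumerate}

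**Main obstacle.** The delicate point is step 3: verifying that after inserting the stuttered copies and redistributing them, each distributed run $\rho'$ reads the correct value at each of its read-actions. This is where the store discipline matters. Stuttering inserts identical copies of a symbol $(s)_{p_\ell}$; if that symbol is a read $\rb(g)$, all copies are reads of the same value $g$, and since the store value is unchanged across a block of identical reads, every copy is a legal read in $\S$ — this is exactly what the stuttering property asserts. If it is a write $\wb(g)$, the copies write the same value, leaving the store in the same state $g$, so subsequent reads are unaffected. Thus the store value seen by any contributor in $R$ at position $\ell$ equals the value seen by $v$ in the original witness, and since $(\rho')_i = (\rho)_{\psi(\rho',i)}$ the actions match and the reads succeed. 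Once I have pinned down the bookkeeping that maps each stuttered copy back to a unique $(\rho',i)$ respecting order, the conclusion follows directly from the definitions of $\between$ and $\parallel$; the real content is entirely carried by the stuttering property, invoked once per covered position.
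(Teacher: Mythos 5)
Your proof is correct and follows essentially the same route as the paper: fix a witness $s$, stutter each contributor symbol coming from $v$ with multiplicity equal to the number of runs of $R$ covering that position of $\rho$ (surjectivity guarantees this is at least one), use closure of $\S$ under stuttering of contributor actions to keep the result a legal store word, and use monotonicity of the embedding function to hand out the copies so that each $\rho'\in R$ receives its word in order. The paper's own proof is just a terser version of exactly this construction, so nothing further is needed.
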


\smallskip
\noindent\textit{The Boundedness Lemma.}
We are interested in distributing a multiset of runs of $C$ into another multiset
with, loosely speaking, ``better'' effective stack height.

Fix a run $\rho$ of $C$ and a distribution $R$ of $\rho$ with
embedding function $\psi$. In
Example~\ref{ex:dist}, $(\rho)_{1..4}$ is distributed
into $(\rho'_1)_{1..1}$, $(\rho'_2)_{1..2}$ and $(\rho'_3)_{1..3}$.  Assume $\rho$
is executed by one contributor. We can replace it by 3 contributors executing
$\rho'_1,\rho'_2,\rho'_3$, without the rest of the network noticing any difference. 
Indeed, the three processes can execute $r_a$ immediately after each other, which for the rest of
the network is equivalent to the old contributor executing one $r_a$. Then we
replace the execution of  \((\rho)_{2..4}\) by \((\rho'_2)_2
(\rho'_3)_{2..3}\). 

We introduce some definitions
allowing us to formally describe such facts. Given $k \in \dom(\rho)$, we denote by $c(\rho,k)$ the
configuration reached by $\rho$ after $k$ steps.
We naturally extend this notation to define $c(\rho, 0)$ as the initial configuration.
We denote by ${\it last}_{\psi}(\rho', i)$ 
the largest position \(k \in\dom(\rho')\) such that $\psi(\rho',k)\leq i$ (similarly if none exists, we fix ${\it last}_\psi(\rho',i) =0$).
Further, we denote by $c_\psi(\rho',k)$ the configuration reached by $\rho'$ after $k$ steps of $\rho$, 
that is, the configuration reached by $\rho'$ after the execution of ${\it last}_{\psi}(\rho', k)$ transitions; formally,
 \(c_\psi(\rho',k) = c(\rho', {\it last}_{\psi}(\rho', k))\).
\begin{example}
\label{ex:dist2}
Let $\rho$, $R$, and $\psi$ as in Example~\ref{ex:dist}. 
Assuming that the PDM $P$ has one single state $p$, stack symbols $\{\bot, \alpha\}$ such
that the three rules \(r_a,r_b\) and \(r_c\) are given by $r_a \colon p\bot \rightarrow p\alpha\bot$, $r_b \colon p\alpha \rightarrow p\alpha\alpha$, 
and $r_c \colon p\alpha \rightarrow p$, then we have $c(\rho,5)=p\alpha\bot$. Further, 
${\it last}_\psi(\rho'_1, 5)=1$, ${\it last}_\psi(\rho'_2, 5)=3$, and ${\it last}_\psi(\rho'_3, 5)=3$. Finally,
$c_\psi(\rho'_1, 5)= p \alpha\bot$, $c_\psi(\rho'_2, 5)=p\alpha\bot$, and $c_\psi(\rho'_3, 5) = p\alpha\bot$. 
\end{example}
Given $Z \in \dom(\rho)$ and $K \in \mathbb{N}$, we say that a distribution $R$ of $\rho$
is {\em $(Z,K)$-bounded} if for every $\rho' \in R$ and for every $i \leq Z$, the effective 
stack height of $c_\psi(\rho',i)$ is bounded by $K$. Further, we say that 
$R$ is {\em synchronized} if for every configuration $c(\rho, i)$ with  effective stack height 1 
and for every $\rho' \in R$, \(c_\psi(\rho', i) = c(\rho,i)\) (same control state and same stack content), and also has effective stack 
height 1.\footnote{
	Notice that the effective stack height of a configuration depends on the run it belongs to, 
	and so $c(\rho, i) = c_\psi(\rho', i)$ does not necessarily imply that they have the same effective stack height.
}
The Boundedness Lemma states that there is a constant $N$, depending only on $C$,
such that for every run $\rho$ of $C$ and for every $Z \in \dom(\rho)$ there is 
a $(Z,N)$-bounded and synchronized distribution $R_Z$ of $\rho$. The key of the proof
is the following lemma.
\begin{lemma}
\label{lem:flattening}
Let $N = 2 |Q_C|^2 |\Gamma_C|+1$. Let $\rho$ be a run of $C$ and $Z \in \dom(\rho)$
be the first position of $\rho$ such that $c(\rho, Z)$ is not $N$-bounded. 
Then there is a $(Z,N)$-bounded and synchronized distribution of $\rho$. 
\end{lemma}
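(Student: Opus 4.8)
The statement of Lemma~\ref{lem:flattening} fixes $Z$ as the \emph{first} position where the effective stack height of $c(\rho,Z)$ exceeds $N$. My plan is to exploit this minimality: for every position $i<Z$ the configuration $c(\rho,i)$ is $N$-bounded, so the active prefix of its stack is short, and the ``deep'' portion of the stack (everything below the top $N$ symbols) is frozen from position $i$ onwards up to $Z$. By Proposition~\ref{prop:esh1often} there are infinitely many positions of effective stack height $1$; let $0=i_0<i_1<i_2<\cdots$ be those that occur in the window $[0,Z]$. At each such $i_m$ the whole stack below the top symbol is dark (never touched again in the remainder of $\rho$), so the run factors into \emph{segments} $(\rho)_{i_m+1..i_{m+1}}$, each of which begins and ends with effective stack height $1$ and, crucially, each of which pushes and pops only relative to the single symbol exposed at $i_m$.

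\textbf{The combinatorial core.} First I would bound the effective stack height reached \emph{inside} one such segment. Within a segment the run starts and returns to stack height $1$, so it behaves like a balanced (Dyck-like) computation over the PDM's control states and stack symbols; if the effective height inside it ever exceeds $N=2|Q_C|^2|\Gamma_C|+1$, then by pigeonhole some pair $(\text{state},\text{top symbol})$ together with a pushing/popping phase repeats, yielding a shorter sub-run of $C$ that reaches the same configuration with strictly smaller net stack growth. This is exactly the phenomenon the Distributing Lemma (Lemma~\ref{lem:distributing}) is designed to absorb: the repeated block can be peeled off into a \emph{separate} copy of $C$ (a separate run $\rho'\in R$) that performs the push-segment independently, while the original run is shortened. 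Concretely I would define the distribution $R$ so that each $\rho'$ follows $\rho$ transition-for-transition until it reaches a point of height $1$, and thereafter, whenever $\rho$ would climb above $N$ within the current segment, the extra ``excursion'' is handed off to fresh copies. The bound $2|Q_C|^2|\Gamma_C|+1$ is what guarantees that after at most $N$ push-steps a repeated signature is available to cut, so no $\rho'$ ever exceeds effective height $N$ on positions $\le Z$; this gives $(Z,N)$-boundedness.

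\textbf{Synchronization.} Next I would verify the \emph{synchronized} condition: at every position $i\le Z$ where $c(\rho,i)$ has effective stack height $1$, each $\rho'\in R$ must sit at the identical configuration $c_\psi(\rho',i)=c(\rho,i)$ with effective stack height $1$. The idea is to arrange the embedding $\psi$ so that the handing-off of excursions only ever happens \emph{strictly inside} a segment and is fully resolved before the segment ends: every copy $\rho'$ either has completed the whole segment (returning to the same height-$1$ configuration as $\rho$) or has been started only to perform a self-contained excursion that it likewise finishes before $i_{m+1}$. Because the deep stack below the exposed symbol is frozen across $[i_m,i_{m+1}]$ in $\rho$, and because each excursion is balanced, every $\rho'$ is forced back to the common height-$1$ configuration precisely at the segment boundaries. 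Verifying that $\psi$ remains order-preserving and surjective under this peeling (the three defining properties of a distribution) is the bookkeeping I would carry out last.

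\textbf{The main obstacle.} The hard part will be the peeling construction itself: showing that an excursion exceeding height $N$ can always be cut at a repeated signature \emph{and} re-attached as a valid independent run of $C$ that starts and ends at height $1$, all while keeping the embedding function order-preserving and surjective and never creating a copy that violates the $N$-bound \emph{retroactively} at an earlier position. In particular I expect the delicate point to be that one excursion may itself be shortened by cutting, which shortens the positions of $\rho$ that later excursions must embed into; managing this so that the synchronization at the height-$1$ boundaries is preserved simultaneously for all copies, rather than one at a time, is where the real care is needed. The matching lower-bound claim $N\in\Omega(|Q_C|^2|\Gamma_C|)$ would be handled separately by exhibiting a PDM whose only way to reach a target configuration forces an excursion of length $\Theta(|Q_C|^2|\Gamma_C|)$ that cannot be distributed into lower-height copies.
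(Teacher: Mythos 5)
There is a genuine gap: you have the right flavor (pigeonhole on repeated stack signatures, then cut and redistribute), but the construction you defer to ``the main obstacle'' is precisely where the proof lives, and the specific choices you sketch would not work. First, your pigeonhole signature is too weak. The paper looks at the $N$ symbols $\alpha_1,\dots,\alpha_N$ of the active prefix of $c(\rho,Z)$ and, for each, records the \emph{triple} $(q_i,\alpha_i,q'_i)$: the control state just after $\alpha_i$ is pushed \emph{and} the control state just after it is popped (the pop positions $\grave{p}_i$ all lie \emph{beyond} $Z$). Matching only $(\text{state},\text{top symbol})$ at push time, as you propose, is not enough: to excise a push-segment you must simultaneously excise the matching pop-segment, and the run only re-glues if the states agree at \emph{both} cut boundaries. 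This is also why the bound is $2|Q_C|^2|\Gamma_C|+1$ rather than something in $|Q_C||\Gamma_C|$. Second, you need \emph{three} indices $j_1<j_2<j_3$ with the same triple, not two. With two you could shorten $\rho$ to a single run $\rho_a$, but $\{\rho_a\}$ is not a distribution: the embedding must be surjective, i.e., every transition of $\rho$ must be covered by some run in the multiset. The paper therefore builds exactly two runs, $\rho_a$ deleting the nested loop between $j_1$ and $j_2$ and $\rho_b$ deleting the one between $j_2$ and $j_3$; the deleted portions are disjoint, so together they cover $\rho$. Your alternative --- handing the excursion off to ``fresh copies'' --- does not yield legal runs of $C$ (every member of a distribution must be a full run from the initial configuration), and once the original run skips an excursion its stack differs from $\rho$'s, so it cannot ``follow $\rho$ transition-for-transition'' afterwards unless the excised block is a matched push/pop pair with agreeing boundary states --- which is exactly the point you are missing.

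A smaller structural issue: your segment decomposition between consecutive effective-height-$1$ positions conflates this lemma with the Boundedness Lemma that iterates it. Segments between such positions are not balanced Dyck words (effective stack height $1$ does not mean stack height $1$; the dark suffix can grow arbitrarily), and the excursion responsible for the violation at $Z$ is not resolved ``strictly inside'' a segment in any useful sense --- its pop half lies arbitrarily far past $Z$. The paper's synchronization argument is much more direct: $\rho_a$ and $\rho_b$ agree with $\rho$ configuration-for-configuration on $[1,\acute{p}_{j_1}]$ and on $[\grave{p}_{j_1},\infty)$, and every position where $\rho$ has effective stack height $1$ lies in one of these two zones, so synchronization is automatic. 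I would also note that the lower-bound claim $N\in\Omega(|Q_C|^2|\Gamma_C|)$ belongs to Theorem~\ref{th:redPDM}, not to this lemma.
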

\noindent\textit{Proof sketch.}
We construct a $(Z,N)$-bounded and synchronized distribution $\{\rho_a, \rho_b\}$ of $\rho$.
Let $\alpha_{N+1}\alpha_N \cdots \alpha_1w_0$ be the stack content of $c(\rho,Z)$. 
Define \(\{ \acute{p}_1, \grave{p}_1, \acute{p}_2, \grave{p}_2, \ldots, \acute{p}_N, \grave{p}_N\} \subseteq \dom(\rho) \) such that for each \(i\), \(1\leq i\leq N\) we have
\(c(\rho,\acute{p}_i)\) and \(c(\rho,\grave{p}_i)\)
are the configurations immediately after the symbol $\alpha_i$ in $c(\rho,Z)$
is pushed, respectively popped and such that the stack content of each
configuration between \(\acute{p}_i\) (included) and \(\grave{p}_i\) (excluded)
equals \(w_p \alpha_i \alpha_{i-1} \cdots \alpha_1 w_0\) for some \(w_p \in
\Gamma_C^*\). We get \(c(\rho,\acute{p}_i)	= q_i \alpha_i \alpha_{i-1} \ldots \alpha_0 w_0\) and \(c(\rho,\grave{p}_i)	= q'_i \alpha_{i-1} \ldots \alpha_0 w_0\) for some \(q_i, q'_i\in Q_C\).
Observe that the following holds:\linebreak
\(\acute{p}_1 < \cdots < \acute{p}_{N-1} < \acute{p}_N < Z < \grave{p}_N < \grave{p}_{N-1} < \cdots < \grave{p}_1\).

Since $N = 2 |Q_C|^2 |\Gamma_C|+1$, by the pigeonhole principle we 
find $q,\alpha,q'$ and three indices $1 \leq j_1 < j_2 < j_3 \leq N$ such that 
by letting
$w_1= \alpha_{j_1-1} \cdots \alpha_1$, 
$w_2= \alpha_{j_2-1} \cdots \alpha_{j_1}$ and $w_3 = \alpha_{j_3-1} \cdots \alpha_{j_2}$, we have: 
{
\setlength\abovedisplayskip{1pt}
\setlength\belowdisplayskip{1pt}
\begin{multline*}
	\rho =  (\rho)_{1..\acute{p}_{j_1}} 
\ [q\alpha w_1]\ 
(\rho)_{\acute{p}_{j_1+1}..\acute{p}_{j_2}}
\ [q\alpha w_2 w_1]\ 
(\rho)_{\acute{p}_{j_2+1}..\acute{p}_{j_3}}
\ [q\alpha w_3 w_2 w_1] \\
(\rho)_{\acute{p}_{j_3+1}..\grave{p}_{j_3}}
\ [q'w_3 w_2 w_1]\ 
(\rho)_{\grave{p}_{j_3+1}..\grave{p}_{j_2}}
\ [q' w_2 w_1]\ 
(\rho)_{\grave{p}_{j_2+1}..\grave{p}_{j_1}}
\ [q' w_1]\ 
(\rho)_{\grave{p}_{j_1+1}..\infty}\enspace .
\end{multline*}
}
Here, the notation indicates that we reach configuration $[q\alpha w_1]$ after $(\rho)_{1..\acute{p}_{j_1}}$,
the configuration $[q\alpha w_2w_1]$ after $(\rho_{1..\acute{p}_{j_2}}$, etc.

Now define \(\rho_a\) from \(\rho\) by simultaneously deleting 
\( (\rho)_{\acute{p}_{j_1+1}..\acute{p}_{j_2}}\) and 
\( (\rho)_{\grave{p}_{j_2+1}..\grave{p}_{j_1}}\). We similarly define \(\rho_b\) 
by deleting 
\( (\rho)_{\acute{p}_{j_2+1}..\acute{p}_{j_3}}\) and 
\( (\rho)_{\grave{p}_{j_3+1}..\grave{p}_{j_2}}\).
The following shows that \(\rho_a\) defines a legal run since it is given by
{
\setlength\abovedisplayskip{3pt}
\setlength\belowdisplayskip{3pt}
\[ (\rho)_{1..\acute{p}_{j_1}} 
\ [q\alpha w_1]\ 
(\rho)_{\acute{p}_{j_2+1}..\acute{p}_{j_3}}
\ [q\alpha w_3 w_1] \\
(\rho)_{\acute{p}_{j_3+1}..\grave{p}_{j_3}}
\ [q' w_3 w_1]\ 
(\rho)_{\grave{p}_{j_3+1}..\grave{p}_{j_2}}
\ [q' w_1]\ 
(\rho)_{\grave{p}_{j_1+1}..\infty}\enspace .\]
}
A similar reasoning holds for \(\rho_b\).
Finally, one can show that
$\{\rho_a, \rho_b\}$ is a $(Z,N)$-bounded and synchronized distribution of $\rho$.

\begin{lemma}[Boundedness Lemma]
\label{lem:Z-N-bounded}
Let $N = 2|Q_C|^2 |\Gamma_C|+1$, and let $\rho$ be a run of $C$. For every $Z \in \dom(\rho)$ there is 
an $(Z,N)$-bounded and synchronized distribution $R_Z$ of $\rho$.
\end{lemma}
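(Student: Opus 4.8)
The plan is to prove the Boundedness Lemma by iterating the flattening lemma (Lemma~\ref{lem:flattening}), which removes a single ``overflow'', until every position up to $Z$ is tamed. First I would record that distribution \emph{composes}: if $R$ distributes $\rho$ with embedding $\psi$ and some $\rho' \in R$ in turn distributes to $R'$ with embedding $\psi'$, then $\bigl(R \ominus \{\rho'\}\bigr) \oplus R'$ distributes $\rho$ via the composed embedding $(\rho'', i) \mapsto \psi\bigl(\rho', \psi'(\rho'', i)\bigr)$; surjectivity, rule-matching, and monotonicity are immediate from those of $\psi$ and $\psi'$. Moreover, both the $(Z,N)$-boundedness predicate and synchronization are stable under this composition, since both are phrased at positions of the reference run through $c_\psi(\cdot, i)$. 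Synchronization in particular pins down the configurations at the effective-stack-height-$1$ positions, which by Proposition~\ref{prop:esh1often} occur infinitely often and are therefore shared by all runs of every synchronized distribution; at such a shared height-$1$ position of $\rho$ the two embeddings agree and the composed distribution again reaches $c(\rho,i)$ with effective stack height $1$.

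With composition in hand I would run the following worklist procedure on $\rho$ and the target $Z$. Start from the trivial (synchronized, identity-embedding) distribution $R_0 = \{\rho\}$. As long as the current distribution $R$ fails to be $(Z,N)$-bounded, pick a run $\rho' \in R$ that violates the bound, i.e.\ one for which the effective stack height of $c_\psi(\rho', i)$ exceeds $N$ at some $\rho$-position $i \le Z$, and let $b(\rho')$ be the least such $i$. Applying Lemma~\ref{lem:flattening} to $\rho'$ at its first overflow (which maps to $b(\rho')$) replaces $\rho'$ by two runs $\rho'_a, \rho'_b$ that form a synchronized distribution of $\rho'$ bounded up to $b(\rho')$; composition yields a new synchronized distribution of $\rho$. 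By construction neither new run overflows at or below $b(\rho')$, so each has its first overflow strictly beyond $b(\rho')$ (or none within $Z$). When the loop halts, $R$ is by definition $(Z,N)$-bounded, and it is synchronized throughout since every step preserves synchronization; this $R$ is the desired $R_Z$.

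The main obstacle is termination of this loop together with the index bookkeeping it hides. I would measure progress by the finite multiset $\beta(R) = \{\,b(\rho') : \rho' \in R \text{ overflows some } i \le Z\,\}$, ordered by the well-founded order on $\{1,\dots,Z\}$ in which larger positions count as smaller. Each flattening step deletes one occurrence $b(\rho')$ and inserts only values strictly greater than $b(\rho')$ (the first overflows of $\rho'_a, \rho'_b$, if any remain $\le Z$), which is a strict decrease in the Dershowitz--Manna multiset extension; well-foundedness then forces termination after finitely many steps. The delicate point I would treat carefully is translating ``first overflow position'' between a run and its flattened children through the composed embeddings, and verifying that Lemma~\ref{lem:flattening}'s guarantee --- stated for the \emph{first} globally bad position of a single run --- indeed certifies, after composition, that no new overflow appears at or below $b(\rho')$. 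The synchronization condition, forcing all runs to coincide at the effective-stack-height-$1$ positions, is precisely what keeps these translations consistent.
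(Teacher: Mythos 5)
Your proposal is correct and follows essentially the same route as the paper: both iterate the flattening lemma (Lemma~\ref{lem:flattening}) through a composition-of-distributions lemma that preserves synchronization and $(Z,N)$-boundedness. The only difference is organizational --- the paper inducts on $Z$, repairing exactly the runs whose first overflow maps to position $Z$ before moving on, whereas you schedule repairs arbitrarily and justify termination with a Dershowitz--Manna multiset ordering on first-overflow positions; both hinge on the same delicate translation between a run's own first overflow and the corresponding position of the reference run, which you correctly flag as the point needing care.
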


The proof is by induction on $Z$. The distribution
$\psi_{Z+1},R_{Z+1}$ is obtained from $\psi_{Z}, R_Z$ by distributing
each run $\rho'$ of $R_Z$ to a $(\psi_Z(\rho',Z)+1,N)$-bounded run (applying Lemma~\ref{lem:flattening}).

\smallskip
\noindent\textit{Proof sketch of Theorem~\ref{th:redPDM}.}
Given a run $\sigma$ of $D$ compatible with a finite multiset $M$ of runs of $C$, we
construct another run $\tau$ of $D$, and a multiset $R$ of $N$-bounded runs of $C_N$ such that $\tau$ and $R$ are
compatible as well. We consider only the special  case in which $M$ has one single
element $\rho$ (and one single copy of it). Since $\sigma$ is compatible with
$\rho$, we fix a witness \(\pi \in \S\) such that \(\pi \in \sigma
\between \rho\).  We construct a ``lasso run'' out of $\pi$ of the form \( \lambda_1 [\lambda_2]^{\omega} \). 
It suffices to find two positions in $\pi$ where the content of
the store is the same, the corresponding configurations of the
leader are the same, and similarly for each contributor; the fragment between these two positions can be repeated
(is ``pumpable''). 

Given a position $i$ of $\pi$, let $i_\rho$ and $i_\sigma$ denote the corresponding
positions in $\rho$ and $\sigma$.\footnote{
	Position \(p\) in \(\pi\) defines position \(p_{\sigma}\) in \(\sigma\) such that \( (\sigma)_{1..p_\sigma} = \proj_{\Sigma_{\D}}( (\pi)_{1..p})\), 
	similarly \(p_\rho\) is defined as satisfying \((\rho)_{1..p_\rho} = \proj_{\Sigma_{\C}}((\pi)_{1..p})\).} 
Further, for every $Z$ let $R_Z$ be a $(Z, N)$-bounded 
and synchronized distribution of $\rho$ with embedding function $\psi$ (which exists by the Boundedness Lemma). 
Let $R_Z(i_\rho) = \{ c_\psi(\eta, i_\rho) \mid \eta \in R_Z \}$ denote the multiset of configurations reached by the runs
of $R_Z$ after $i$ steps of $\pi$. 
Using Proposition~\ref{prop:esh1often} and that 
\begin{inparaenum}[(i)]
\item the store has a finite number of values, 
\item $R_Z$ is $(Z, N)$-bounded, and
\item there are only finitely many active prefixes of length at most $N$,  
\end{inparaenum}
we can apply the pigeonhole principle to find a sufficiently large number $Z$ and three positions 
$i < j < k \leq Z$ in $\pi$ satisfying the following properties:
\begin{compactitem}
\item[(1)] The contents of the store at positions $i$ and $k$ of $\pi$ coincide.
\item[(2)] The configurations \(c(\sigma,i_\sigma)\) and \( c(\sigma,k_\sigma)\) of the leader 
have effective stack height 1, same topmost stack symbol and same control state. 
Further, $\sigma$ enters and leaves some accepting state between  $i_\sigma$ and $k_\sigma$.
\item[(3)] The configuration $c(\rho, j_\rho)$ has effective stack height 1.
\item[(4)] For every configuration of $R_Z(i_\rho)$ there is a configuration of $R_Z(k_\rho)$ 
with the same control state and active prefix, and vice versa.
\end{compactitem}
Condition  (4) means that, after removing the dark suffixes, $R_Z(i_\rho)$ and $R_Z(k_\rho)$
contain the same pruned configurations, although possibly a different number of times
(same set, different multisets). If we obtain the same multiset, then 
the fragment of $\pi$ between positions $i$ and $k$ is pumpable by (1) and (2), and we 
are done. Otherwise, we use (3) and the fact that $R_Z$ is synchronized (which had not been used so far) to 
obtain a new distribution in which the multisets coincide. This is achieved by 
adding new runs to $R_Z$.

%
%
%
%
% end input /Users/pierreganty/ppsnaw/liveness/reduction2fsa.tex
 %

\bibliographystyle{splncs03}

\newpage
\section{Appendix}
\makeatletter{}%
% start input /Users/pierreganty/ppsnaw/liveness/appendix.tex
%
%
%

\subsection{Proofs of Section \ref{sec:fsafsa}}

\begin{proof}[of Lemma~\ref{lem:cover}]
	The proof is by induction on the length of the abstract path of \({\alpha}{\it TS}\) from
	\({\alpha}X_0 = ( q_{0D}, \#, \alpha(\vec{P_0}))= ( q_{0D}, \#, \{q_{0C}\})\) to the configuration \( (q_D,g,Q) \).

	The base case corresponds to \((q_D,g,Q) = ( q_{0D}, \#,
	\{q_{0C}\})\). For every $\vec{p} \in \gamma(\{q_{0C}\})$ we can just 
        take $\vec{\hat{p}}=\vec{p}$.

	For the inductive case (in which the abstract path has \(n > 0\) transitions), let 
	\( ( q_{1D}, g_1, Q_1) \absby{t} ( q_D, g, Q)\) be the last transition of the
	path, and assume \(t\in T_C\) (in the case \(t\in T_D\) we have \(Q_1 = Q\) and 
        the result follows immediately from the induction hypothesis).
	Let \(q_C\) and \({q_C}'\) be the source and target state of \(t\), respectively.
	It follows from the definition of \(\absby{}\) that:
	\begin{equation}
		Q = \alpha\left( \, \left\{ \vec{p'} \mid \exists \vec{p}\in \gamma(Q_1) \colon \vec{p} \geq \vec{q_C} \land \vec{p'} = \vec{p} - \vec{q_C} + \vec{q_C}' \right\} \, \right)\enspace .	
		\label{eq:Q}
	\end{equation}
	By (B) and \eqref{eq:Q} we have \(Q = Q_1 \cup \{ {q_C}' \} \). If 
        $q_C' \in Q_1$ then we again get \(Q_1 = Q\), and 
        the result follows from the induction hypothesis. So assume
        $q_C' \notin Q_1$.  Given an arbitrary population \(\vec{p}\in \gamma(Q)\), 
        let \(d = \vec{p}({q_C}')\) and consider the population 
        \(\vec{p_1} = \vec{p} - d \vec{q_C}' + d \vec{q_C}\).
        We have \(\vec{p_1}(q_C') = d - d = 0\), and so \(\vec{p_1} \in \gamma(Q_1)\). By
	induction hypothesis, there exists \(\vec{\hat{p}_1} \in
	\gamma(Q_1)\) such that \(\vec{\hat{p}_1} \geq \vec{p_1}\) and
	\(( q_{1D}, g_1,\vec{\hat{p}_1})\) is reachable from \(X_0\) in \({\it TS}\).
        Since \(\vec{\hat{p}_1} \geq \vec{p_1} \geq d\vec{q_C}\), the mapping
        \(\vec{\hat{p}} = \vec{\hat{p}_1} - d\vec{q_C} + d\vec{q_C}'\) is non-negative, 
        and therefore a population. Now let \(d\) contributors of $\vec{\hat{p}_1}$ 
        execute \(t \in \delta_C \) (which is always possible in a non-atomic network).
        We then have $$ (q_{1D}, g_1, \vec{\hat{p}_1}) \by{t^d} (q_D, g,
	\vec{\hat{p}_1} - d\vec{q_C} + d\vec{q_C}') = (q_D, g, \vec{\hat{p}}) $$
        \noindent  and we are done. \qed
\end{proof}

\begin{proof}[of Lemma~\ref{lem:realizability}]
\noindent ($\Rightarrow$) Assume the cycle is realizable. Then there is an infinite path  
$c_0 \by{t_1} c_1 \by{t_2} c_2 \cdots$ of ${\it TS}$ such that 
\(c_0 \in \gamma(a_0), \ldots, c_{n-1}\in\gamma(a_{n-1})\) and \(c_k \in \gamma(a_{(k\mod n)})\)
for every $k \geq n$, and the transitions match. 
Let $\vec{p}_i$ be the population of $c_i$.
Since the number of contributors of a population 
remains constant across transitions, we have $|\vec{p}_i|= |\vec{p}_0|$ for every $i \geq 0$. Since there are only finitely
many different populations of a given size, by the pigeonhole principle there exist $k_1 < k_2$ such that
$\vec{p}_{k_1 n} = \vec{p}_{k_2 n}$. Since
\begin{gather*}
\vec{p}_{k_2 n} = \vec{p}_{k_1 n} + \sum_{i=(k_1 n) +1}^{k_2 n} \Delta(t_i) \enspace ,
\shortintertext{the sum on the right-hand-side is equal to $\vec{0}$. Since}
\sum_{i=(k_1 n) +1}^{k_2 n} \Delta(t_i) = (k_2 - k_1) \sum_{i=1}^{n} \Delta(t_i)
\end{gather*}
\noindent we get $\sum_{i=1}^{n} \Delta(t_i) = \vec{0}$.

\noindent ($\Leftarrow$) Let $a_i = ( q_{Di}, g_i, Q_i)$ for every $0 \leq i \leq n$. 
By (B) we have $Q_0 \subseteq Q_1 \subseteq \cdots \subseteq Q_n = Q_0$, and so
all the $Q_i$ are equal to $Q_0$. Let $\vec{p_0} = \sum_{q \in Q_0} n\vec{q}$ and let
$\vec{p_i} = \vec{p_0} + \sum_{k=1}^{i-1} \Delta(t_k)$ for every $1 \leq i \leq n$.
Then $\vec{p}_i$ is a population for every $0 \leq i \leq n$, since 
$|\sum_{k=1}^{i-1}\Delta(t_k)| \leq \vec{p_0}$. Moreover
$c_0 \by{t_1} c_1 \by{t_2} c_2 \cdots \by{t_{n}} c_n$ is a path of ${\it TS}$ and, since 
$\sum_{i=1}^{n} \Delta(t_i) = \vec{0}$, we have $c_n = c_0$. 
So $c_0 \by{t_{0}} c_{1} \cdots \by{t_{n}} c_n$ is a cycle of \({\it TS}\) that
can be iterated arbitrarily often, which implies that
$a_0 \absby{t_1} a_1 \absby{t_2} a_2 \cdots \absby{t_{n}} a_n$ is realizable.
\qed
\end{proof}

\begin{proof}[of Theorem~\ref{th:fsafsa}]
In the main text we show that the algorithm runs in nondeterministic polynomial time.
We now prove that it is sound and complete. Then we prove that the problem is NP-hard.

\noindent 
{\bf Soundness.} The  algorithm clearly computes a cycle of \(\alpha{\it TS}\) 
satisfying the assumptions of Lemma~\ref{lem:realizability}. So the cycle is realizable.
Let $c_0 \by{t_{0}} c_{1} \cdots $ be the realization of the cycle, and let 
$c_0 = (q_{D0}, g_0, \vec{p}_0)$. Since $a = a_0$ is reachable from the initial abstract configuration,
by Lemma~\ref{lem:cover} there exists a configuration $c_0'=(q_{D0}, g_0, \vec{\hat{p}}_0)$
reachable from \( ( q_{0D}, \#, \vec{P_0}) \) such that $\vec{\hat{p}}_0 \geq \vec{p}_0$.
So the sequence $(t_0 \ldots t_n)^\omega$ can also be executed from $c_0'$.
Since the cycle visits some accepting state $Q_D$ infinitely often, \({\it TS}\)
has an accepting path.

\noindent 
{\bf Completeness.}
Let $c_0 \by{t_1} c_1 \by{t_2} c_2 \cdots$ be an $\omega$-path of ${\it TS}$ on
 which the B\"uchi automaton accepts. Since the number of contributors in the
populations of the path stays constant, there exist, by the pigeonhole
principle, two positions \(i_1 < i_2\) such that \(c_{i_1} = c_{i_2}\), and they are accepting.  
Clearly, we have \(\sum_{k=i_1}^{i_2} \Delta(t_i) = \vec{0}\), 
hence $c_0 \by{t_1} c_1 \by{t_2} c_2 \cdots c_{i_1-1} \by{t_{i_1}} (c_{i_1}
\by{t_{i_1+1}} c_{i_1+1} \cdots \by{t_{i_2}} c_{i_2})^\omega$ is also an
accepting \(\omega\)-path of {\it TS}.
By (A) the path has a counterpart in the abstraction \(\alpha{\it TS}\), that is, there
exists an \(\omega\)-path $a_0 \absby{t_1} a_1 \absby{t_2} \cdots$ in
\(\alpha{\it TS}\) such that \(c_i \in \gamma(a_i)\) for all \(i\geq 0\). 
Notice that
\begin{inparaenum}[\upshape(\itshape i\upshape)]
\item the sequence of transitions fired along these two paths is the same, and
\item the states of \(D\) and the B\"uchi automaton \(A\) coincide in \(c_i\) and \(a_i\) for all \(i\)
\end{inparaenum}
\noindent and so the abstract path is also accepting.
For each \(i\geq 0\), let $a_i= ( q_{Di}, g_i, Q_i)$. We know
from (B) that \(Q_i \subseteq Q_{i+1}\), and so there is a sequence
$Q_1' \subsetneq Q_2' \subsetneq \ldots \subsetneq Q_\ell'$ such that
for every $i \geq 0$ we have $Q_i = Q_j'$ for some $1 \leq j \leq \ell$. 
It follows that every abstract configuration of the path belongs to the 
set ${\cal Q}$, and every transition to the set ${\cal T}$. 

Let $i$ after which the $Q_i$ stabilize, that is, $Q_i = Q_{i+k}$ holds for every $k \geq 0$. 
Therefore, there exist numbers $i, j$ such that
$a_i = a_{i+j}$ and some abstract configuration between $a_i$ and $a_{i+j}$ is accepting.
So the transition system with ${\cal Q}$ as states and ${\cal T}$ as transitions 
contains a configuration $a$ reachable from the initial abstract configuration, and a 
cycle starting and ending at $a$.

\noindent {\bf Hardness.}
NP-hardness follows from the NP-hardness of the safety problem~\cite{egm13},
which asks given a finite-state machine \(D\) for the leader \(\D\) and \(C\)
for the contributor \(\C\)---both of which are languages of finite
words---whether there exists a word of the \( (\D,\C)\)-network \(\N\) that
ends with an occurrence of \(w_d(\$)\).  We say that \(\N\) is \emph{safe}
if{}f it contains no such word. Remark that, for the safety problem, \(\C\) is
assumed to be prefix-closed, hence every state of \(C\) is accepting.  Also, we
assume without loss of generality that every word of \(\D\) ends with
\(w_d(\$)\).  The reduction goes as follows, given an instance of the safety
problem turn \(\D\) into a \(\omega\)-language by appending it
\(r_d(\$)^{\omega}\). We also turn \(\C\) into a \(\omega\)-language by
appending it \( w_c(\#)^{\omega}\) where \(\# \notin \G\) is the corrupted
value nobody else can read. At the machine level this is done by adding to each
accepting state of \(D\) a selfloop labeled with action \(r_d(\$)\) and
interpreting \(\D\) as a B\"uchi automaton. On the other hand since \(\C\) is
prefix closed we have that all its states are accepting. We turn \(C\) into a
FSM by dropping \(F\)---the set of accepting states---and by adding a self-loop
labelled \(w_c(\#)\) to each state. This concludes the hardness proof.\qed
\end{proof}

\subsection{Proofs of Section \ref{sec:pdafsa}}

\begin{proof}[of Theorem~\ref{th:pdafsa}]
Hardness follows from the NP-hardness for $\mathtt{MC}(\texttt{FSM},\texttt{FSM})$.
The non-deterministic polynomial time algorithm is essentially the same as that of Theorem~\ref{th:fsafsa},
except that we have pushdown systems instead of finite-state systems.
As before, we guess a sequence \(Q_1,\ldots,Q_{\ell}\) of subsets of \(Q_C\) such that \(Q_i \subsetneq Q_{i+1}\) for all \(i\), \(0 < i < \ell \leq \vert{Q_C}\vert\).
We construct a B\"uchi PDM whose states $\mathcal{Q}$ are abstract configurations \( Q_D \times (\G\cup\{\#\}) \times \{ \{q_{0C}\} , Q_1,\ldots,Q_{\ell} \} \),
whose stack alphabet is $\Gamma_D$,
whose initial state is $q_0 = ( q_{0D}, \bot, \#, \set{q_{0C}})$,
whose accepting states are accepting abstract configurations (i.e. where $A$ is accepting),
and whose transitions are defined to mimic $\alpha{\it TS}$.

Next, we guess an abstract configuration $q$ and a stack symbol $\gamma$.
We check if there is a word that takes $q_0\bot$ to $q \gamma w$ for some $w\in\Gamma_D^*$.
This check is equivalent to pushdown reachability and can be performed in polynomial time~\cite{BEM97}.
We construct a PDA\footnote{A pushdown automaton (PDA) is a PDM which decides languages of finite words. We define a PDA as a PDM with a set \(F\) of accepting states.}
 \(P^{\circlearrowright}_{q\gamma}\) over finite words that accepts a word  $ u\in \Sigma^*$ if 
there is a run on $u$ from the starting configuration $q \gamma$ to a configuration $q \gamma w'$ for some $w'\in\Gamma_D^*$ that passes
through an accepting abstract configuration.
The PDA \(P^{\circlearrowright}_{q\gamma}\) can be computed in polynomial time. 

Finally, we check if there is a word accepted by the pushdown automaton whose ``weight'' is $\vec{0}$.
For this check, as before, we compute
an (existential) Presburger formula \(\Omega\) for the Parikh
image of \(L(P^{\circlearrowright}_{q\gamma})\). 
The free variables of \(\Omega\) are in one-to-one correspondence with the transitions of the automaton.
We thus adopt the convention that \(x_t\) denotes
the variable corresponding to transition \(t\in \delta_D\cup\delta_C\).
We compute \(\Omega'\) by adding \(\vert{Q_C}\vert\) variables and
\(\vert{Q_C}\vert\) constraints, one per state in \(q_c\in Q_C\): \(\sum_{\mathrm{tgt}(t) =
q_c} x_t = \sum_{\mathrm{src}(t) = q_c} x_t \) where \(\mathrm{tgt}\) and \(\mathrm{src}\) returns the target and source states of the transition passed in argument.
Add also the constraints \(\sum_{t\in \delta_D\cup \delta_C} x_t > 0\) to prevent
\(\vec{0}\) to be returned as a trivial solution.
Finally, we check satisfiability of \(\Omega'\) and accept if $\Omega'$ is satisfiable. This step is in NP because satisfiability
of an existential Presburger formula is in NP \cite{Gradel88}. 

To see that the algorithm is sound, notice that the algorithm accepts if there is a (pushdown) lasso such that the cyclic part has $\vec{0}$ weight.
For an initial population that is large enough (essentially, cubic in the size of the PDM), we can execute the operations on the path to the lasso
and then execute the cycle to come back to the same configuration as the starting point of the lasso.
This lasso can be pumped infinitely often to produce an accepting run of the B\"uchi PDM.

For completeness, we use Lemma~\ref{lem:pdacycle} to deduce that from an accepting run of the B\"uchi PDM, we can find a lasso-shaped path as defined above.
By a similar pigeonhole argument as that of Lemma~\ref{lem:realizability}, we conclude that we can find a cyclic path whose weight is $\vec{0}$.
\qed
\end{proof}

\subsection{Proofs of Section \ref{sec:pdapda}}

\begin{proof}[of Theorem~\ref{th:kbounded}] 
We first prove that if $v$ admits an effectively $k$-bounded run
$\rho$ in $P$ then $w$ also admits a run in $P_k$. Let $\rho= q_0w_0
\by{(v)_1} q_1w_1 \by{(v)_2} \cdots$,
and let $\ap{i}$, resp. $\ds{i}$, denote the active prefix, resp.\ dark
suffix, of $w_i$. Recall that a state of $P_k$ is a pair $(q, s)$,
where $q$ is a state of $P$ and $s$ is a non-empty stack content of
length at most $k$.
  
  For every $i \geq 0$, we inductively define \((q_i, \ap{i}u_i)\)
  where $u_i$ is a possibly empty prefix of $\ds{i}$ and we show that
  $(q_i, \ap{i}u_i) \by{(v)_{i+1}} (q_{i+1}, \ap{i+1}u_{i+1})$ is a
  transition of $P_k$.

  We define $u_0 = \varepsilon$. Observe that $\ap{0}=\bot$ and $\ds{0}
  = \varepsilon$,
 therefore the initial state of $P_k$ is in the
  desired form. For the definition of $u_{i+1}$, assuming that $u_i$
  is already defined, we consider three cases:
  \begin{compactitem}
  \item The transition $q_iw_i \by{(v)_{i+1}}q_{i+1}w_{i+1}$ pops a symbol $\gamma$. \\
    Then $q_i \, \gamma v \by{(v)_{i+1}} q_{i+1} v$ is a transition of $P$ for every 
    $v$, and so, in particular, $q_i \, \gamma \ap{i+1}u_i \by{(v)_{i+1}} q_{i+1} \, \ap{i+1}u_i$ 
    is a transition of $P$. Moreover,  by the definition of an active prefix, we have 
     $\ap{i} = \gamma \ap{i+1}$ and thus $\ds{i} = \ds{i+1}$ therefore $u_i$ is also a prefix of $\ds{i+1}$.
		By induction hypothesis, $|\ap{i}u_i| \leq k$, which implies $|\ap{i+1}u_i| < k$.
		Setting $u_{i+1}$ to be $u_i$ we thus obtain that \(\vert{\ap{i+1}u_{i+1}}\vert\leq k\) and finally that $(q_i, \ap{i}u_i) \by{(v)_{i+1}} (q_{i+1}, \ap{i+1}u_{i+1})$ is a transition of $P_k$.
  \item The transition $q_iw_i \by{(v)_{i+1}}q_{i+1}w_{i+1}$ pushes a symbol 
    $\gamma$, and $|\ap{i}u_i| < k$. \\
    Then $q_i \ap{i} u_i \by{(v)_{i+1}} q_{i+1} \gamma \ap{i} u_i$ is a transition of $P$.
    Since $|\ap{i}u_i| < k$, we have $|\gamma \ap{i}u_i| \leq  k$, hence 
		\((q_i, \ap{i} u_i) \by{(v)_{i+1}} (q_{i+1}, \gamma \ap{i} u_i)\)
		is also a transition
    of $P_k$. If $\gamma$ is popped later on, then $\ap{i+1} = \gamma \, \ap{i}$; so
    $q_i \, \ap{i}u_i \by{(v)_{i+1}} q_{i+1} \, \ap{i+1}u_i$ is a transition of $P$,
    and we set $u_{i+1}$ to $u_i$.  If $\gamma$ is never popped, then
		$\ap{i+1} = \gamma$, and we let $u_{i+1}$ to be $\ap{i} u_i$. 
		In both cases, we find that \(\vert{\ap{i+1} u_{i+1}}\vert \leq k\) and hence that 
		\((q_i, \ap{i} u_i) \by{(v)_{i+1}} (q_{i+1}, \ap{i+1} u_{i+1})\)
		is a transition of \(P_k\).
  \item The transition $q_iw_i \by{(v)_{i+1}}q_{i+1}w_{i+1}$ pushes a symbol 
    $\gamma$, and $|\ap{i}u_i|= k$. \\
    Then $q_i \ap{i}u_i \by{(v)_{i+1}} q_{i+1} \gamma \ap{i} u_i$ is a transition of $P$.
		Observe that since $|\ap{i}u_i|= k$ we have \(\vert{\gamma \ap{i} u_i}\vert = k+1\).
		First we show that $|u_i| > 0$,  if $|u_i| = 0$, then \(\vert{\ap{i}}\vert=k\), and more importantly
                $\ds{i}$ is the largest proper suffix of all the $(w_j)_{j\geq i}$, and since $w_i$ is a proper suffix of $w_{i+1}$,
                $\ds{i}$ is also the largest proper suffix of all the $(w_j)_{j\geq {i+1}}$, therefore $\gamma \ap{i} = \ap{i+1}$,
                so $|\ap{i+1}| = k+1$
                contradicting the hypothesis that the run is effectively $k$-bounded. 

    We can therefore write $u_i = u_i' \gamma'$. Since $|\gamma\ap{i}u_i|= k+1$, 
		$(q_i, \ap{i} u_i) \by{(v)_{i+1}} (q_{i+1}, (\gamma \ap{i} u_i)_{1..k} )$ is a transition of $P_k$.
    If $\gamma$ is popped later on, then $\ap{i+1} = \gamma \, \ap{i}$ and  
    $u_{i+1} = u_i'$. If $\gamma$ is never popped, then $\ap{i+1} = \gamma$, and
		$u_{i+1} = (\ap{i} u_i)_{1..k-1}$. 
		In both cases we conclude that 
		\(\vert{\ap{i+1} u_{i+1}}\vert \leq k\), hence that 
		\((q_i, \ap{i} u_i) \by{(v)_{i+1}} (q_{i+1}, \ap{i+1} u_{i+1})\)
		is a transition of \(P_k\).
  \end{compactitem}

  Now we show that if $v$ admits a run in $P_k$, then it 
	admits an effectively $k$-bounded run $\rho'$ in $P$. Let 
  $\rho = (q_0, w_0) \by{(v)_1} (q_1, w_1) \by{(v)_2} \cdots$ be a run of 
  $P_k$ for $v$ such that $|w_i|\leq k$ for every $i \geq 0$. 
  We inductively construct $w_0', w_1', \ldots$ such that  
  $\rho' = (q_0, w_0w_0') \by{(v)_1} (q_1, w_1w_1') \by{(v)_2} \cdots$ is a run of $P$
  satisfying the following invariant: 
  \begin{equation}
  \label{eq:inv}
	|w_{i+1}w'_{i+1}| - |w_iw'_i| \geq |w_{i+1}| - |w_i|, \quad\text{for all } i\geq 0\enspace .
  \end{equation}

  We start by defining $w'_0 = \varepsilon$, which trivially satisfies~\eqref{eq:inv}. 
  Assume $(q_0, w_0w_0') \by{(v)_1} \cdots \by{(v)_{i+1}} (q_i, w_iw_i')$ is a run of $P$
  satisfying~\eqref{eq:inv}, and consider the transition 
  $(q_i, w_i) \by{(v)_{i+1}} (q_{i+1}, w_{i+1})$ of $P_k$.
  By the definition of the transitions of $P_k$, there are two possible cases: 
  \begin{compactitem}
		\item $q_i w_i \by{(v)_{i+1}} q_{i+1} w_{i+1}$ is a transition of $P$. \\
  Then $q_i w_iw_i' \by{(v)_{i+1}} q_{i+1} w_{i+1}w'_i$ is also a transition
  of $P$, and we can take $w'_{i+1}$ to be $w'_i$, and \eqref{eq:inv} is satisfied as
$|w_{i+1}w'_{i+1}| - |w_iw'_i| = |w_{i+1}| - |w_i|$
\item $q_i w_i \by{(v)_{i+1}} q_{i+1} w_{i+1} \gamma$ is a transition of $P$. \\
  Then $|w_i|=|w_{i+1}|=k$, and 
  $q_i w_iw_i' \by{(v)_{i+1}} q_{i+1} w_{i+1}\gamma w'_i$ is a transition
  of $P$. So setting $w'_{i+1}$ to $\gamma w'_i$ satisfies~\eqref{eq:inv} as
$|w_{i+1}w'_{i+1}| - |w_iw'_i| = |w_{i+1}| - |w_i|  +1$
  \end{compactitem}

  \noindent The induction is concluded, now we explain the meaning of
  equation \eqref{eq:inv}.  First remark that performing a telescope
  sum, we obtain that for any $i,j > 0$, \(\vert{w_{i+j}w'_{i+j}}\vert
  - \vert{w_iw'_i}\vert \geq \vert{w_{i+j}}\vert -
  \vert{w_i}\vert\). Since $|w_{i+j}| \leq k$ and $|w_i| \geq 1$, we
  obtain $\vert{w_{i+j}w'_{i+j}}\vert - \vert{w_iw'_i}\vert \geq 1-k$.
  Informally it means that the number of symbols in the stack at any
  position after $i$ can't be much smaller (much meaning $k$) than at
  position $i$. Thus, at every position $i$, we never eventually pop
  the $k$ top symbols of the stack at that position, as this would
  yield a configuration after $i$ whose stack would be too small and
  contradict the inequality. Therefore the run $\rho'$ is effectively
  $k$-bounded.
\qed
\end{proof}

\begin{proof}[of Lemma~\ref{lem:distributing}, the Distributing Lemma]
Since $u$ is compatible with $M$, there exists a witness $s \in
L(\mathcal{S})$ such that $s \in (u \parallel \betweenp{w\in M}
w)$. Since $\Sigma_{\mathcal{C}} \cap \Sigma_{\mathcal{D}} =
\emptyset$, we have  $(u \parallel \betweenp{w\in M}
w)=(u \between \betweenp{w\in M}w)$, and so $s \in (u \between \betweenp{w\in M}w)$. Therefore there
exists an interleaving function, i.e. a bijection $\mathcal{I} :
\coprod_{w \in u \oplus M} \dom(w) \to \dom(s)$, that assigns to each
position in each word in $u\oplus M$ a corresponding position in $s$
with the same action. Further, the interleaving function satisfies $i < j \in \dom(w)$ if{}f $\mathcal{I}(w,i) < \mathcal{I}(w,j)$.

For example, if $u = w_d(1)$ and $M = \{ w_1, w_2\}$,
where $w_1 = r_c(1) w_c(2) r_c(1)$ and $w_2 = r_c(2)w_c(1)$, then we can take
$s = w_d(1) r_c(1) w_c(2) r_c(2) w_c(1) r_c(1)$, with 
$I(w_1, 1) = 2$, $I(w_1, 2)=3$, $I(w_1, 3)=5$, $I(w_2,1)=4$, $I(w_2,2)=5$.

We have to show that, given $v \in M$, a run $\rho$ of $C$ accepting a word $v$, and a distribution $R$ of $ \rho$ accepting a multiset $M_R$ of words, then $M \ominus \{v\} \oplus M_R$ is compatible with $u$.

Let $\psi$ be the embedding function of the distribution $R$.
We construct a word $s'$ witnessing that $u$ and $M \ominus \{v\} \oplus M_R$ are compatible. The word $s'$ is a stuttering of $s$, that is,
it is obtained from $s$ by repeating some letters of $s$; since, by
definition of the store, $\S$ is closed under stuttering, we have $s' \in \S$. Let $s = a_1a_2 \ldots$, and let $i$ be a position  of
$s$ such that $\mathcal{I}(v,j)=i$ for some $j \in \dom(v)$ (so, loosely speaking, position $j$ in the interleaving $s$ comes
from the word $v\in M$). Further, let $k$ be the number of runs in $R$ such that some position in them is mapped to position $j$ by the embedding function $\psi$ (intuitively, $k$ is the number of runs in $R$
executing the action at position $j$. Then we replace $a_i$ by $a_i^k$ (that is, by the word $a_i \ldots a_i$ of length $k$). 

For example, if we distribute $w_1$ above to $\{ r_c(1) w_c(2), w_c(2)r_c(1), w_c(2) \}$, then we get $s' = w_d(1) r_c(1) (w_c(2) w_c(2)w_c(2)) r_c(2) w_c(1) r_c(1)$.

We clearly have a one-to-one correspondence between positions in $s'$
and positions in $u \oplus M \ominus v \oplus M_R $.\qed
\end{proof}

\paragraph{Proof of the Boundedness Lemma.}

Before proving Lemma~\ref{lem:flattening} and the Boundedness lemma we give an example of two distributions 
of a finite run that decrease the effective stack height, one of them
being moreover synchronized.

\begin{example}
\label{ex:dist3}
Consider the two distributions $R$ and $S$ of $\rho=r_a r_b r_b r_c r_c r_c$ in
Example~\ref{ex:dist}.
Further assume that the PDM $P$ has one single state $p$, stack symbols $\{\bot, \alpha\}$ such
that the three rules \(r_a,r_b\) and \(r_c\) are given by $r_a \colon p\bot \rightarrow p\alpha\bot$, $r_b \colon p\alpha \rightarrow p\alpha\alpha$, 
and $r_c \colon p\alpha \rightarrow p$.
Figure~\ref{fig:moredist} graphically depicts the stack contents of the configurations of the runs (the control
state is always $p$), and their respective effective stack heights.
\begin{figure}[htb]
$$\begin{array}{r|cccccccccc}
               &    &  0       &  1                &  2                   &  3                         &  4                    &  5             &  6    \\ \hline
\rho \colon    & \; & \; \bot \;  &  \; \alpha \bot \;   & \; \alpha\alpha\bot \;  & \; \alpha\alpha\alpha\bot \;  & \; \alpha\alpha\bot \;   & \; \alpha\bot \;  & \; \bot \; \\
{\it e.s.h.}   &    &  1       &  2                &  3                   &  4                         &  3                    &  2             &  1    \\ \hline
\rho'_1 \colon &    &  \bot    &  \alpha\bot       &                      &                            &                       &                & \bot \\
{\it e.s.h.}   &    &  1       &  2                &                      &                            &                       &                & 1     \\ \hline
\rho'_2 \colon &    &  \bot    &  \alpha\bot       &  \alpha\alpha\bot    &                            &                       &  \alpha\bot    &       \\
{\it e.s.h.}   &    &  1       &  1                &  2                   &                            &                       &  1             &       \\ \hline
\rho'_3 \colon &    &  \bot    &  \alpha\bot       &                      &  \alpha\alpha\bot          &  \alpha\bot           &                &       \\
{\it e.s.h.}   &    &  1       &  1                &                      &  2                         &   1                   &                &       \\ \hline
\end{array}
\qquad 
\begin{array}{r|cccccccccc}
                 &    &  0       &  1                &  2                   &  3                         &  4                    &  5             &  6    \\ \hline
\rho \colon      & \; & \; \bot \;  &  \; \alpha \bot \;   & \; \alpha\alpha\bot \;  & \; \alpha\alpha\alpha\bot \;  & \; \alpha\alpha\bot \;   & \; \alpha\bot \;  & \; \bot \; \\
{\it e.s.h.}     &    &  1       &  2                &  3                   &  4                         &  3                    &  2             &  1    \\ \hline
\sigma'_1 \colon &    &  \bot    &  \alpha\bot       &                      &                            &  \bot                 &                &       \\
{\it e.s.h.}     &    &  1       &  2                &                      &                            &  1                    &                &       \\ \hline
\sigma'_2 \colon &    &  \bot    &  \alpha\bot       &  \alpha\alpha\bot    &                            &  \alpha\bot           &  \bot          &       \\
{\it e.s.h.}     &    &  1       &  2                &  3                   &                            &  2                    &  1             &       \\ \hline
\sigma'_3 \colon &    &  \bot    &  \alpha\bot       &                      &  \alpha\alpha\bot          &                       & \alpha\bot     & \bot  \\
{\it e.s.h.}     &    &  1       &  2                &                      &  3                         &                       & 2              & 1     \\ \hline
\end{array}
$$
\caption{Configurations and effective stack heights of the distributions of Example~\ref{ex:dist}.\label{fig:moredist}}
\end{figure}

We observe that $\rho$ is effectively $4$-bounded. The distribution $R$ is $(Z,2)$-bounded for every $1 \leq Z \leq 6$,
 because the configurations $c_\psi(\rho'_j, i)$ have effective stack height
at most $2$ for every $1 \leq j \leq 3$ and every $1 \leq i \leq 6$. The distribution is
not synchronized. Indeed, the configuration $c(\rho, 6) = p \bot$ has effective
stack height 1, but  $c_\psi(\rho'_2, 6) = p \alpha \bot \neq c(\rho, 6)$. The distribution $S$ is 
$(Z,3)$-bounded for every $1 \leq Z \leq 6$ and synchronized. Remark that in each of $\{\sigma'_i\}_{i=1,2,3}$ at positions \({\it last}_\psi(\sigma'_i,0)\) and \({\it last}_\psi(\sigma'_i,6)\) (the only two positions at which $\rho$ has effective stack height $1$), the stack content is $\bot$ thus effective stack height is 1.
\end{example}

\begin{proof}[of Lemma~\ref{lem:flattening}]
For convenience, when we want to denote that, say, 
in a run \(\rho\) the configurations reached after $(\rho)_{1..i}$ and $(\rho)_{1..j}$
are $c$ and $c'$, we write $\rho = (\rho)_{1..i}\ [c]\ (\rho)_{i+1..j}\ [c']\ (\rho)_{j+1..\infty}$.

We construct a $(Z,N)$-bounded and synchronized distribution $\{\rho_a, \rho_b\}$ of $\rho$.
Let $\alpha_{N+1}\alpha_N \cdots \alpha_1w_0$ be the stack content of $c(\rho,Z)$. 
Define \(\{ \acute{p}_1, \grave{p}_1, \acute{p}_2, \grave{p}_2, \ldots, \acute{p}_N, \grave{p}_N\} \subseteq \dom(\rho) \) such that for each \(i\), \(1\leq i\leq N\) we have
\(c(\rho,\acute{p}_i)\) and \(c(\rho,\grave{p}_i)\)
are the configurations immediately after the symbol $\alpha_i$ in $c(\rho,Z)$
is pushed, respectively popped and such that the stack content of each
configuration between \(\acute{p}_i\) (included) and \(\grave{p}_i\) (excluded)
equals \(w_p \alpha_i \alpha_{i-1} \cdots \alpha_1 w_0\) for some \(w_p \in
\Gamma_C^*\). We get \(c(\rho,\acute{p}_i)	= q_i \alpha_i \alpha_{i-1} \ldots \alpha_0 w_0\) and \(c(\rho,\grave{p}_i)	= q'_i \alpha_{i-1} \ldots \alpha_0 w_0\) for some \(q_i, q'_i\in Q_C\).
Observe that the following holds:\linebreak
\(\acute{p}_1 < \cdots < \acute{p}_{N-1} < \acute{p}_N < Z < \grave{p}_N < \grave{p}_{N-1} < \cdots < \grave{p}_1\).

Since $N = 2 |Q_C|^2 |\Gamma_C|+1$, by the pigeonhole principle we 
find $q,\alpha,q'$ and three indices $1 \leq j_1 < j_2 < j_3 \leq N$ such that 
by letting
$w_1= \alpha_{j_1-1} \cdots \alpha_1$, 
$w_2= \alpha_{j_2-1} \cdots \alpha_{j_1}$ and $w_3 = \alpha_{j_3-1} \cdots \alpha_{j_2}$, we have: 
\begin{multline*}
	\rho =  (\rho)_{1..\acute{p}_{j_1}} 
\ [q\alpha w_1]\ 
(\rho)_{\acute{p}_{j_1+1}..\acute{p}_{j_2}}
\ [q\alpha w_2 w_1]\ 
(\rho)_{\acute{p}_{j_2+1}..\acute{p}_{j_3}}
\ [q\alpha w_3 w_2 w_1] \\
(\rho)_{\acute{p}_{j_3+1}..\grave{p}_{j_3}}
\ [q'w_3 w_2 w_1]\ 
(\rho)_{\grave{p}_{j_3+1}..\grave{p}_{j_2}}
\ [q' w_2 w_1]\ 
(\rho)_{\grave{p}_{j_2+1}..\grave{p}_{j_1}}
\ [q' w_1]\ 
(\rho)_{\grave{p}_{j_1+1}..\infty}\enspace .
\end{multline*}
Now define \(\rho_a\) from \(\rho\) by simultaneously deleting 
\( (\rho)_{\acute{p}_{j_1+1}..\acute{p}_{j_2}}\) and 
\( (\rho)_{\grave{p}_{j_2+1}..\grave{p}_{j_1}}\). We similarly define \(\rho_b\) 
by deleting 
\( (\rho)_{\acute{p}_{j_2+1}..\acute{p}_{j_3}}\) and 
\( (\rho)_{\grave{p}_{j_3+1}..\grave{p}_{j_2}}\).
The following shows that \(\rho_a\) defines a legal run since it is given by
\[ (\rho)_{1..\acute{p}_{j_1}} 
\ [q\alpha w_1]\ 
(\rho)_{\acute{p}_{j_2+1}..\acute{p}_{j_3}}
\ [q\alpha w_3 w_1] \\
(\rho)_{\acute{p}_{j_3+1}..\grave{p}_{j_3}}
\ [q' w_3 w_1]\ 
(\rho)_{\grave{p}_{j_3+1}..\grave{p}_{j_2}}
\ [q' w_1]\ 
(\rho)_{\grave{p}_{j_1+1}..\infty}\enspace .\]
A similar reasoning holds for \(\rho_b\).
We conclude by proving two claims.

\paragraph{\(\{\rho_a,\rho_b\}\) is a distribution of \(\rho\).} 
The embedding function $\psi$ for $\rho_a$ (again, the case of $\rho_b$ is analogous) is given by
\[\psi(\rho_a, i) =
\begin{cases}
	i & \mbox{ for $1 \leq i \leq \acute{p}_{j_1}$} \\
	i + (\acute{p}_{j_2} - \acute{p}_{j_1}) &  \mbox{ for $\acute{p}_{j_1}+1 \leq i \leq \grave{p}_{j_2} -(\acute{p}_{j_2} - \acute{p}_{j_1})$}\\
i + (\grave{p}_{j_1} - \grave{p}_{j_2})+ (\acute{p}_{j_2} - \acute{p}_{j_1}) & \mbox{ for $\grave{p}_{j_2} -(\acute{p}_{j_2} - \acute{p}_{j_1})+1 \leq i$ }
\end{cases}\]

\paragraph{$\{\rho_a, \rho_b\}$ is a $(Z,N)$-bounded and synchronized distribution of $\rho$.}
Since the effective stack height of every configuration of $\rho_a$ (resp. $\rho_b$) up to 
position \({\it last}_{\psi}(\rho_a, Z)\) (resp. \({\it last}_{\psi}(\rho_b, Z)\)) is at most $N$, the 
distribution is $(Z,N)$-bounded. Finally, observe that we have 
$c(\rho, i) = c_\psi(\rho_a, i) = c_\psi(\rho_b, i)$ for every $i \leq \acute{p}_{j_1}$ and every 
$i \geq \grave{p}_{j_1}$. Since all configurations of $\rho$ of effective stack height 1 are in these two areas,
the distribution is synchronized.\qed
\end{proof}

In order to prove the Boundedness Lemma (Lemma~\ref{lem:Z-N-bounded}), we introduce a definition that allows us to ``nest'' distributions (that is, to distribute a run into several runs, and then 
distribute one of these runs again into several runs), while preserving the properties of synchronization and boundedness.  

\begin{definition}
  Let $R, \psi$ be a distribution of $\rho$. Let $\rho' \in R$, and let
  $R', \psi'$ be a distribution of $\rho'$.  The
  {\em composition} of $R,\psi$ and $R',\psi'$ is the distribution 
  $R\ominus\{\rho'\}\oplus R', \psi''$ of $\rho$, where the embedding function $\psi''$ is 
  defined as follows:
  \begin{compactitem}
  \item $\psi''(r, i) = \psi(r,i)$ for every $r \in R\ominus\{\rho'\}$, and
  \item $\psi''(r,i) = \psi(\rho',\psi_{\rho'}(r, i))$ for every $r \in R'$.
  \end{compactitem}
\end{definition}

The following lemma proves that the composition of distributions is not ill-defined, that is, 
that the composition of distributions is indeed a distribution of $\rho$. 

\begin{lemma}
\label{lem:nesting}
  Let $R, \psi$ be a distribution of $\rho$. Let $\rho' \in R$ and let
  $R', \psi'$ be a distribution of $\rho'$. The composition $R\ominus
  \{\rho'\} \oplus R', \psi''$ is a distribution of
  $\rho$. 
\end{lemma}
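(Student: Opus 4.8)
The plan is to verify directly that the pair $(R \ominus \{\rho'\} \oplus R', \psi'')$ satisfies the three defining conditions of a distribution of $\rho$: rule-matching, strict monotonicity, and surjectivity. For the runs $r \in R \ominus \{\rho'\}$ the function $\psi''$ coincides with $\psi$, so all three conditions for these runs are inherited immediately from the fact that $(R, \psi)$ is a distribution of $\rho$. Hence the only work lies with the newly added runs $r \in R'$, where $\psi''(r, i) = \psi(\rho', \psi'(r, i))$, and the whole argument reduces to the slogan ``a composition of embeddings is again an embedding.''

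First I would check rule-matching and monotonicity for $r \in R'$, both of which are routine chains. Since $(R', \psi')$ is a distribution of $\rho'$ we have $(r)_i = (\rho')_{\psi'(r,i)}$, and since $(R, \psi)$ is a distribution of $\rho$ we have $(\rho')_{\psi'(r,i)} = (\rho)_{\psi(\rho', \psi'(r,i))}$; concatenating these two equalities gives $(r)_i = (\rho)_{\psi''(r,i)}$. For monotonicity, $\psi''(r, \cdot)$ is the composite of the strictly monotone map $i \mapsto \psi'(r, i)$ on $\dom(r)$ with the strictly monotone map $j \mapsto \psi(\rho', j)$ on $\dom(\rho')$, hence strictly monotone; so $i < j$ implies $\psi''(r,i) < \psi''(r,j)$.

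The main obstacle, and the only step requiring genuine care, is surjectivity of $\psi''$ onto $\dom(\rho)$ together with the bookkeeping forced by the multiset semantics. Fix $k \in \dom(\rho)$. Since $\psi$ is surjective, $k = \psi(s, i)$ for some element $s$ of the multiset $R$ and some position $i$. If the witnessing copy $s$ is not the distinguished copy of $\rho'$ that was replaced, then $s$ survives in $R \ominus \{\rho'\}$ and $\psi''(s, i) = \psi(s, i) = k$. If instead $s$ is exactly that replaced copy, then $k = \psi(\rho', i)$ with $i \in \dom(\rho')$, and now surjectivity of $\psi'$ onto $\dom(\rho')$ supplies $r \in R'$ and a position $i'$ with $\psi'(r, i') = i$, whence $\psi''(r, i') = \psi(\rho', \psi'(r, i')) = \psi(\rho', i) = k$. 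Thus every position of $\rho$ previously reached through the replaced copy of $\rho'$ is re-covered by $R'$, while every other position is still covered by the run of $R \ominus \{\rho'\}$ that covered it before. The delicate point I would be most careful about is treating $R$ as an indexed family rather than a bare set, so that removing a single copy of $\rho'$---even when $\rho'$ occurs with multiplicity greater than one---does not accidentally strip coverage provided by the remaining copies, nor conflate them with the runs contributed by $R'$ under $\oplus$. With these three checks in place, $(R \ominus \{\rho'\} \oplus R', \psi'')$ is a distribution of $\rho$.
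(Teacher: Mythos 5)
Your proposal is correct and follows essentially the same route as the paper's proof: verify rule-matching and monotonicity for the runs of $R'$ by chaining the corresponding properties of $\psi'$ and $\psi$, and establish surjectivity by a case split on whether the witnessing run for a position $k\in\dom(\rho)$ is the removed copy of $\rho'$ (in which case surjectivity of $\psi'$ re-covers $k$) or not (in which case $\psi''=\psi$ already covers it). Your extra remark on multiset bookkeeping is a sensible precaution but does not alter the argument.
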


\begin{proof}
  We need to show that $\psi''$ satisfies the three properties of an embedding function. 

  \begin{itemize}
  \item $(\rho'')_{i} = (\rho)_{\psi''(\rho'', i)}$.\\
          If $\rho'' \in R\ominus \{\rho'\}$, then,  as $\psi$ is a
		distribution of $\rho$, we have $ (\rho'')_{i} = (\rho)_{\psi(\rho'', i)}$.
           By definition of $\psi''$, we get $\psi''(\rho'', i)= \psi(\rho'',i)$.
           If $\rho'' \in R'$, then, since $\psi'$ is a distribution,
		$(\rho'')_i = (\rho')_{\psi'(\rho'', i)}$.  Since $\psi$
		is a distribution $(\rho')_{j} = (\rho)_{\psi(\rho', j)}$. Taking
		$j= \psi'(\rho'', i)$, we get $(\rho'')_{i} =
		(\rho')_{\psi'(\rho'', i)} = (\rho)_{\psi(\rho',\psi'(\rho'', i))} = (\rho)_{\psi''(\rho'', i)}$.
           So, for every $\rho'' \in R\ominus \{\rho'\}
		\oplus R'$, we finally obtain $(\rho'')_{i} = (\rho)_{\psi''(\rho'', i)}$.
  \item Surjectivity.\\
    If $k\in \dom(\rho)$, we first exploit the surjectivity of
    $\psi$: either there exists $\rho''\in R\ominus \{\rho\}$, and
    some $i \in \dom(\rho'')$ such that $\psi(\rho'', i) = k$ (which
    means that $\psi''(\rho'', i) = k$) or there is
    some $j \in \dom(\rho')$ such that $\psi(\rho', j) = k$. In the
    latter case, we then exploit the fact that $\psi_{\rho'}$ is a
    distribution of $\rho'$, and deduce that there exists $\rho'' \in
    R'$ and $i\in \dom (\rho'')$ such that $\psi'(\rho'', i) = j$;
    hence we have $\psi(\rho', \psi'(\rho'', i)) = k$, and so
    $\psi''(\rho'', i) = k$.

  \item Monotonicity. \\
    For every $\rho''\in R\ominus \{\rho'\}$, from the monotonicity of $\psi$ we
    obtain that $\psi''(\rho'',i) < \psi''(\rho'',j)$ for every $i < j$ . If $\rho'' \in R'$, first we
    derive from the monotonicity of $\psi'$ that $\psi_{\rho'}(\rho'',i) < \psi_{\rho'}(\rho'',j)$ 
    holds for every $i < j$. Then, by monotonicity of $\psi$, we obtain $\psi(\rho', \psi'(\rho'',i)) <
    \psi(\rho', \psi'(\rho'',j))$, and so $\psi''(\rho'',i) < \psi''(\rho'',j)$.\qed
  \end{itemize}
\end{proof}

\begin{lemma}
  Let $\sigma$ be a run of $D$, and let $M\oplus \{\rho\}$ be a multiset of runs of $C$ compatible with 
  $\sigma$. Let $R, \psi$ be a $(Z,N)-$bounded synchronized
  distribution of $\rho$. For every $\rho' \in R$, let $R_{\rho'},
  \psi_{\rho'}$ be a $(\psi(\rho',Z)+1, N)-$bounded synchronized
  distribution of $\rho'$. Then $\bigoplus_{\rho'\in R}R_{\rho'}$ is a
  $(Z+1,N)-$bounded synchronized distribution of $\rho$.
\end{lemma}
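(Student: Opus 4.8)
The plan is to verify, one clause at a time, the three conditions in the definition of a $(Z+1,N)$-bounded synchronized distribution for the composed object $\bigoplus_{\rho'\in R}R_{\rho'}$. Write $\psi''$ for its embedding function, obtained by performing the composition of the preceding definition once for each $\rho'\in R$. That $\bigoplus_{\rho'\in R}R_{\rho'}$ is a \emph{distribution} of $\rho$ needs no new work: replacing a single $\rho'\in R$ by its distribution $R_{\rho'},\psi_{\rho'}$ yields a distribution of $\rho$ by Lemma~\ref{lem:nesting}, and iterating this replacement over the finitely many elements of $R$ consumes all of $R$ and leaves exactly $\bigoplus_{\rho'\in R}R_{\rho'}$, still a distribution of $\rho$, with embedding $\psi''$ satisfying $\psi''(\eta,\cdot)=\psi(\rho',\psi_{\rho'}(\eta,\cdot))$ for $\eta\in R_{\rho'}$.

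The technical core, on which both remaining clauses rest, is a change-of-variables identity. For $\eta\in R_{\rho'}$ and $k\in\dom(\rho)$ I would first establish
\[
\mathit{last}_{\psi''}(\eta,k)=\mathit{last}_{\psi_{\rho'}}\bigl(\eta,\ \mathit{last}_{\psi}(\rho',k)\bigr),
\]
by unfolding $\psi''(\eta,j)=\psi(\rho',\psi_{\rho'}(\eta,j))$ and using that $\psi(\rho',\cdot)$ and $\psi_{\rho'}(\eta,\cdot)$ are strictly increasing: $\psi(\rho',\psi_{\rho'}(\eta,j))\le k$ holds iff $\psi_{\rho'}(\eta,j)\le\mathit{last}_{\psi}(\rho',k)$ iff $j\le\mathit{last}_{\psi_{\rho'}}(\eta,\mathit{last}_{\psi}(\rho',k))$. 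Reading off the configuration reached after that many steps of $\eta$ gives the identity I really want,
\[
c_{\psi''}(\eta,k)=c_{\psi_{\rho'}}\bigl(\eta,\ \mathit{last}_{\psi}(\rho',k)\bigr),
\]
an equality of one and the same configuration of the run $\eta$, so in particular their effective stack heights inside $\eta$ agree.

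For $(Z+1,N)$-boundedness, fix $\eta\in R_{\rho'}$ and $k\le Z+1$. Since $\psi(\rho',\cdot)$ is strictly increasing, at most one position of $\rho'$ is mapped to $Z+1$, so the passage from $Z$ to $Z+1$ advances $\rho'$ by at most one step: $\mathit{last}_{\psi}(\rho',k)\le\mathit{last}_{\psi}(\rho',Z+1)\le\mathit{last}_{\psi}(\rho',Z)+1$. The identity then presents $c_{\psi''}(\eta,k)$ as $c_{\psi_{\rho'}}(\eta,i)$ for $i=\mathit{last}_{\psi}(\rho',k)\le\mathit{last}_{\psi}(\rho',Z)+1$, and the hypothesis that $R_{\rho'}$ is $(\mathit{last}_{\psi}(\rho',Z)+1,N)$-bounded (this is what the statement's bound $\psi(\rho',Z)+1$ denotes, a position of $\rho'$) bounds the effective stack height of this configuration by $N$.

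The synchronization clause is where the definitions have been engineered to fit together, and it is the step I expect to require the most care. Let $i$ be a position with $c(\rho,i)$ of effective stack height $1$, fix $\eta\in R_{\rho'}$, and set $j=\mathit{last}_{\psi}(\rho',i)$. Synchronization of $R,\psi$ supplies \emph{two} facts: $c_{\psi}(\rho',i)=c(\rho,i)$, and this configuration---which is $c(\rho',j)$---has effective stack height $1$ \emph{as a configuration of $\rho'$}. This second fact is precisely the hypothesis under which synchronization of $R_{\rho'},\psi_{\rho'}$ may be invoked at position $j$, giving $c_{\psi_{\rho'}}(\eta,j)=c(\rho',j)$, again of effective stack height $1$. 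Chaining through the identity yields
\[
c_{\psi''}(\eta,i)=c_{\psi_{\rho'}}(\eta,j)=c(\rho',j)=c_{\psi}(\rho',i)=c(\rho,i),
\]
of effective stack height $1$, which is exactly synchronization of the composed distribution. The crux---and the reason the induction behind the Boundedness Lemma goes through---is that the clause ``\emph{and also has effective stack height $1$}'' in the definition of synchronization transports every effective-stack-height-$1$ position down one level, so that the inner distribution's synchronization can be triggered at the matching position; without it the second application could not be made.
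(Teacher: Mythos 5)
Your proposal is correct and follows essentially the same route as the paper: build the composed embedding by iterating the nesting lemma, establish the identity $\mathit{last}_{\psi''}(\eta,k)=\mathit{last}_{\psi_{\rho'}}(\eta,\mathit{last}_{\psi}(\rho',k))$ via monotonicity, derive boundedness from $\mathit{last}_{\psi}(\rho',Z+1)\leq\mathit{last}_{\psi}(\rho',Z)+1$, and chain the two synchronization hypotheses through the effective-stack-height-$1$ clause. You spell out the change-of-variables identity more explicitly than the paper does, but the argument is the same.
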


\begin{proof}
    By repeated application of Lemma~\ref{lem:nesting}, $\rho$ can be 
    distributed to $\bigoplus_{\rho'\in R}R_{\rho'}$. Let $\Psi$ be the
    corresponding embedding function, obtained also by repeated 
    application of Lemma~\ref{lem:nesting}. We have to prove 
    that $\bigoplus_{\rho'\in R}R_{\rho'}, \Psi$ is a synchronized and $(Z+1,N)$-bounded distribution. 

  We first show that $\bigoplus_{\rho'\in R}R_{\rho'}, \Psi$ is synchronized. Assume that the effective
  stack height of $c(\rho, i)$ is 1. Let $\rho'' \in
  \bigoplus_{\rho'\in R}R_{\rho'}$, and let $\rho'$ be the element of $R$ it
  corresponds to.

  We have to show that ${\it last}_\Psi(\rho'', i) = {\it
    last}_{\psi_{\rho'}}(\rho'', {\it last}_\psi(\rho', i))$ (which we
  easily deduce from the fact that ${\it last}_\Psi(\rho'', i) = {\it
    last}_\Psi(\rho'', \psi(\rho', {\it last}_\psi(\rho', i)))$. Since $\psi$ is synchronized, 
    we deduce that $c_\Psi (\rho'', i)$ is
  the same configuration as $c_\psi (\rho', i) = c(\rho', {\it
    last}_{\psi}(\rho', i))$ and has effective stack height 1.
  Since $\psi_{\rho'}$ is synchronized, $c(\rho', {\it
    last}_{\psi}(\rho', i))$ is the same configuration as
  $c_{\psi_{\rho'}} (\rho'', {\it last}_{\psi}(\rho', i)) = c(\rho'',
  {\it last}_{\psi_{\rho'}}(\rho'', {\it last}_\psi(\rho', i)))$.
  
   We now prove  that $\bigoplus_{\rho'\in R}R_{\rho'}, \Psi$ is  $(Z+1,N)$-bounded. 
   Again, we pick $\rho''\in
  \bigoplus_{\rho'\in R}R_{\rho'}$, and let $\rho'$ be the corresponding
  element of $R$.  We have to show that $c_\Psi(\rho'', i)$ has effective stack height at most $N$ for  every $0 \leq i \leq Z+ 1$. Since ${\it last}_{\psi}(\rho', Z+1) \leq {\it
    last}_\psi(\rho', Z) + 1$, by monotonicity we deduce that
  ${\it last}_\Psi(\rho'', Z+1) \leq {\it last}_{\psi_{\rho'}} (\rho'',
  {\it last}_\psi(\rho', Z) + 1)$ and we are done.\qed
\end{proof}

\begin{proof}[of Lemma~\ref{lem:Z-N-bounded}, the Boundedness Lemma]
  The proof is by induction on $Z$. If $Z=1$ %
  then $R_0 = \{\rho\}$, because the first configuration of a run has
  effective height at most 2 (if the first rule was a push, and that
  symbol will be later popped). Since by definition $N\geq 2$, we get that
  $\rho$ is $(1,N)$-bounded.

For the induction step, assume that some distribution $R_{Z-1}$ of
$\rho$ is $(Z-1,N)$-bounded and synchronized, and let $\psi$ be the
embedding function of the distribution. If $R_{Z-1}$ is also
$(Z,N)$-bounded, we take $R_Z = R_{Z-1}$, and we are done. Otherwise,
there is $\rho'\in R_{Z-1}$ such that $c_\psi(\rho',0), \ldots,
c_\psi(\rho', Z-1)$ are effectively $N$-bounded, but $c_\psi(\rho',Z)$ is not.

Informally, this means that the $Z$-th transition of $\rho$ was
distributed to $\rho'$. Let $Z_{\rho'}$ be that position in $\rho'$;
formally $Z = \psi (\rho', Z_{\rho'})$ (if no such $Z_{\rho'}$ exists,
$c_\psi(\rho',Z-1) = c_\psi(\rho, Z)$).
Since $Z_{\rho'}$ is the first position of $\rho'$ whose configuration
is not $N$-bounded, we have that $c(\rho',0), \ldots, c(\rho',
Z_{\rho'}-1)$ are $N$-bounded, but $c(\rho',Z_{\rho'})$ is not.
We apply Lemma~\ref{lem:flattening} to each such $\rho'$ and $Z_{\rho'}$, and get 
$(Z_{\rho'},N)$-bounded and synchronized distributions for those $\rho'$: $R_{\rho'} = \{\rho'_a,
\rho'_b\}$.  Let $R_Z$ be the distribution obtained by replacing in
$R_{Z-1}$ every bad run $\rho'$ by $R_{\rho'}$. Then $R_Z$ is an $(Z,N)$-bounded and synchronized distribution
of $\rho$. \qed
\end{proof}

\begin{example}
We give an example showing that that the bound on the effective stack
height used in the Boundedness Lemma is optimal: for any smaller bound, the lemma is no
longer true.  

We build a PDM with $k_1 + k_2 + 1$ states and with stack alphabet $\{\bot\}
\cup [1,k_3]$, where $k_1, k_2, k_3$ are distinct prime numbers. With the $k_1$ first
states, we build a circuit that pushes the word $(1\ldots k_3)^{k_1}$ onto  the
stack. After that, the PDM leaves this circuit, and enters
another one, consisting of $k_2$ states, that pops
$k_2$ stack symbols. The PDM can only leave this circuit from its first state, and only when $\bot$ is the topmost stack symbol; if and when this condition holds, the PDM moves to the last state, from where
it writes victory in the store. It should be clear that, in order to reach the last state, 
the stack of the PDM must reach a height of at least $(1 + k_1 k_2 k_3)$ symbols.
Therefore, no run reaching the last state can be distributed into runs exhibiting a lower effective stack height. 

We now show that we can further improve this example so as to show
that a single instance of the contributor run in parallel with a
special leader may reach the last state, but at least two instances of
its $N$-restriction are required, for at least one of them reaching
that state.

It is possible for the leader to be informed whenever a contributor
takes a loop (once in each loop the contributor informs the leader
through the store and pauses until it receives acknowledgment through
the store). Then the contributor asks permission before entering in
the last state. If the leader only grants permission if he was
informed exactly a multiple of $k_4$ times of the entrance of some
contributor in some loop, then if there is only one contributor, he
may reach the victory state by growing a $1 + k_1k_2k_3k_4$-sized
stack, which is too large for its $N$-restriction. Therefore a single
instance of the $N$-restricted contributor does not suffice. At least
two are required for an accepting run. \qed
\end{example}

\paragraph{Proof of the Reduction Theorem.} Finally, we give the proof of the Reduction Theorem.

\begin{proof}[of Theorem~\ref{th:redPDM}]  Let $w$ be a word of $L(D)$ and let $M$ a be multiset of words of $C$
  compatible with $w$. Let $s$ be a witness of compatibility, and let $\mathcal{I}$ be the
  corresponding interleaving function (as introduced in the proof of the Distributing lemma). 
  Recall that $s$ is an interleaving of $w$ and $M$, that $I(w, i)$ is the position of $s$ at which we find   the $i$-th letter of $w$, and that $I(v, j)$ is the position of $s$ at which we find the $j$-th letter of $v$,
  for every $v \in M$.
  
  Let  $\sigma$ be an accepting run of $w$, and let $R$ be a multiset of runs accepting each element of  the multiset $M$. The proof follows three steps: 
  \begin{itemize}
  \item[(1)] We find a sequence of positions of $s$ corresponding to actions of the leader, such
  that both the run of the leader and $s$ can be pumped between any
  two such positions. 
  \item[(2)] We take a position far enough in this sequence, say $\mathcal{Z}$, and distribute all the runs of $R$ 
  into a multiset $R_\mathcal{Z}$, such that every run of $R_\mathcal{Z}$ is $N$-bounded up to position $Z$.  We show the existence of two positions, say $X,Y$,
  both smaller than $\mathcal{Z}$, satisfying the following condition. Take the multisets of 
  configurations of the runs of $R_\mathcal{Z}$ at positions $X$ and $Y$, and ``prune'' 
  them by removing their dark suffixes. Let $C_X$ and $C_Y$ be the resulting multisets of pruned configurations. Then $C_X$ and  $C_Y$ have the same support (that is, they contain the same elements, although not necessarily the same number of times).
  \item[(3)]  We show that by adding more runs to $R_\mathcal{Z}$, we can obtain a new distribution for which the multisets
   $C_X$ and $C_Y$ not only have the same support, but are equal. We then show that the runs executed by the  leader and by the contributors
    of this new distribution between positions $X$ and $Y$ can be pumped. This yields a word $w_1 w_2^\omega \in L(D)$ (where $w_2$ is the word executed by the leader between positions $X$ and $Y$) compatible with a multiset of words of the form 
    $\{ v_{11} v_{21}^\omega, \ldots, v_{1n} v_{2n}^\omega \}$ (where $v_{21}, \ldots v_{2n}$ are the runs executed by the contributors between positions $X$ and $Y$), and for which we can find a witness of compatibility of the form $s_1 s_2^\omega$, where $s_1$ is an interleaving of $w_1$ and $\{v_{11}, \ldots, v_{1n}\}$, and $s_2$ is an interleaving of $w_2$ and $\{v_{21}, \ldots, v_{2n}\}$
    \end{itemize}

  \noindent {\it Step (1).} Since $\sigma$ is an infinite run of $D$, by Proposition 
  \ref{prop:esh1often} it contains infinitely many positions of effective stack height $1$. 
   By the pigeonhole principle,  from this sequence 
  of positions we can extract an infinite subsequence of configurations with the same
  control state and topmost stack symbol. Since $\sigma$ is also an accepting run of the B\"uchi automaton  $A$, we can further extract  from this sequence an infinite
  subsequence such that between any two positions  an
  accepting state of $A$ is visited. Let $(b_i)$ denote the image of this last infinite sequence
  by $\mathcal{I}$. That is, $(b_i)$ denotes the infinite sequence of positions of $s$ obtained by the procedure above.

  Now from $(b_i)$ we extract a subsequence $(c_i)$ such that between any
  two elements of it, every run of $R$ reaches a configuration with
  effective stack height 1. More formally, for every $i$ and for every $\rho
  \in R$, there exists $p_{i,\rho} \in \dom(\rho)$ such that
  $c(\rho, p_{\rho,i})$ has effective stack height 1 and $c_i <
  \mathcal{I}(\rho, p_{\rho,i}) < c_{i+1}$.  Since, by Proposition 
  \ref{prop:esh1often}, every run of $R$
  reaches infinitely often such configurations, $(c_i)$ exists.  
  This gives us our sequence of positions in $s$.
  
   \medskip
   
  \noindent {\it Step (2)}. Let $t = |M| 2^{|Q_C||\Gamma_C|^{|Q_C|^2|\Gamma_C| + 1}}+1$, and let  $\mathcal{Z} =
  c_t$ (that is, $\mathcal{Z}$ is the position of the $t$-th element of the sequence $(c_i)$). For each run $\rho \in R$, let $Z_\rho$ denote an element of
  $\dom(\rho)$ such that $\mathcal{I}(\rho, Z_\rho) \geq \mathcal{Z}$. By  Lemma~\ref{lem:Z-N-bounded}, we can distribute each run
  $\rho \in R$ into a $(Z_\rho,N)-$bounded multiset $R_\rho$ (with embedding function $\psi_\rho$).

  For every $i \geq 1$, let $q_{\rho,i}$ be the largest position of $\dom(\rho)$ such that $\mathcal{I}(\rho, q_{\rho,i})
  \leq c_i$, and let $R_\rho(q_{\rho,i}) = \{ c_{\psi_\rho}(\tau, q_{\rho,i}) \mid
  \tau \in R_\rho\}$ be  the multiset of configurations of $R_\rho$ at the
  position corresponding to $q_{\rho,i}$. We denote by
  $\alpha_\rho(i)$ the result of removing the dark suffixes of the configurations of 
  $R_\rho(q_{\rho,i}$). We call the result pruned configurations. 
  
  If $i \leq t$, then, by the definition of
  $R_\rho$, all the active prefixes of $R_\rho(q_{\rho,i})$ are
  $N$-bounded. So the pruned configurations of $\alpha_\rho(i)$ consist
  of a control state and a stack content of length at most $N$, and therefore the number of possible pruned 
  configurations is bounded by $|Q_C||\Gamma_C|^{|Q_C|^2|\Gamma_C| + 1}$. It follows that 
  that the number of possible sets (not multisets!) of pruned configurations is strictly smaller than $t$. 
  So by the pigeonhole principle we find two elements $c_l$ and $c_r$
  of the sequence $(c_i)$, where $l < r \leq t$, such that 
  $\alpha_\rho(l)$ and $\alpha_\rho(r)$ have
  the same support for every $\rho$, (i.e. the sets are equal though the multisets may not
  be). 

\medskip

\noindent {\it Step (3).}  We show how to modify the distributions $R_\rho$ so that
  the multisets $\alpha_\rho(l)$ and $\alpha_\rho(r)$ not only have same support, but are 
  equal.  Observe that, even though the multisets are not equal, they have
  the same cardinality. We introduce new runs in the distribution to ``balance'' these multisets. 
  Denote by $\alpha$ the common support of $\alpha_\rho(l)$ and $\alpha_\rho(r)$.
  For every $a\in \alpha$, we find two runs $\rho^l_a$ and $\rho^r_a$ in
  $R_\rho$ such that $c_{\psi_\rho}(\rho^l_a, q_{\rho,l})$ and $c_{\psi_\rho}(\rho^r_a,
  q_{\rho,r})$ have pruned configuration $a$.

  Now we define a new distribution of $\rho$ to a multiset $R_\rho
  \oplus \{\rho_{a,a'} \mid a,a'\in \alpha \}$ with embedding function $\psi'_\rho$. 
  The run $\rho_{a,a'}$ is
  such that the pruned configuration $c_{\psi'_\rho}(\rho_{a,a'}, q_{\rho,l})$ is $a$ and
  $c_{\psi'_\rho}(\rho_{a,a'}, q_{\rho,r})$ is $a'$ : informally
  $\rho_{a,a'}$ does as $\rho^l_a$ up to position $\psi_\rho(\rho^l_a,
  p_{\rho,l})$, and then as $\rho^r_{a'}$ from $\psi_\rho(\rho^r_{a'}, p_{\rho,l})$.
  Formally, $\psi'_\rho$ is the same as $\psi_\rho$ over each
  $\tau \in R_\rho$, and $\psi'_\rho(\rho_{a,a'}, i) =
  \psi_\rho(\rho^l_a, i)$ when $i \leq \psi_\rho(\rho^l_a, p_{\rho,l})$ and
  $\psi'_\rho(\rho_{a,a'}, i) = \psi_\rho(\rho^r_a, i -
  \psi_\rho(\rho^l_a, p_{\rho,l}) + \psi_\rho(\rho^r_{a'}, p_{\rho,l}) )$ when $i >
  \psi_\rho(\rho^l_a, p_{\rho,l})$. Observe that since $c(\rho, p_{\rho,l})$ has
  effective stack height $1$, it is exactly the same configuration as
  $c_{\psi_\rho}(\rho^l_a,p_{\rho,l})$ and $c_{\psi_\rho}(\rho^r_{a'},p_{\rho,l})$.
  It is also the same configuration as $c_{\psi'_\rho}(\rho_{a,a'}, p_{\rho,l})$. So
  $\rho_{a,a'}$ is a run of $C$, and $\psi'_\rho$ is a synchronized
  $(Z_\rho,N)$-bounded distribution of $\rho$.
  By adding to $R_\rho$ sufficiently many instances of the appropriate
  $\rho_{a,a'}$, we obtain a new distribution $R'_\rho$ of $\rho$,
  such that the two multisets $\alpha_\rho(l)$ and
  $\alpha_\rho(r)$ are the same.

  By the Distribution Lemma, the word $w$ is compatible with the words of the runs 
  $\bigoplus_{\rho \in R}  R'_\rho$. 
  Let $\pi$ be a witness of compatibility. Consider the fragment of $\pi$ between the positions corresponding to $c_l$ and $c_r$ in
  $\pi$. The content of the store is the same at these two
  positions. Also, recall that we chose
  the $(c_i)$ so that the projection of the fragment onto the actions
  of the leader can be repeated infinitely often. Denote by
  $w^\circlearrowleft$ the run of the leader consisting of
  repeating the subrun between positions corresponding to $c_l$ and
  $c_r$.  Finally, for each $R'_\rho$, the multiset of pruned configurations
  of $R'_\rho$ at positions $c_l$ and $c_r$ is
  the same, each run in $R'_\rho$ has effective stack height
  $1$ at $c_l$ and $c_r$, and is $N$-bounded on that
  fragment. This does not mean that for every run $\tau \in R_\rho$
  the pruned configuration will be the same at those
  positions, but that there exists a permutation $\mu$ of $R_\rho$ such
  that the pruned configuration of $\tau$ at position $c_l$ 
  is the same as $\mu(\tau)$ at position $c_r$.  Denoting $\ell_\tau =
  (\tau)_{c_{\psi'_\rho}(\tau, p_{\rho,l})+1..c_{\psi'_\rho}(\tau,
    p_{\rho,r})}$, we get that  the multiset
  $\{(\tau)_{1..c_{\psi'_\rho}(\tau, p_{\rho,l})}
  \ell_\tau\ell_{\mu(\tau)}\ell_{\mu^2(\tau)}\ldots \mid \tau \in
  R'_\rho, \rho \in R\}$ is a multiset of $N$-bounded runs, that is
  compatible with $w^\circlearrowleft$. This concludes the proof.
\qed
\end{proof}

%
%
%
%
% end input /Users/pierreganty/ppsnaw/liveness/appendix.tex
 %

\end{document}